\documentclass[aps,pra,twocolumn,letterpaper,superscriptaddress,floatfix,nofootinbib,longbibliography]{revtex4-1}

\usepackage{times}
\usepackage{algorithm}
\usepackage{algorithmic}

\usepackage[braket,qm]{qcircuit}
\usepackage{amsmath}
\usepackage{amssymb}
\usepackage{bm}
\usepackage{epsfig}
\usepackage{epstopdf}
\usepackage{subfigure}
\usepackage{array}
\usepackage{multirow}
\usepackage{eucal}
\usepackage{makecell}
\usepackage{fnbreak}

\usepackage{amsthm}

\newtheorem{theorem}{Theorem}

\newtheorem{proposition}{Proposition}
\newtheorem{lemma}{Lemma}

\usepackage{relsize}
\usepackage{graphicx}
\usepackage{caption}
\captionsetup{justification=raggedright,singlelinecheck=true}

\usepackage{epsfig}
\usepackage{epstopdf}
\usepackage{subfigure}
 
\usepackage{array}

\DeclareMathOperator{\tr}{tr}

\usepackage{color}
\usepackage{url}
\usepackage{textcomp} 
\usepackage{enumitem}
\setlist{noitemsep,leftmargin=*}
\usepackage{hyperref}
\hypersetup{
colorlinks=true,
urlcolor= blue,
citecolor=blue,
linkcolor= blue,
}

\begin{document}
\title{Variational quantum Gibbs state preparation with a truncated Taylor series}
\author{Youle Wang}
\affiliation{Institute for Quantum Computing, Baidu Research, Beijing 100193, China}
\affiliation{Centre for Quantum Software and Information, University of Technology Sydney, NSW 2007, Australia}
\author{Guangxi Li}
\affiliation{Institute for Quantum Computing, Baidu Research, Beijing 100193, China}
\affiliation{Centre for Quantum Software and Information, University of Technology Sydney, NSW 2007, Australia}
\author{Xin Wang}
\thanks{wangxin73@baidu.com} 
\affiliation{Institute for Quantum Computing, Baidu Research, Beijing 100193, China}

\begin{abstract}
The preparation of quantum Gibbs state is an essential part of quantum computation and has wide-ranging applications in various areas, including quantum simulation, quantum optimization, and quantum machine learning. In this paper, we propose variational hybrid quantum-classical algorithms for quantum Gibbs state preparation. We first utilize a truncated Taylor series to evaluate the free energy and choose the truncated free energy as the loss function. Our protocol then trains the parameterized quantum circuits to learn the desired quantum Gibbs state. Notably, this algorithm can be implemented on near-term quantum computers equipped with parameterized quantum circuits. By performing numerical experiments, we show that shallow parameterized circuits with only one additional qubit can be trained to prepare the Ising chain and spin chain Gibbs states with a fidelity higher than 95\%. In particular, for the Ising chain model, we find that a simplified circuit ansatz with only one parameter and one additional qubit can be trained to realize a 99\% fidelity in Gibbs state preparation at inverse temperatures larger than $2$.
\end{abstract}

\maketitle

 \section{Introduction}
\label{SEC:INTRODUCTION}
Quantum state preparation is an integral part of quantum computation. In the near future, quantum computers will become quantum state preparation factories for various tasks. One central task in quantum state preparation is to prepare the quantum Gibbs or thermal state of a given Hamiltonian. The reason is that quantum Gibbs states not only can be used to study many-body physics but also can be applied to quantum simulation~\cite{childs2018toward}, quantum machine learning~\cite{kieferova2017tomography,biamonte2017quantum}, and quantum optimization~\cite{somma2008quantum}. In particular, sampling from well-prepared Gibbs states of Hamiltonians can be applied in solving combinatorial optimization problems~\cite{somma2008quantum}, solving semi-definite programs~\cite{Brandao2017}, and training quantum Boltzmann machines~\cite{Kieferova2016}. 

The preparation of the desired initial state is a difficult problem in general. It is well-known that finding ground states of Hamiltonians is QMA-hard~\cite{Watrous2012a}. Gibbs state preparation at arbitrary low temperature could be as hard as finding the ground state~\cite{aharonov2013quantum}. There are various methods to do this using classical~\cite{terhal2000problem,poulin2009sampling,temme2011quantum,kastoryano2016quantum,brandao2019finite} or quantum computers.
Especially, these methods use quantum computing techniques, including quantum rejection sampling~\cite{Wiebe2016}, quantum walk~\cite{Yung2012}, dynamics simulation~\cite{Kaplan2017,Riera2012,Motta2019}, dimension reduction~\cite{Bilgin2010a}. Although in the worst case, the costs of these methods require could be exponential in expectation, they can be efficient when some conditions are satisfied. Such as the ratios between the partition functions of the infinite temperature states and the Gibbs state is at most polynomially large~\cite{poulin2009sampling}, and the gap of the Markov chain that describes the quantum walk is polynomially small~\cite{Yung2012,van2020quantum}.
However, these methods require complex quantum subroutines such as quantum phase estimation, which are costly and hard to implement on near term quantum computers.

The main goal of our work is to prepare quantum Gibbs states using near term quantum devices. In contrast to the previous works, we aim to reduce the task's quantum resources, such as qubit and gate counts, circuit size. 
To this end, one feasible scheme is to take advantage of variational quantum algorithms (VQAs)~\cite{mcclean2016theory}, as VQAs have recently been gaining popularity with applications in many areas \cite{xu2021variational,bravo2020variational,huang2019near,larose2019variational,Cerezoa,Wang2021,peruzzo2014variational,nakanishi2019subspace,Wang2020a,Sharma2020,Li2021,Chen2020a}. VQAs are a class of hybrid quantum-classical algorithms that involve optimizing a loss function depending on the circuit's parameters, where the loss function is evaluated on quantum devices, and parameters adjustments are outsourced to the classical devices. This strategy succeeds in reducing the resources by using shallow quantum circuits. Several methods based on VQAs have already been proposed~\cite{Islam2015,verdon2019quantum,Wu2019b,martyn2019product,mcardle2019variational,Yuan2019,Motta2019,Chowdhury2020} to prepare Gibbs states. For instance, Wu and Hsieh~\cite{Wu2019b} proposed a variational approach by using R{\' e}nyi entropy estimation~\cite{Islam2015} and thermofield double states, Yuan et al.~\cite{Yuan2019} discussed the application of imaginary time evolution to Gibbs state using parameterized circuits, and Chowdhury et al.~\cite{Chowdhury2020} proposed entropy estimation method using tools such as quantum amplitude estimation and linear combination of unitaries.

{In this paper, we propose variational quantum algorithms to tackle the problem of Gibbs state preparation on near-term quantum hardware. In general, a Gibbs state of the system's Hamiltonian $H$ at the inverse temperature $\beta=1/k_BT$ is defined as $\rho_{G}=e^{-\beta H}/\tr(e^{-\beta H})$, where $k_B$ is the Boltzmann constant, and $T$ is the system's temperature. To prepare $\rho_G$, we utilize the variational principle of the system's free energy~\cite{reif2009fundamentals}, which states that the Gibbs state minimizes the free energy. Let the system's state be $\rho$, then the free energy is given by $\mathcal{F}(\rho)=\tr(H\rho)-\beta^{-1}S(\rho)$, where $S(\rho)$ denotes the von Neumann entropy. Then $\rho_G$ is the global minimum of the functional $\mathcal{F}(\rho)$. In our approach, we use parameterized quantum circuits (PQCs) to prepare quantum states. Particularly, the PQCs are assumed to admit enough expressiveness to prepare the desired Gibbs states or a state very close to it. Throughout the paper, the state prepared via PQC is denoted by $\rho(\bm\theta)$. With these notations, the variational principle could be formulated as 
\begin{align}
\rho_{G}\approx\text{argmin}_{\bm\theta} \mathcal{F}(\rho(\bm\theta)).
\end{align}
Our goal is to find the optimal parameters that minimize the free energy, given a suitable PQC. Viewed from this point, we could prepare a state very close to the desired state $\rho_G$.
}

{The main challenge of minimizing the free energy comes from the entropy estimation, which is well-known to be difficult \cite{Gheorghiu}. To overcome this challenge, we truncate the Taylor series of the entropy at order $K$ and set the truncated free energy as the loss function in our variational quantum algorithms. Explicitly, the loss function is represented as a linear combination of system's energy and higher-order state overlaps. The energy and state overlaps are efficiently estimated via certain quantum gadgets, which could be performed on near-term quantum hardware. Hence, our algorithms for Gibbs state preparation are expected to be practical.
}

{We also study the effectiveness of our variational quantum algorithms in Gibbs state preparation. Specifically speaking, we theoretically show that fidelity between the output state of our approach and the target Gibbs state increases as truncation order $K$ increases. To demonstrate the efficacy of our approach, we focus on the loss function with truncation order $2$. We numerically find out that using loss function with order two suffices to prepare high-fidelity Gibbs states of several many-body Hamiltonians at selected low temperatures. 
As shown in the numerical results, the final fidelity could reach at least 95\% for the Ising chain and $XY$ spin-$\frac{1}{2}$ chain models, respectively. Particularly, the preparation of Ising chain Gibbs states achieves a fidelity above 99\%. Besides, we show that our approach applies to prepare Gibbs states at high temperatures. Overall, the numerical results imply that we could prepare high-fidelity Gibbs states of certain many-body Hamiltonians using low-order loss functions, strengthening the feasibility of our approach on near-term quantum hardware.}

\textbf{Organization. }
Our paper proceeds as follows. In Sec.~\ref{SEC:RESULT}, we describe our main results of this paper, including the variational algorithm for Gibbs state preparation, error analysis, and the analytical gradient of the loss function for optimization. In Sec.~\ref{SEC:EXPERIMENT}, we numerically demonstrate the effectiveness of our algorithm, especially at lower and higher temperatures, via targeting the Gibbs state preparation of Ising model and XY spin-1/2 model. Then we conclude in Sec.~\ref{SEC:DISCUSSION}. Finally, detailed proofs are supplemented in the Appendix.

\section{Main results}
\label{SEC:RESULT}
\subsection{Hybrid quantum-classical algorithm for Gibbs state preparation}

The Gibbs state for a quantum Hamiltonian $H$ is defined as the density operator
\begin{align}
\rho = \frac{\exp(-\beta H)}{\tr(\exp(-\beta H))}.
\end{align}
We recall that the free energy of a system described by a density operator $\omega$ is given by 
\begin{align}\label{eq:free_energy}
\mathcal{F}(\omega) = \tr(\omega H) - \beta^{-1}S(\omega),
\end{align}
where $\beta=(k_BT)^{-1}$ is the inverse temperature of the system, $k_B$ is the Boltzmann’s constant, and $S(\rho):=-\tr\rho\ln\rho$ is the von Neumann entropy of $\rho$. As the Gibbs state minimizes the free energy of the Hamiltonian $H$, it holds that
\begin{align}
\rho = \text{argmin}_{\omega} \mathcal F (\omega).
\end{align} 
Therefore, if we could generate parameterized quantum states $\omega(\bm\theta)$ and find a way to measure or estimate the loss function $\tr(\omega H) - k_BT \cdot S(\omega)$, then one could design variational algorithms via the optimization over $\bm\theta$~\cite{Wu2019b,Chowdhury2020}.

However, determining the von Neumann entropy of a quantum state on near-term quantum devices is quite challenging. Existing quantum algorithms for estimating certain entropic quantities~\cite{Subramanian2019a,Li2019b,Chowdhury2020} are not suitable for our purpose of using near-term quantum devices since they either have an explicit polynomial dependence on the Hilbert-space dimension of quantum system or require certain oracle assumption.  Recently, the authors of \cite{Chowdhury2020} propose a procedure for estimating the von Neumann entropy and free energy that uses tools such as quantum amplitude estimation~\cite{Brassard2002}, density matrix exponentiation~\cite{Low2016}, and techniques for approximating operators by a linear combination of unitaries. 

To design a suitable and efficient variational quantum algorithm for near-term quantum devices, we design a loss function in the similar spirit of free energy. This loss function utilizes the truncated Taylor series and, in particular, can be efficiently estimated on near-term quantum devices. {Specifically, the core of our idea is to truncate the von Neumann entropy as a linear combination of higher-order state overlaps, i.e., $\tr(\rho^{k})$, and estimate each $\tr(\rho^{k})$ via quantum gadgets, e.g., Swap test, respectively. Next, we employ a classical optimizer to minimize the loss function via tuning the parameters and then use the optimized circuit to prepare the target Gibbs state. Note that the subroutine of loss evaluation occurs on the quantum devices, and the procedure of optimization is entirely classical. Then, classical optimization tools such as gradient-based or gradient-free methods can be employed in the optimization loop.}

In general, the variational quantum circuit contains a series of parameterized single-qubit Pauli rotation operators and CNOT/CZ gates alternately \cite{Benedetti2019a}. Here we follow this circuit pattern and mainly use Pauli-Y rotation operators and CNOT gates. For the optimization part, a variety of approaches have been proposed to optimize such variational quantum circuits, including Nelder-Mead \cite{Guerreschi2017,Verdon2017}, Monte-Carlo~\cite{Wecker2016}, quasi-Newton~\cite{Guerreschi2017}, gradient descent \cite{Wang2018f}, and Bayesian methods. Here we choose ADAM  \cite{kingma2014adam} as our gradient-based optimizer in the numerical experiments.

{For simplicity, the overall hybrid algorithm with the two-order loss function is given in Algorithm~\ref{alg:rftl}, and a picture for illustration is depicted in Fig.~\ref{fig:Gibbs prepare}. Besides, the variational quantum algorithm for general truncation order $K$ is deferred to the Appendix~\ref{appendix:detals}.}

\begin{figure}[t]
\begin{algorithm}[H]
\caption{Variational quantum Gibbs state preparation with truncation order $2$}
\label{alg:rftl}
\begin{algorithmic}[1]
\STATE Choose the ansatz of unitary $U(\bm\theta)$, tolerance $\varepsilon$, truncation order $2$, and initial parameters of $\bm\theta$;
\STATE Compute coefficients $C_{0}$, $C_{1}$, $C_{2}$ according to Eq.~\eqref{EQ:13}. 
\STATE Prepare the initial states $\ket{00}$ in registers $AB$ and apply  $U(\bm\theta)$.
\STATE Measure and compute  $\tr (H\rho_{B_1})$ and compute the loss function $L_1 =  \tr (H\rho_{B_1} )$;
\STATE Measure and compute $\tr (\rho_{B_2}\rho_{B_3})$ via Destructive Swap Test and compute the loss function $L_2 =-\beta^{-1}C_{1}\tr(\rho_{B_{2}}\rho_{B_{3}})$; 
\STATE Measure and compute $\tr(\rho_{B_{4}}...\rho_{B_{6}})$ via higher-order state overlap estimation and compute the loss function $L_{3}=-\beta^{-1}C_{2}\tr(\rho_{B_{4}}...\rho_{B_{6}})$.
\STATE Perform optimization of $\mathcal{F}_2(\bm\theta) = \sum_{k=1}^{3}L_k-\beta^{-1}C_{0}$ and update parameters of $\bm\theta$;
\STATE Repeat 3-7 until the loss function $\mathcal F_2(\bm\theta)$ converges with tolerance $\varepsilon$;
\STATE Output the state $\rho^{out} = \tr_A U(\bm\theta)\op{00}{00}_{AB}U(\bm\theta)^\dagger$.
\end{algorithmic}
\end{algorithm}
\end{figure}

{Clearly, Algorithm 1 can be efficiently implemented on near-term quantum devices since the estimation of loss function $\mathcal F_2$ only requires measuring the expected value ${\left\langle H \right\rangle _\rho }$, the purity or the state overlap $\tr(\rho^2)$, and the higher-order state overlap $\tr(\rho^3)$}.
To compute the state overlap, one approach is to utilize the well-known Swap test~\cite{Buhrman2001,Gottesman2001a}, which has a simple physical implementation in quantum optics~\cite{Ekert2002,Garcia-Escartin2013} and can be experimentally implemented on near-term quantum hardware~\cite{Islam2015,Patel2016,Linke2018}. For instance, we could use a variant version of the Swap test (see Fig.~\ref{figure:destructive swap}), named destructive Swap
test~\cite{Garcia-Escartin2013,cincio2018learning}. Compared to the general Swap test, destructive Swap test is more practical on near term devices, since it is ancilla-free and costs less circuit depth and the number of the gates. {Using the circuit in Fig.~\ref{figure:destructive swap}, the quantity $\tr(\rho^2)$ is expected to be estimated on near-term quantum hardware.}

\begin{figure}[htb]
    \centering
    \includegraphics[width=0.25\textwidth]{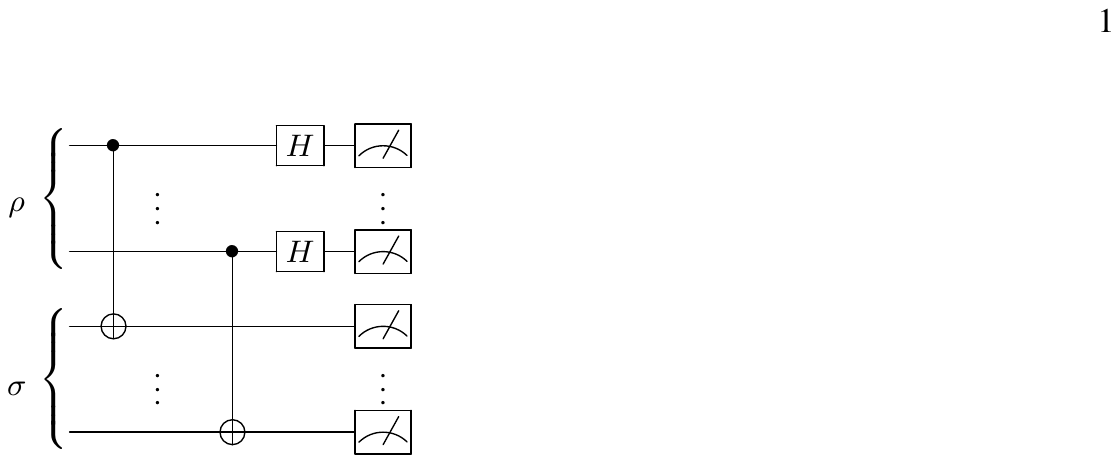}
    \caption{Quantum circuit for implementing Destructive Swap Test. In the circuit, two states $\rho$ and $\sigma$ are prepared at different registers. Then CNOT and Hadamard gates are performed as shown. The state overlap can be estimated via post-processing.}
    \label{figure:destructive swap}
\end{figure}

{Regarding higher order state overlaps, e.g., $\tr(\rho^{3})$, there are methods using the similar circuit to that of the destructive Swap test~\cite{subacsi2019entanglement}, whose depth is only 2. For more information, please refer to~\cite{subacsi2019entanglement}.  We can also use the qubit-efficient circuit proposed by Yirka et al. \cite{Yirka2020} to compute $\tr(\rho^k)$, for larger $k$. Particularly, the circuit only uses a constant number of qubits, where the key is that some subset of qubits can be reset in the course of quantum computation. For convenience, we call this method the higher-order state overlap estimation and provide an example for computing $\tr(\rho^3)$ in Fig.~\ref{figure:qubit-efficient swap}. We also refer interested readers to \cite{Yirka2020} for more details on qubit-efficient algorithms for computing $\tr(\rho^k)$. Hence, using these qubit-efficient quantum circuits will significantly circumvent our approach's resource requirements for computing $\tr(\rho^k)$ for $k\geq 3$ and enable our approach to be implementable on NISQ computers. }

\begin{figure}[htb]
    \centering
    \includegraphics[width=0.45\textwidth]{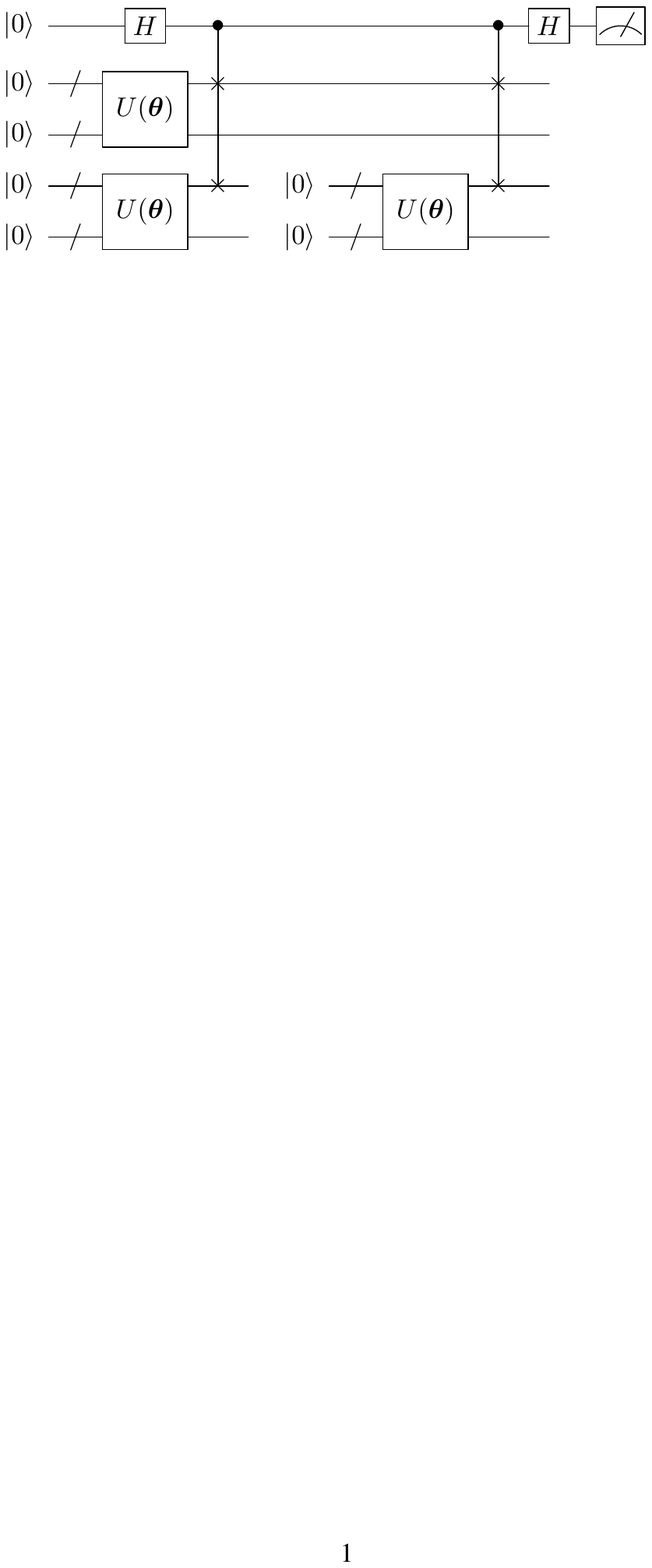}
    \caption{Quantum circuit for computing $\tr(\rho^3)$. In the circuit, the $U(\bm\theta)$ denotes the state preparation circuit, and $H$ denotes the Hadamard gate. Four registers are used to prepare states by $U(\bm\theta)$, and one ancillary qubit is used to perform the controlled swap operator. The qubit reset occurs on the bottom two registers, where the break in the wire means the reset operation. Notably, the state on the bottom two registers are first implemented with a circuit $U(\bm\theta)$ and controlled swap operator and then reset to state $\ket{0}$. Again, $U(\bm\theta)$ and controlled swap operator are performed on the bottom registers. Finally, $\tr(\rho^3)$ can be obtained via post-processing the measurement results.}
    \label{figure:qubit-efficient swap}
\end{figure}

 \begin{figure*}[htb]
\centering
\includegraphics[width=15cm]{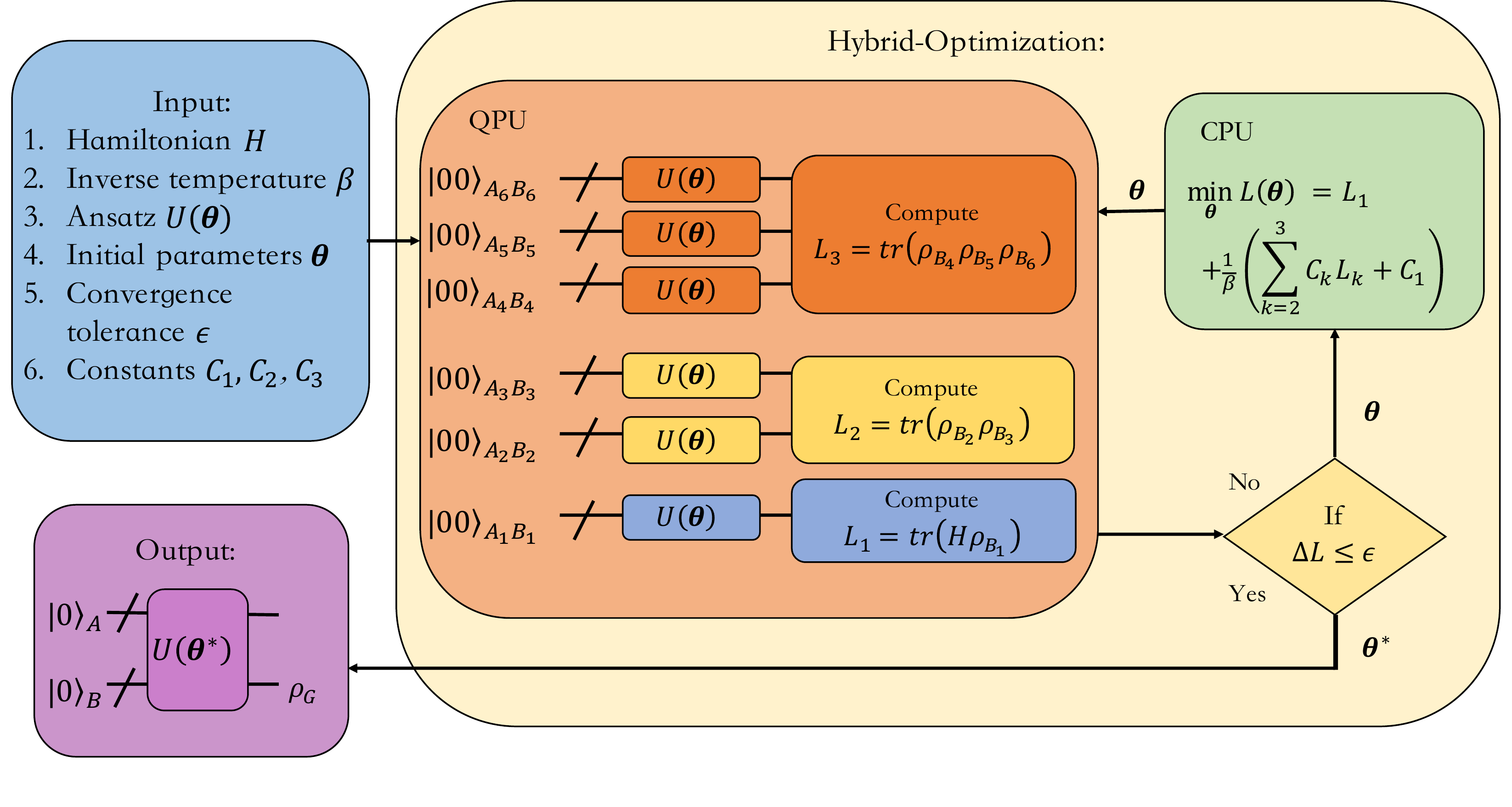}
\caption{Schematic representation of the variational quantum Gibbs state preparation with truncation order $2$. First, we prepare the Hamiltonian $H$ and inverse temperature $\beta$ and then send them into the Hybrid Optimization. Second, we choose an ansatz and employ it to evaluate the loss function $L_{1},L_{2},L_{3}$ on quantum devices. Then we calculate the difference $\Delta\mathcal{F}_{2}(\bm\theta)$ by using $L_{1},L_{2},L_{3}$. Next, if the condition $\Delta \mathcal{F}_{2}\leq\epsilon$ is not satisfied, then we perform classical optimization to update parameters $\bm\theta$ of the ansatz and return to the loss evaluation. Otherwise, we output the current parameters $\bm\theta^*$, which could be used to prepare Gibbs state $\rho_{G}$ via $U(\bm\theta)$. Here in the quantum device, registers $A_2,B_2, A_3, B_3$ are used to evaluate $\tr{(\rho_{B_2}\rho_{B_3})}$ and registers $A_4, B_4,\ldots, A_6,B_6$ are used to evaluate $\tr{(\rho_{B_4}\rho_{B_5}\rho_{B_6})}$. }
\label{fig:Gibbs prepare} 
\end{figure*}

\subsection{Performance analysis of our method}
In this section, we analyze the performance of our variational algorithm. {Specifically, we first define a formal optimization problem that aims to find the global minimum of the truncated free energy. Second, we show that the prepared state has a higher overlap with the desired Gibbs state, using higher-order loss functions in our approach.}

\paragraph{Loss function}
In our algorithm (Algorithm~\ref{alg:rftl:K}), the $K$-truncated-free energy $\mathcal{F}_{K}$ is taken as the loss function. {To find the global minimum of the loss function $\mathcal{F}_{K}$, we update the parameters $\bm\theta$ till the termination condition is reached. We denote the obtained optimal parameters by $\bm\theta_{opt}$. Then we can prepare an approximation operator for the Gibbs state by performing the parameterized circuit $U(\bm\theta_{opt})$}.

{The loss function $\mathcal{F}_{K}$ is obtained by truncating the Taylor series of von Neumann entropy at order $K$. Specifically, let $K\in\mathbb{Z}_{+}$ be a positive integer, and denote the truncated entropy by $S_K(\rho)$}. Let $H$ denote the Hamiltonian and $\beta>0$ be the inverse temperature, then the loss function $\mathcal{F}_{K}(\bm\theta)$ is defined as 
\begin{align}
\mathcal{F}_{K}(\bm\theta)=\tr(H\rho(\bm\theta))-\beta^{-1}S_{K}(\rho(\bm\theta)).
\label{EQ:40}
\end{align} 
Here the free energy is determined by parameters $\bm\theta$, since the state $\rho(\bm\theta)$ is prepared by the PQC $U(\bm\theta)$. Particularly, the $K$-truncated entropy $S_{K}(\rho)$ is given as follows,
\begin{align}
S_{K}(\rho)=\sum_{k=1}^{K}\frac{(-1)^{k}}{k}\tr\left((\rho-I)^{k}\rho\right)=\sum_{j=0}^{K}C_{j}\tr(\rho^{j+1}). \label{EQ:12}
\end{align}
In Eq.~\eqref{EQ:12}, coefficients $C_{j}$'s of $S_{K}(\rho)$ are given in the form below:
\begin{align}
C_{0}=\sum_{k=1}^{K}\frac{1}{k},\,
C_{j}=\sum_{k=j}^{K}\binom{k}{j}\frac{(-1)^{j}}{k}\,,
C_{K}=\frac{(-1)^{K}}{K}.    \label{EQ:13}
\end{align}
where $j=1,...,K-1$. 

Recall that our goal is to find parameters $\bm\theta_{opt}$ that minimize the value of the loss function $\mathcal{F}_{K}(\bm\theta)$, i.e., $\bm\theta_{\bm opt}=\text{argmin}_{\bm\theta} \mathcal{F}_{K}(\bm\theta)$. In practice, the optimization loop only terminates if some condition given previously is reached. Therefore, one cannot obtain the true optimum, but some parameters $\bm\theta_{0}$ that will approximately minimize the loss function in the sense that
\begin{equation}
\mathcal{F}_{K}(\bm\theta_{0})\leq\min_{\bm\theta}\mathcal{F}_{K}(\bm\theta)+\epsilon,
\label{EQ:5}
\end{equation}
where $\epsilon$ is the error tolerance in the optimization problem. {Especially, we assume the used PQC $U(\bm\theta)$ endows sufficient expressiveness to prepare the desired Gibbs state or a state very close to it. Hence, the state $\rho(\bm\theta_0)$ could be taken to approximate the desired Gibbs state.}

\paragraph{Error analysis}
Since the loss function $\mathcal{F}_{K}(\bm\theta)$ is a truncated version of the free energy, the solution $\bm\theta_{0}$ to the optimization problem in Eq.~\eqref{EQ:5} {is not exactly the quantum Gibbs state $\rho_{G}$}. However, the obtained state $\rho(\bm\theta_{0})$ is not far away from the Gibbs state $\rho_{G}$. {Here, we use the fidelity to characterize the distance between two states.} In the following, we show the validity of this claim by providing a lower bound on the fidelity between $\rho(\bm\theta_{0})$ and $\rho_{G}$ in Theorem~\ref{TH:2}. {In particular, the result in Theorem~\ref{TH:2} implies that the larger truncation order $K$ is, the state $\rho(\bm\theta_0)$ is closer to the state $\rho_G$.}


\begin{theorem}
\label{TH:2}
Given a positive integer $K$ and error tolerance $\epsilon>0$, let $\beta>0$ be the inverse temperature, and $\bm\theta_{0}$ be the solution to the optimization in Eq.~\eqref{EQ:5}. Assume the rank of the output state $\rho(\bm\theta_{0})$ is $r$, then the fidelity between the state $\rho(\bm\theta_{0})$ and the Gibbs state $\rho_{G}$ is lower bounded as follows
{
\begin{equation}
F(\rho(\bm\theta_{0}),\rho_{G})\geq 1-\sqrt{2{\left(\beta\epsilon+\frac{2r}{K+1}\left(1-\Delta\right)^{K+1}\right)}},
\label{EQ:fidelity}
\end{equation}}
where $\Delta\in(0,e^{-1})$ is a constant determined by $K$. 
\end{theorem}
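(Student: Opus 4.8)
The plan is to reduce the fidelity bound to a bound on the quantum relative entropy $S(\rho(\bm\theta_{0})\|\rho_{G})$, and then to control that relative entropy through the truncation error of the entropy evaluated on \emph{low-rank} states. The starting observation is the variational identity $\mathcal{F}(\omega)-\mathcal{F}(\rho_{G})=\beta^{-1}S(\omega\|\rho_{G})$, valid for every state $\omega$. I would prove it by inserting $\ln\rho_{G}=-\beta H-(\ln Z)I$ with $Z=\tr(e^{-\beta H})$ into $S(\omega\|\rho_{G})=\tr(\omega\ln\omega)-\tr(\omega\ln\rho_{G})$ and comparing with Eq.~\eqref{eq:free_energy}; nonnegativity of the relative entropy then re-derives the fact that $\rho_{G}$ minimizes $\mathcal{F}$.

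Next I would quantify the truncation gap. Using $\ln\rho=-\sum_{k\geq1}(I-\rho)^{k}/k$ (valid since $\rho$ has spectrum in $[0,1]$), the difference is $S(\rho)-S_{K}(\rho)=\sum_{k>K}\frac{1}{k}\tr\!\big(\rho(I-\rho)^{k}\big)\geq0$, which is a sum over the nonzero eigenvalues of the single-variable function $g(\lambda)=\lambda\sum_{k>K}(1-\lambda)^{k}/k$. The crux is to maximize $g$ on $(0,1)$: its stationarity condition $\sum_{k>K}(1-\lambda)^{k}/k=(1-\lambda)^{K}$ forces $g(\Delta)=\Delta(1-\Delta)^{K}$ at the maximizer $\Delta$, and a careful study of this transcendental equation is needed to show $\Delta\in(0,e^{-1})$ together with $\Delta\leq 2/(K+3)$, whence $g(\Delta)\leq\frac{2}{K+1}(1-\Delta)^{K+1}$. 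Summing over the $r$ nonzero eigenvalues gives the uniform rank bound $S(\rho)-S_{K}(\rho)\leq\frac{2r}{K+1}(1-\Delta)^{K+1}=:\eta_{K}$ for \emph{every} rank-$r$ state. I expect this maximization to be the main obstacle: locating $\Delta$ and extracting the clean prefactor requires estimating the residual of the logarithm series, not a routine computation.

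I would then combine these ingredients with the optimization guarantee. Since $S_{K}\leq S$ pointwise we have $\mathcal{F}\leq\mathcal{F}_{K}$, and since every state the ansatz prepares has rank at most $r$, the uniform bound yields $\min_{\bm\theta}\mathcal{F}_{K}(\bm\theta)\leq\min_{\bm\theta}\mathcal{F}(\rho(\bm\theta))+\beta^{-1}\eta_{K}$. The expressiveness assumption supplies $\min_{\bm\theta}\mathcal{F}(\rho(\bm\theta))\leq\mathcal{F}(\rho_{G})$, while Eq.~\eqref{EQ:5} supplies $\mathcal{F}_{K}(\rho(\bm\theta_{0}))\leq\min_{\bm\theta}\mathcal{F}_{K}(\bm\theta)+\epsilon$. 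Chaining them,
\[
\mathcal{F}(\rho(\bm\theta_{0}))\leq\mathcal{F}_{K}(\rho(\bm\theta_{0}))\leq\min_{\bm\theta}\mathcal{F}_{K}(\bm\theta)+\epsilon\leq\mathcal{F}(\rho_{G})+\beta^{-1}\eta_{K}+\epsilon,
\]
so the identity of the first paragraph gives $S(\rho(\bm\theta_{0})\|\rho_{G})\leq\eta_{K}+\beta\epsilon=\beta\epsilon+\frac{2r}{K+1}(1-\Delta)^{K+1}$. The essential point is that the truncation error entering the bound is that of the low-rank output state, not of the generically full-rank $\rho_{G}$; this is exactly why the uniform rank-$r$ bound must be inserted \emph{inside} the minimization over $\bm\theta$.

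Finally I would convert the relative-entropy estimate into the fidelity statement by chaining the quantum Pinsker inequality $\|\rho-\sigma\|_{1}\leq\sqrt{2\,S(\rho\|\sigma)}$ with the Fuchs--van de Graaf inequality $1-F(\rho,\sigma)\leq\frac{1}{2}\|\rho-\sigma\|_{1}$. Applied to $\rho(\bm\theta_{0})$ and $\rho_{G}$ and absorbing the (loose) numerical constants, this produces $1-F(\rho(\bm\theta_{0}),\rho_{G})\leq\sqrt{2\big(\beta\epsilon+\frac{2r}{K+1}(1-\Delta)^{K+1}\big)}$, which is precisely Eq.~\eqref{EQ:fidelity}; monotone decay of $(1-\Delta)^{K+1}$ in $K$ then gives the stated improvement with increasing truncation order.
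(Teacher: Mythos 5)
Your proposal is correct, and it shares the paper's overall skeleton (the free-energy identity $\mathcal{F}(\omega)-\mathcal{F}(\rho_G)=\beta^{-1}S(\omega\|\rho_G)$, then Pinsker plus Fuchs--van de Graaf to convert a relative-entropy bound into the fidelity bound of Lemma~\ref{PR:2}), but both technical steps are carried out by genuinely different means. For the truncation error, the paper's Lemma~\ref{PR:3} fixes a threshold $\Delta\in(0,e^{-1})$ satisfying $-\Delta\ln\Delta<\frac{1}{K+1}(1-\Delta)^{K+1}$, splits the spectrum at $\Delta$, telescopes the tail for eigenvalues above the threshold, and uses monotonicity of $-x\ln x$ below it; you instead maximize the per-eigenvalue tail $g(\lambda)=\lambda\sum_{k>K}(1-\lambda)^k/k$ and let $\Delta$ be the maximizer. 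The step you flag as the main obstacle does go through: stationarity gives $\sum_{k>K}(1-\Delta)^k/k=(1-\Delta)^K$, and bounding that tail by $\frac{(1-\Delta)^{K+1}}{(K+1)\Delta}$ forces $\Delta\leq\frac{1}{K+2}$, which lies below $e^{-1}$ and below your $2/(K+3)$, so in fact $g(\Delta)\leq\frac{1}{K+1}(1-\Delta)^{K+1}$ --- a factor of $2$ better than you claim. The more interesting divergence is the bookkeeping. The paper charges the truncation error $\delta_0$ twice: once for $|\mathcal{F}(\rho(\bm\theta_0))-\mathcal{F}_K(\rho(\bm\theta_0))|$, and once inside Lemma~\ref{le:loss_function_inequality} via $\mathcal{F}_K(\rho_G)-\mathcal{F}(\rho_G)\leq\beta^{-1}\delta_0$; that second use evaluates the truncation error at the Gibbs state, which is generically full rank, so the rank-$r$ bound of Lemma~\ref{PR:3} does not literally apply there --- a subtle tension with the theorem's statement that $r$ is the rank of the output state. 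Your rearrangement --- using the one-sided $\mathcal{F}\leq\mathcal{F}_K$ on the output state (no error term) and inserting the uniform rank-$r$ bound \emph{inside} $\min_{\bm\theta}$, so the truncation error is only ever charged on low-rank ansatz states --- avoids evaluating the truncation error at $\rho_G$ entirely, and recovers the stated constant $2r$ from your looser per-eigenvalue bound instead of from double counting. The price is a slightly stronger reading of the expressiveness assumption: you need $\min_{\bm\theta}\mathcal{F}(\rho(\bm\theta))\leq\mathcal{F}(\rho_G)$, which by the variational principle means the ansatz essentially attains $\rho_G$, whereas the paper only invokes $\min_{\bm\theta}\mathcal{F}_K(\bm\theta)\leq\mathcal{F}_K(\rho_G)$; both are instances of the same informal assumption the paper makes, so this does not affect correctness.
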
 

Theorem~\ref{TH:2} implies that we can regard the output state $\rho(\bm\theta_{0})$ as an approximation for the Gibbs state for a given error tolerance in the optimization problem and a truncation order $K$. And the quantity in the right-hand-side of Eq.~\eqref{EQ:fidelity} quantifies the extent that $\rho(\bm\theta_{0})$ approximates $\rho_{G}$. {Also, we can easily see that the quantity becomes larger when the order $K$ increases.}


Next, we prove Theorem~\ref{TH:2} by building a connection between the relative entropy and the fidelity and then deriving an upper bound on the truncation error. 
\begin{lemma}
Given quantum states $\rho$ and $\sigma$ and a constant $\delta>0$, suppose that the relative entropy $S(\rho\|\sigma)$ is less than $\delta$, i.e., $S(\rho\|\sigma)\leq\delta$. Then the fidelity between $\rho$ and $\sigma$ is lower bounded. To be specific, $F(\rho,\sigma)\geq 1-\sqrt{2\delta}$.
\label{PR:2}
\end{lemma}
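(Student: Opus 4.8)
The plan is to prove the lemma $F(\rho,\sigma)\geq 1-\sqrt{2\delta}$ whenever $S(\rho\|\sigma)\leq\delta$ by chaining together two standard inequalities: a Pinsker-type bound relating relative entropy to trace distance, and a Fuchs--van de Graaf bound relating trace distance to fidelity. I would first recall the quantum Pinsker inequality, which states that the relative entropy dominates the squared trace distance, namely
\begin{equation}
S(\rho\|\sigma)\geq \frac{1}{2}\,\|\rho-\sigma\|_1^2,
\label{eq:pinsker}
\end{equation}
where $\|\cdot\|_1$ denotes the trace norm. (Depending on the normalization convention the paper uses for the trace distance, the constant may be $2$ rather than $\tfrac12$; I would fix conventions at the outset so the final constant comes out as $\sqrt{2\delta}$.) Combining \eqref{eq:pinsker} with the hypothesis $S(\rho\|\sigma)\leq\delta$ immediately yields $\|\rho-\sigma\|_1\leq\sqrt{2\delta}$, converting the entropic hypothesis into a bound on trace distance.

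The second step is to pass from trace distance to fidelity via the Fuchs--van de Graaf inequalities. The relevant direction is the lower bound
\begin{equation}
F(\rho,\sigma)\geq 1-\frac{1}{2}\|\rho-\sigma\|_1,
\label{eq:fvdg}
\end{equation}
again modulo the fidelity convention (some authors define $F$ as the root fidelity $\tr\sqrt{\sqrt{\rho}\,\sigma\sqrt{\rho}}$, others as its square; the stated inequality uses the root-fidelity form). Substituting the trace-distance bound from the first step into \eqref{eq:fvdg} gives
\begin{equation}
F(\rho,\sigma)\geq 1-\frac{1}{2}\sqrt{2\delta}=1-\sqrt{\frac{\delta}{2}},
\end{equation}
and I would then reconcile the factor to match the paper's claimed $1-\sqrt{2\delta}$ by tracking the exact normalization choices in both \eqref{eq:pinsker} and \eqref{eq:fvdg}.

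The genuine content here is entirely in which versions of these two inequalities are invoked and how their constants are normalized, since the statement $1-\sqrt{2\delta}$ must emerge after composing them. I expect the main obstacle to be purely bookkeeping rather than conceptual: making sure the trace-distance definition (with or without the $\tfrac12$ prefactor) and the fidelity definition (root fidelity versus squared) are used consistently so that the composed constant is exactly $\sqrt{2}$ and not $\sqrt{1/2}$ or $1/\sqrt{2}$. If the paper uses the normalized trace distance $T(\rho,\sigma)=\tfrac12\|\rho-\sigma\|_1$, then Pinsker reads $S(\rho\|\sigma)\geq 2T^2$ and Fuchs--van de Graaf reads $F\geq 1-T$, which compose to give $T\leq\sqrt{\delta/2}$ and hence $F\geq 1-\sqrt{\delta/2}$; to recover $1-\sqrt{2\delta}$ one would instead work with the unnormalized norm throughout. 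I would therefore state explicitly at the start which normalization is in force, cite the standard references for both inequalities, and present the two-line composition as the proof, leaving no nontrivial estimation to perform.
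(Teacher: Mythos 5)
Your proposal follows exactly the same route as the paper's proof: Pinsker's inequality composed with the Fuchs--van de Graaf relation between trace distance and fidelity. The loose end you leave — ``reconciling the factor'' — is not actually a gap, but you should recognize why: with the consistent standard conventions you quote (normalized trace distance $T=\tfrac12\|\rho-\sigma\|_1$, Pinsker as $S(\rho\|\sigma)\geq 2T^2$, and $F\geq 1-T$), the composition yields $F(\rho,\sigma)\geq 1-\sqrt{\delta/2}$, which is \emph{stronger} than the claimed bound, so the lemma follows immediately from $\sqrt{\delta/2}\leq\sqrt{2\delta}$; no convention-chasing is needed. The paper lands on exactly $1-\sqrt{2\delta}$ because it chains $F\geq 1-D$ with Pinsker written as $D\leq\sqrt{2S(\rho\|\sigma)}$ (i.e., the trace norm without the $\tfrac12$), which is a valid but deliberately looser composition; your sharper constant is perfectly acceptable and in fact subsumes the paper's statement.
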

\begin{proof}
Recall the relationship between the trace distance and the fidelity $D(\rho,\sigma)\geq 1-F(\rho,\sigma)$, and Pinsker's inequality $D(\rho,\sigma)\leq \sqrt{2S(\rho\|\sigma)}$, then we have the following inequality,
\begin{equation}
F(\rho,\sigma)\geq 1-D(\rho,\sigma)\geq 1-\sqrt{2S(\rho\|\sigma)}\geq 1-\sqrt{2\delta}.
\end{equation}
\end{proof}

Lemma~\ref{PR:2} states that, if one wants to lower bound the fidelity $F(\rho(\bm\theta_{0}),\rho_{G})$ between the obtained state $\rho(\bm\theta_{0})$ and the Gibbs state $\rho_{G}$, then it suffices to upper bound the relative entropy $S(\rho(\bm\theta_{0})\|\rho_{G})$ between them. Thus we proceed to give an upper bound of the relative entropy. 

Let $\delta_{0}$ be the truncation error of $S_{K}(\rho)$, then the definition of the free energy allows to bound the difference between the free energy and its truncated version, i.e., {$|\mathcal{F}_{K}(\rho)-\mathcal{F}(\rho)|\leq\beta^{-1}\delta_{0}$}. Recall the well-known free energy equation, $\mathcal{F}(\rho)=\mathcal{F}(\rho_{G})+\beta^{-1}S(\rho\|\rho_{G})$, which indicates that, for arbitrary density $\rho$, the free energy $\mathcal{F}(\rho)$ can be represented as a linear combination of the free energy $\mathcal{F}(\rho_{G})$ of the quantum Gibbs state $\rho_G$ and the relative entropy between $\rho$ and $\rho_{G}$. Therefore, an upper bound on the relative entropy $S(\rho(\bm\theta_{0})\|\rho_{G})$ is readily derived as follows:
{
\begin{align}
S(\rho(\bm\theta_{0})\|\rho_{G})=&\beta|\mathcal{F}(\rho(\bm\theta_{0}))-\mathcal{F}(\rho_{G})|\\
=&\beta|\mathcal{F}(\rho(\bm\theta_{0}))-\mathcal{F}_{K}(\rho(\bm\theta_{0}))\nonumber\\
&+\mathcal{F}_{K}(\rho(\bm\theta_{0}))-\mathcal{F}(\rho_{G})| \\
=&\delta_{0}+\beta|\mathcal{F}_{K}(\rho(\bm\theta_{0}))-\mathcal{F}(\rho_{G})| \\
\leq& 2\delta_{0}+\beta\epsilon,\label{EQ:6}
\end{align}
where the inequality in Eq.~\eqref{EQ:6} is due to the fact that $\mathcal{F}(\rho_{G})\leq\mathcal{F}_{K}(\rho(\bm\theta_{0}))\leq\mathcal{F}(\rho_{G})+\beta^{-1}\delta_{0}+\epsilon$, which is stated in Lemma~\ref{le:loss_function_inequality} and proved in Appendix~\ref{supp:le:loss_function_inequality}. In particular, to obtain the result in Lemma~\ref{le:loss_function_inequality}, we assume the used PQC is expressive enough to prepare the target Gibbs state or a state very close to it.}

\begin{lemma}\label{le:loss_function_inequality}
Given the error tolerance $\epsilon>0$ in the optimization problem in Eq.~\eqref{EQ:5}, suppose the truncation error of the free energy is $\beta^{-1}\delta_{0}>0$. Then we can derive a relation between $\mathcal{F}(\rho_{G})$ and $\mathcal{F}_{K}(\rho(\bm\theta_{0}))$ below, where $\bm\theta_{0}$ is the output of the optimization and $\rho_{G}$ is the Gibbs state.
\begin{align}
    \mathcal{F}(\rho_{G})\leq\mathcal{F}_{K}(\bm\theta_{0})\leq\mathcal{F}(\rho_{G})+\beta^{-1}\delta_{0}+\epsilon.
    \label{eq:proof_loss_function}
\end{align}
\end{lemma}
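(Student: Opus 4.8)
The plan is to prove the two inequalities separately, since each rests on a different structural fact. For the lower bound $\mathcal{F}(\rho_{G})\leq\mathcal{F}_{K}(\bm\theta_{0})$, the key observation is that truncating the entropy series can only \emph{underestimate} the true entropy. Rewriting the summand of Eq.~\eqref{EQ:12} via $(\rho-I)^{k}=(-1)^{k}(I-\rho)^{k}$ gives $S_{K}(\rho)=\sum_{k=1}^{K}\frac{1}{k}\tr\!\left(\rho(I-\rho)^{k}\right)$, while the full von Neumann entropy is the same sum continued to $k=\infty$. Because the eigenvalues of $\rho$ lie in $[0,1]$, both $\rho$ and $I-\rho$ are positive semidefinite and commute, so each term $\tr\!\left(\rho(I-\rho)^{k}\right)\geq 0$; hence the discarded tail $S(\rho)-S_{K}(\rho)\geq 0$, i.e. $S_{K}(\rho)\leq S(\rho)$, and therefore $\mathcal{F}_{K}(\rho)\geq\mathcal{F}(\rho)$ for \emph{every} state $\rho$. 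Applying this at $\rho(\bm\theta_{0})$ and then invoking the variational principle $\mathcal{F}(\rho(\bm\theta_{0}))\geq\mathcal{F}(\rho_{G})$ yields the chain $\mathcal{F}_{K}(\bm\theta_{0})\geq\mathcal{F}(\rho(\bm\theta_{0}))\geq\mathcal{F}(\rho_{G})$, which is the desired lower bound.

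For the upper bound I would concatenate three facts. First, the optimization stopping criterion in Eq.~\eqref{EQ:5} gives $\mathcal{F}_{K}(\bm\theta_{0})\leq\min_{\bm\theta}\mathcal{F}_{K}(\bm\theta)+\epsilon$. Second, the expressiveness assumption supplies a parameter setting $\bm\theta^{*}$ with $\rho(\bm\theta^{*})=\rho_{G}$ (or arbitrarily close to it), so that $\min_{\bm\theta}\mathcal{F}_{K}(\bm\theta)\leq\mathcal{F}_{K}(\rho_{G})$. Third, the truncation-error estimate $\mathcal{F}_{K}(\rho)-\mathcal{F}(\rho)\leq\beta^{-1}\delta_{0}$, evaluated at $\rho_{G}$, gives $\mathcal{F}_{K}(\rho_{G})\leq\mathcal{F}(\rho_{G})+\beta^{-1}\delta_{0}$. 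Combining the three inequalities in order delivers $\mathcal{F}_{K}(\bm\theta_{0})\leq\mathcal{F}(\rho_{G})+\beta^{-1}\delta_{0}+\epsilon$, completing the proof.

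The step I expect to be the genuine sticking point is not a computation but the modeling assumption embedded in the second fact above: the argument needs the chosen PQC to actually realize $\rho_{G}$ (or a state whose truncated free energy is no larger). I would therefore state the expressiveness hypothesis explicitly and remark that if the ansatz only attains a state $\eta$-close to $\rho_{G}$ in free energy, an extra additive term $\eta$ must be carried through into the right-hand side. I would also record the entropy-tail non-negativity from the first paragraph as a standalone observation, since beyond powering the lower bound it shows that the truncation error is one-sided ($S_{K}\leq S$), confirming that $\delta_{0}$ can be taken nonnegative and that the two bounds are mutually consistent.
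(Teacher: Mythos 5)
Your proposal is correct and follows essentially the same route as the paper: the lower bound comes from the non-negativity of the discarded entropy tail (so $\mathcal{F}_{K}\geq\mathcal{F}$ pointwise) combined with the variational principle $\mathcal{F}(\rho(\bm\theta_{0}))\geq\mathcal{F}(\rho_{G})$, and the upper bound from chaining the $\epsilon$-optimality of $\bm\theta_{0}$, the expressiveness assumption $\min_{\bm\theta}\mathcal{F}_{K}(\bm\theta)\leq\mathcal{F}_{K}(\rho_{G})$, and the truncation bound $\mathcal{F}_{K}(\rho_{G})\leq\mathcal{F}(\rho_{G})+\beta^{-1}\delta_{0}$. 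Your explicit spelling-out of the variational-principle step and of the expressiveness hypothesis matches (indeed slightly sharpens) what the paper states, so there is nothing to correct.
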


{Now, given the truncation order $K$, we derive an upper bound on the difference between $S_K(\rho)$ and $S(\rho)$ in the following lemma and defer the proof to Appendix~\ref{sec:lemma3}.}
\begin{lemma}
\label{PR:3}
Given a quantum state $\rho$, assume the truncation order of the truncated von Neumann entropy is $K\in\mathbb{Z}_{+}$, and choose $\Delta\in(0,e^{-1})$ such that $-\Delta\ln(\Delta)<\frac{1}{K+1}(1-\Delta)^{K+1}$. Let $\delta_{0}$ denote the truncation error, i.e., the difference between the von Neumann entropy $S(\rho)$ and its $K$-truncated entropy $S_K(\rho)$. Then the truncated error $\delta_{0}$ is upper bounded in the sense that 
\begin{align}
\delta_{0}\leq\frac{r}{K+1}\left(1-\Delta\right)^{K+1},
\label{EQ:38}
\end{align}
where $r$ denotes the rank of density operator.
\end{lemma}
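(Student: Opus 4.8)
The plan is to reduce the operator statement to a scalar estimate on the eigenvalues of $\rho$ and then control the tail of a logarithmic series term by term. First I would pass to the eigenbasis of $\rho$, writing $\lambda_1,\dots,\lambda_r\in(0,1]$ for its nonzero eigenvalues, and rewrite both entropies through the expansion $-x\ln x = x\sum_{k\ge 1}\frac{(1-x)^k}{k}$, which is valid for $x\in(0,1]$. Since $\tfrac{(-1)^k}{k}\tr((\rho-I)^k\rho)=\sum_i \lambda_i\,\tfrac{(1-\lambda_i)^k}{k}$, this gives $S(\rho)=\sum_i \lambda_i\sum_{k\ge 1}\frac{(1-\lambda_i)^k}{k}$ while $S_K(\rho)$ is the same double sum truncated at $k=K$. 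Hence the truncation error is the manifestly non-negative tail $\delta_0=\sum_i \lambda_i\sum_{k\ge K+1}\frac{(1-\lambda_i)^k}{k}$, and zero eigenvalues contribute nothing to either entropy, which is exactly why only the rank $r$ enters the bound.

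Next I would bound the per-eigenvalue tail. Replacing $\frac1k$ by $\frac{1}{K+1}$ for every $k\ge K+1$ and summing the resulting geometric series in $(1-\lambda_i)$ yields the clean estimate $\lambda_i\sum_{k\ge K+1}\frac{(1-\lambda_i)^k}{k}\le \frac{(1-\lambda_i)^{K+1}}{K+1}$. Summed over the $r$ nonzero eigenvalues this already gives $\delta_0\le \frac{r}{K+1}$, but not the sharper factor $(1-\Delta)^{K+1}$, because eigenvalues near the bottom of the spectrum make $(1-\lambda_i)^{K+1}$ arbitrarily close to $1$ and the geometric bound becomes useless there.

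Handling these small eigenvalues is the crux, and I expect it to be the main obstacle; it is also precisely where the threshold $\Delta$ enters. I would split the spectrum at $\Delta$. For $\lambda_i>\Delta$ the geometric bound already gives $\frac{(1-\lambda_i)^{K+1}}{K+1}<\frac{(1-\Delta)^{K+1}}{K+1}$. For $\lambda_i\le\Delta$ I would instead dominate the tail by the full positive series, $\lambda_i\sum_{k\ge K+1}\frac{(1-\lambda_i)^k}{k}\le -\lambda_i\ln\lambda_i$, and then exploit that $x\mapsto-x\ln x$ is increasing on $(0,e^{-1})$ together with $\Delta\in(0,e^{-1})$ to conclude $-\lambda_i\ln\lambda_i\le-\Delta\ln\Delta$. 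The hypothesis $-\Delta\ln\Delta<\frac{1}{K+1}(1-\Delta)^{K+1}$ is tailored exactly so that the small-eigenvalue contribution is again bounded by $\frac{(1-\Delta)^{K+1}}{K+1}$.

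Finally I would combine the two cases: every nonzero eigenvalue contributes at most $\frac{(1-\Delta)^{K+1}}{K+1}$ to $\delta_0$, and there are $r$ of them, which delivers $\delta_0\le\frac{r}{K+1}(1-\Delta)^{K+1}$. Beyond this bookkeeping, the only points needing care will be justifying the term-by-term series manipulations (absolute convergence of $\sum_k \frac{(1-\lambda_i)^k}{k}$ on $(0,1]$ and the vanishing of the $\lambda_i=0$ contribution in both $S$ and $S_K$) and verifying the monotonicity interval of $-x\ln x$, so that the comparison $-\lambda_i\ln\lambda_i\le-\Delta\ln\Delta$ is legitimate for all $\lambda_i\le\Delta$.
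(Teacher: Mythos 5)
Your proposal is correct and follows essentially the same route as the paper's proof: spectral decomposition of $\rho$, splitting the spectrum at $\Delta$, a per-eigenvalue tail bound of $\frac{(1-\lambda_j)^{K+1}}{K+1}$ for eigenvalues above $\Delta$, and monotonicity of $-x\ln x$ on $(0,e^{-1})$ combined with the hypothesis $-\Delta\ln\Delta<\frac{1}{K+1}(1-\Delta)^{K+1}$ for the small eigenvalues, then summing over the $r$ nonzero eigenvalues. The only cosmetic difference is how the tail bound is obtained: you replace $1/k$ by $1/(K+1)$ and sum a geometric series, whereas the paper substitutes $\lambda_j=1-(1-\lambda_j)$ and telescopes; both give the identical estimate.
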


Replacing $\delta_{0}$ in Eq.~\eqref{EQ:6} with its upper bound in Eq.~\eqref{EQ:38} immediately leads to a bound on the relative entropy $S(\rho(\bm\theta_{0})\|\rho_{G})$. Taking this bound into Lemma~\ref{PR:2}, a lower bound on the fidelity $F(\rho(\bm\theta),\rho_{G})$ is then derived, which is exactly the one in Eq.~\eqref{EQ:fidelity}. Now, the proof of Theorem~\ref{TH:2} is completed.

\subsection{Optimization via the gradient-based method}
Finding optimal parameters $\bm\theta_{opt}$ is a major part of our variational algorithm. Both gradient-based and gradient-free methods could be used to do the optimization. Here, we provide analytical details on the gradient-based approach, and we refer to \cite{Benedetti2019a} for more information on the optimization subroutines in variational quantum algorithms.

{The choice of truncation order $K$ could be various and depends on the required accuracy for Gibbs state preparation. Here we mainly focus on the two-order loss function $\mathcal{F}_{2}(\bm\theta)$ as higher fidelity could be expected by the result in Theorem~\ref{TH:2}. We numerically show the validity of this choice in the next section. Particularly, the numerical results show we can use a two-order loss function to prepare high-fidelity Gibbs states of several many-body Hamiltonians.}

Now, we show that $\mathcal{F}_{2}(\bm\theta)$ is convex, which indicates that the gradient-based method could efficiently minimize it. We also derive the analytical expressions for its gradients and show that these analytical gradients could also be evaluated efficiently on NISQ devices. {Especially, the same circuit for estimating $\mathcal{F}_{2}(\rho)$ can also be used to calculate their gradients.}

\paragraph{Convexity of $2$-truncated free energy}
Recall the definition of $K$-truncated entropy $S_{K}(\rho)$ in Eq.~\eqref{EQ:12}, and, in this section, we take $K=2$. Given truncation order $2$, the loss function $\mathcal{F}_{2}(\rho)$ is defined in the following form:
\begin{equation}
\mathcal{F}_{2}(\bm\theta)=\tr(H\rho(\bm\theta))+\beta^{-1}\left(2\tr(\rho(\bm\theta)^{2})-\frac{1}{2}\tr(\rho(\bm\theta)^{3})-\frac{3}{2}\right).
\label{EQ:35}
\end{equation}

Notice that the functional $\tr(H\rho)$ is linear for a given Hamiltonian $H$ and $\beta>0$, therefor the convexity of loss function $\mathcal{F}_{2}$ is determined by the convexity of the functional $g(\rho)=2\tr(\rho^{2})-\frac{1}{2}\tr(\rho^{3})-\frac{3}{2}$. Hence, to prove the convexity of $\mathcal{F}_{2}$, we only need to show the convexity of functional $g(\rho)$.

\begin{lemma}
\label{PR:1}
The functional $g(\rho)=2\tr(\rho^{2})-\frac{1}{2}\tr(\rho^{3})-\frac{3}{2}$ is convex, where $\rho$ is a density operator.
\end{lemma}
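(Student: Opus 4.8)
The plan is to prove convexity of $g(\rho)=2\tr(\rho^{2})-\frac{1}{2}\tr(\rho^{3})-\frac{3}{2}$ by reducing it to a statement about a scalar function of the eigenvalues and then invoking a standard result on trace functions. Since the constant $-\frac{3}{2}$ is irrelevant to convexity, it suffices to show that $\rho\mapsto 2\tr(\rho^{2})-\frac{1}{2}\tr(\rho^{3})$ is convex on the set of density operators. First I would write this as $\tr(f(\rho))$ where $f(x)=2x^{2}-\frac{1}{2}x^{3}$ is a scalar function applied to $\rho$ via the functional calculus, i.e. $\tr(f(\rho))=\sum_i f(\lambda_i)$ with $\lambda_i$ the eigenvalues of $\rho$.

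The key tool is the theorem (due to essentially Davis, and standard in matrix analysis) that for a continuous scalar function $f$, the induced trace functional $\rho\mapsto\tr(f(\rho))$ is convex on the space of Hermitian operators whenever $f$ itself is convex on the relevant interval. Therefore the central step reduces to checking convexity of the single-variable function $f(x)=2x^{2}-\frac{1}{2}x^{3}$ on the interval of admissible eigenvalues, namely $x\in[0,1]$ since $\rho$ is a density operator with $0\le\lambda_i\le 1$. Computing $f''(x)=4-3x$, I observe that $f''(x)\geq 4-3=1>0$ for all $x\in[0,1]$, so $f$ is strictly convex on $[0,1]$.

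Assembling the pieces: $f$ is convex on $[0,1]$, every density operator has spectrum inside $[0,1]$, and the trace-functional lifting preserves convexity, so $g(\rho)+\frac{3}{2}=\tr(f(\rho))$ is convex, whence $g$ is convex. One subtlety I would want to handle carefully is the restriction of the eigenvalue range: the scalar convexity of $f$ on all of $\mathbb{R}$ fails (indeed $f''(x)<0$ for $x>4/3$), so it is essential that the domain of the functional is restricted to density operators, guaranteeing eigenvalues in $[0,1]$. This is precisely where the hypothesis ``$\rho$ is a density operator'' is used, and it is the main point one must not overlook.

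The main obstacle, I expect, is justifying the lifting step rigorously rather than the elementary $f''\ge 0$ computation. Concretely, one must ensure the convexity domain is correct: convexity of $\tr(f(\rho))$ should be verified for the convex combination $\rho=\lambda\rho_{1}+(1-\lambda)\rho_{2}$ of two density operators, which is again a density operator with spectrum in $[0,1]$, so the scalar bound applies uniformly. If I wanted to avoid citing the general trace-convexity theorem, I could instead prove the inequality $\tr(f(\lambda\rho_1+(1-\lambda)\rho_2))\le\lambda\tr(f(\rho_1))+(1-\lambda)\tr(f(\rho_2))$ directly by combining Peierls' inequality or the variational characterization of $\tr(f(\cdot))$ with scalar convexity; but citing the standard result that $\tr(f(\cdot))$ is operator-convex-free yet trace-convex for scalar-convex $f$ is the cleanest route and is what I would present.
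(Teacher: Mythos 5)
Your proof is correct and follows essentially the same route as the paper: both reduce the claim to the scalar function $f(x)=2x^{2}-\tfrac{1}{2}x^{3}$ (up to the irrelevant constant), invoke the standard trace-convexity theorem for scalar-convex functions (the paper cites Theorem 2.10 of Carlen's lecture notes), and verify $f''(x)=4-3x\geq 1>0$ on $[0,1]$. Your explicit remark that $f$ fails to be convex on all of $\mathbb{R}$, so the restriction to density operators (spectrum in $[0,1]$) is essential, is a point the paper states only implicitly, but it does not change the argument.
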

\begin{proof}
According to Theorem 2.10 of \cite{Carlen2010a}, the functional $\tr(f(\rho))$ is convex if the associated function $f:\mathbb{R}\to\mathbb{R}$ is convex. In the scenario, where $\mathcal{F}_{2}(\rho)$ is given in Eq.~\eqref{EQ:35}, the associated function of $g$ is defined as $f(x)=2x^{2}-\frac{1}{2}x^{3}-\frac{3}{2}$ for all $x\in[0,1]$. The claim follows from proving that $f$ is convex, and the second order derivative $f^{(2)}$ of $f$ is positive, since 
\begin{equation}
f^{(2)}(x)=4-3x\geq 1\quad \forall x\in[0,1].
\end{equation}
Therefore, the positivity of the second order derivative of $f$ leads to the convexity of $\mathcal{F}_{2}(\rho)$ in the set of densities operators. 
\end{proof}

\paragraph{Analytical gradient}
Here we discuss the computation of the gradient of the global loss function $\mathcal{F}_{2}(\bm\theta)$. Inspired by previous works~\cite{mitarai2018quantum,Schuld2019,Ostaszewski2019}, we compute the gradients of the 2-truncated free energy $\mathcal F_2$ by shifting the parameters of the same circuit for estimating $\mathcal F_2$. Note that there is an alternative method to estimate the partial derivative with a single circuit~\cite{Farhi2018}, but at the cost of using an ancillary qubit.

In Fig.~\ref{fig:Gibbs prepare}, the density operator $\rho(\bm\theta)$ is prepared in the register $B$ by performing a sequence of unitaries $U=U_{N}...U_{1}$ on the state $\ket{00}_{AB}$ in registers $AB$. Each gate $U_{m}$ is either fixed, e.g., a C-NOT gate, or parameterized. The parameterized gates are of the form $U_{m}=e^{-iH_{m}\theta_{m}/2}$, where $\theta_{m}$'s are real parameters and $H_{m}$'s are a tensor product of Pauli matrices. Then the loss function $\mathcal{F}_{2}$ is related to parameters $\bm\theta$, and its gradient is explicitly given by 
\begin{align}
\nabla_{\bm\theta}\mathcal{F}_{2}(\bm\theta)=\left(\frac{\partial\mathcal{F}_{2}(\bm\theta)}{\partial\theta_{1}},...,\frac{\partial\mathcal{F}_{2}(\bm\theta)}{\partial\theta_{N}}\right).
\label{EQ:41}
\end{align}
Particularly, the partial derivatives of $\mathcal{F}_2$ with respect to $\theta_m$ is 
\begin{align}
\frac{\partial\mathcal{F}_{2}(\bm\theta)}{\partial\theta_{m}}=&\frac{1}{2}({\langle K\rangle_{\theta_{m}+\frac{\pi}{2}}-\langle K\rangle_{\theta_{m}-\frac{\pi}{2}}})\nonumber\\
&+\beta^{-1}[(2(\langle O\rangle_{\theta_{m}+\frac{\pi}{2},\theta_{m}}-\langle O\rangle_{\theta_{m}-\frac{\pi}{2},\theta_{m}})\nonumber\\
&-\frac{3}{4}(\langle G\rangle_{\theta_{m}+\frac{\pi}{2},\theta_{m},\theta_{m}}-\langle G\rangle_{\theta_{m}-\frac{\pi}{2},\theta_{m},\theta_{m}})], \label{EQ:37}
\end{align}
where $\langle K\rangle$, $\langle O\rangle$, and $\langle G\rangle$ are used to estimate $\tr(H\rho(\bm\theta))$, $\tr(\rho(\bm\theta)^2)$, and $\tr(\rho(\bm\theta)^3)$, respectively, and their definitions 
are given as follows:
\begin{align}
&\langle K\rangle_{\theta_{\alpha}}=\tr(U_{\alpha}\psi_{A_{1}B_{1}} U_{\alpha}^{\dagger}\cdot K),\label{eq:notation:k}\\
&\langle O\rangle_{\theta_{\alpha},\theta_{\beta}}=\tr([\otimes_{l=2}^{3}U_{\alpha}\psi_{A_{l}B_{l}} U_{\alpha}^{\dagger}]\cdot O),\label{eq:notation:o}\\
&\langle G\rangle_{\theta_{\alpha},\theta_{\beta},\theta_{\gamma}}=\tr([\otimes_{l=4}^{6}U_{\alpha}\psi_{A_{l}B_{l}} U_{\alpha}^{\dagger}]\cdot G).\label{eq:notation:g}
\end{align}
{Here, we defer the definitions of notations $K,O,G$ and the process of deriving the gradient to the Appendix~\ref{app:gradient}.} From Eq.~\eqref{EQ:37}, we can see that the gradient can be efficiently computed by shifting the parameters in the loss function.

\begin{figure}[htb]
	\centering
		\subfigure[\ Ansatz\ with 6 parameters ]{\label{fig:Ising_ansatz_6_paras}
		\includegraphics[width=0.35\textwidth]{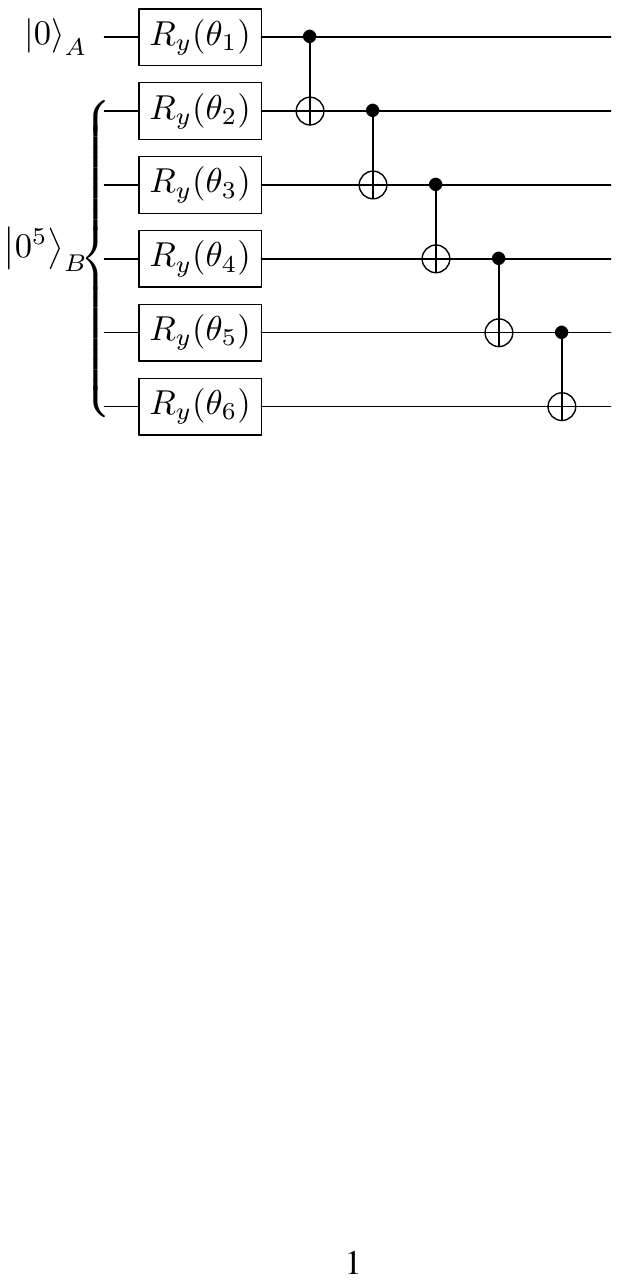}}
\subfigure[\ Ansatz\ with only 1 parameter]{\label{fig:Ising_ansatz_1_para}
		\includegraphics[width=0.35\textwidth]{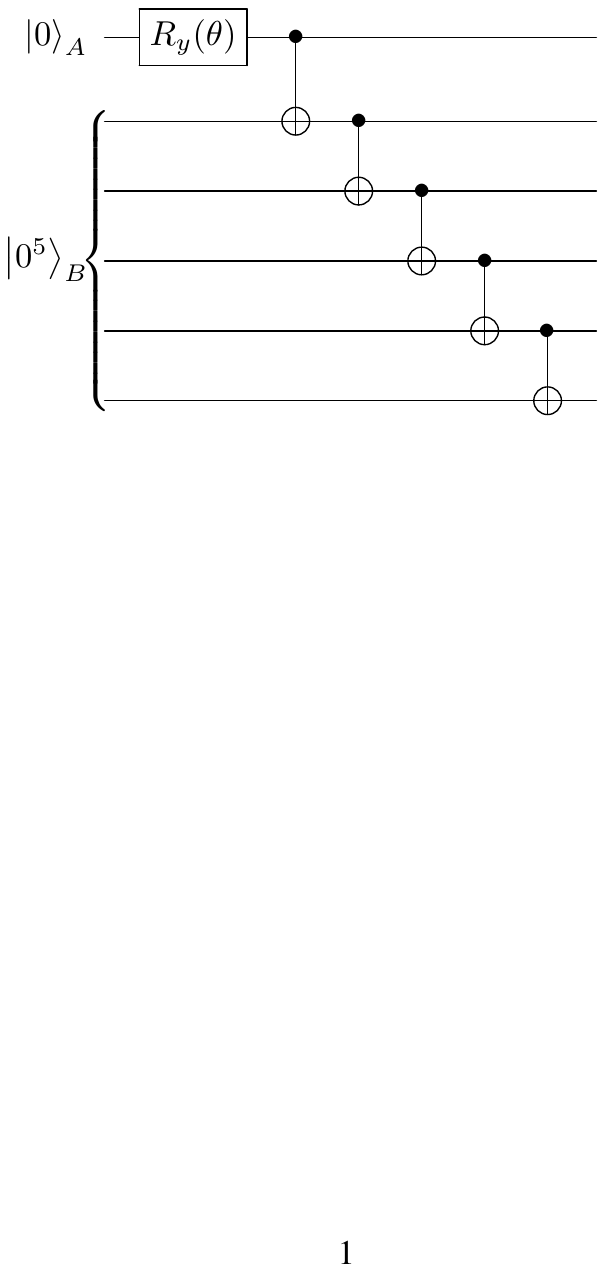}}
	\caption{Two ansatzes for Ising chain model. These ansatzes are composed of two registers $A$ and $B$, where one ancillary qubit is set in $A$ and $5$ qubits are set in $B$. Notably, the qubits in $B$ are performed with rotations $R_{y}(\theta)$ and CNOT gates in (a), while only CNOT gates in (b).}
	\label{fig:Ising_ansatz}
\end{figure}

To summarize, the above results entail that the partial derivatives of our loss function $\mathcal{F}_{2}(\bm\theta)$ with respect to $\bm\theta$ are completely determined by Eq.~\eqref{EQ:37}. This indicates that the analytical gradient of our loss function $\mathcal{F}_{2}(\bm\theta)$ can be efficiently computed on near-term quantum devices by shifting parameters and performing measurements.
With the analytical gradients, one could apply the gradient-based methods to minimize the loss function. Specifically, parameters $\bm\theta$ in the unitary $U(\bm\theta)$ are updated towards the steepest direction of the loss function, i.e.,
\begin{align}
\bm\theta\leftarrow \bm\theta-\eta\nabla_{\bm\theta}\mathcal{F}_{2}(\bm\theta),
\end{align}
where $\nabla_{\bm\theta}\mathcal{F}_{2}(\bm\theta)$ is the gradient vector and $\eta$ is the learning rate that determines the step sizes. Under suitable assumptions, the loss functions converge to the global minimum after certain iterations of the above procedure. Notice that other gradient-based methods, e.g., stochastic gradient descent,  ADAM~\cite{kingma2014adam}, can also be used in the optimization loop of our variational Gibbs state preparation algorithm.

\begin{figure}[htb]
	\centering
		\subfigure[\ Ansatz\ with 6 parameters ]{\label{fig:ising6}
		\includegraphics[width=0.45\textwidth]{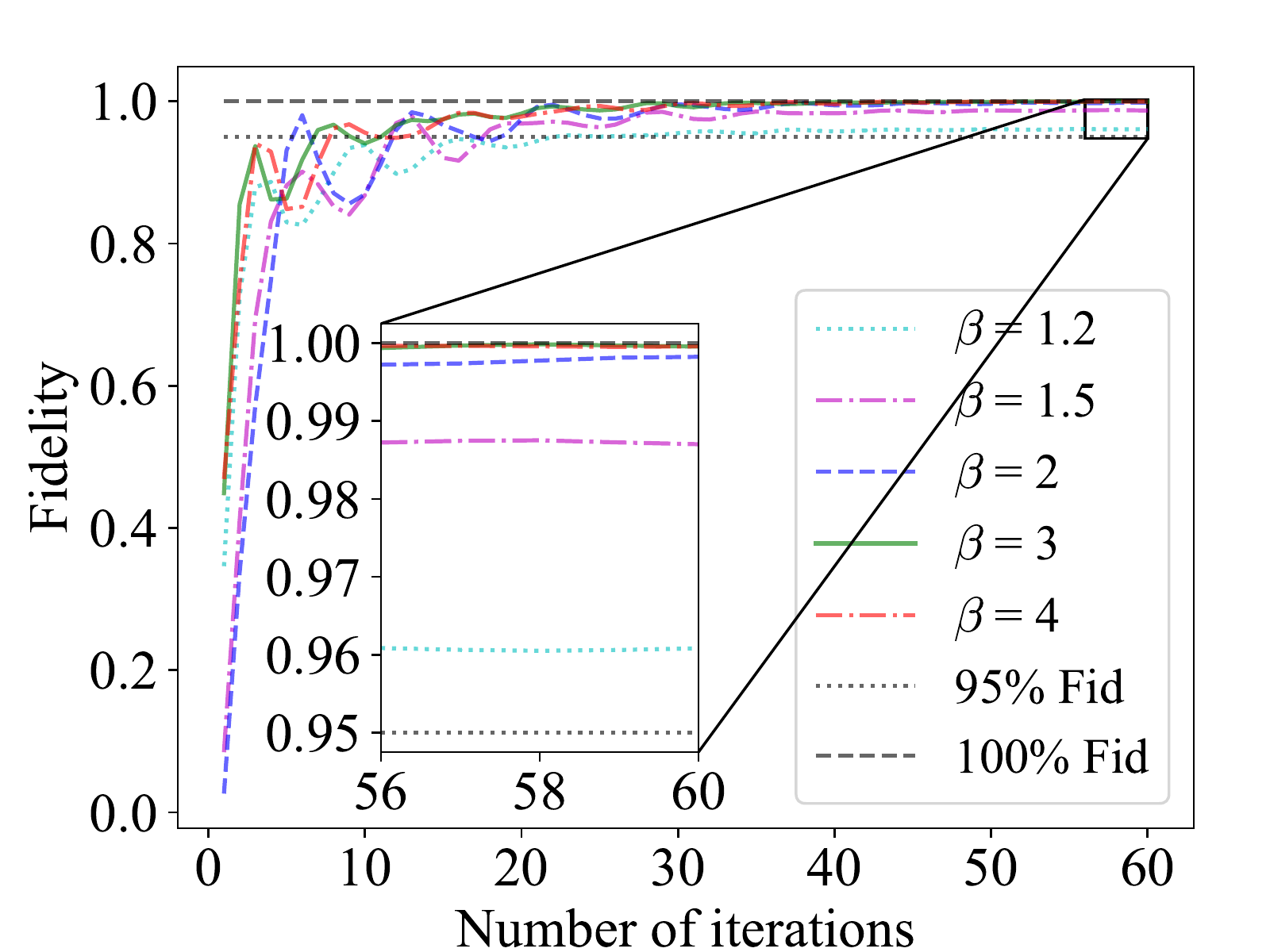}}
\subfigure[\ Ansatz\ with only 1 parameter]{\label{fig:ising1}
		\includegraphics[width=0.45\textwidth]{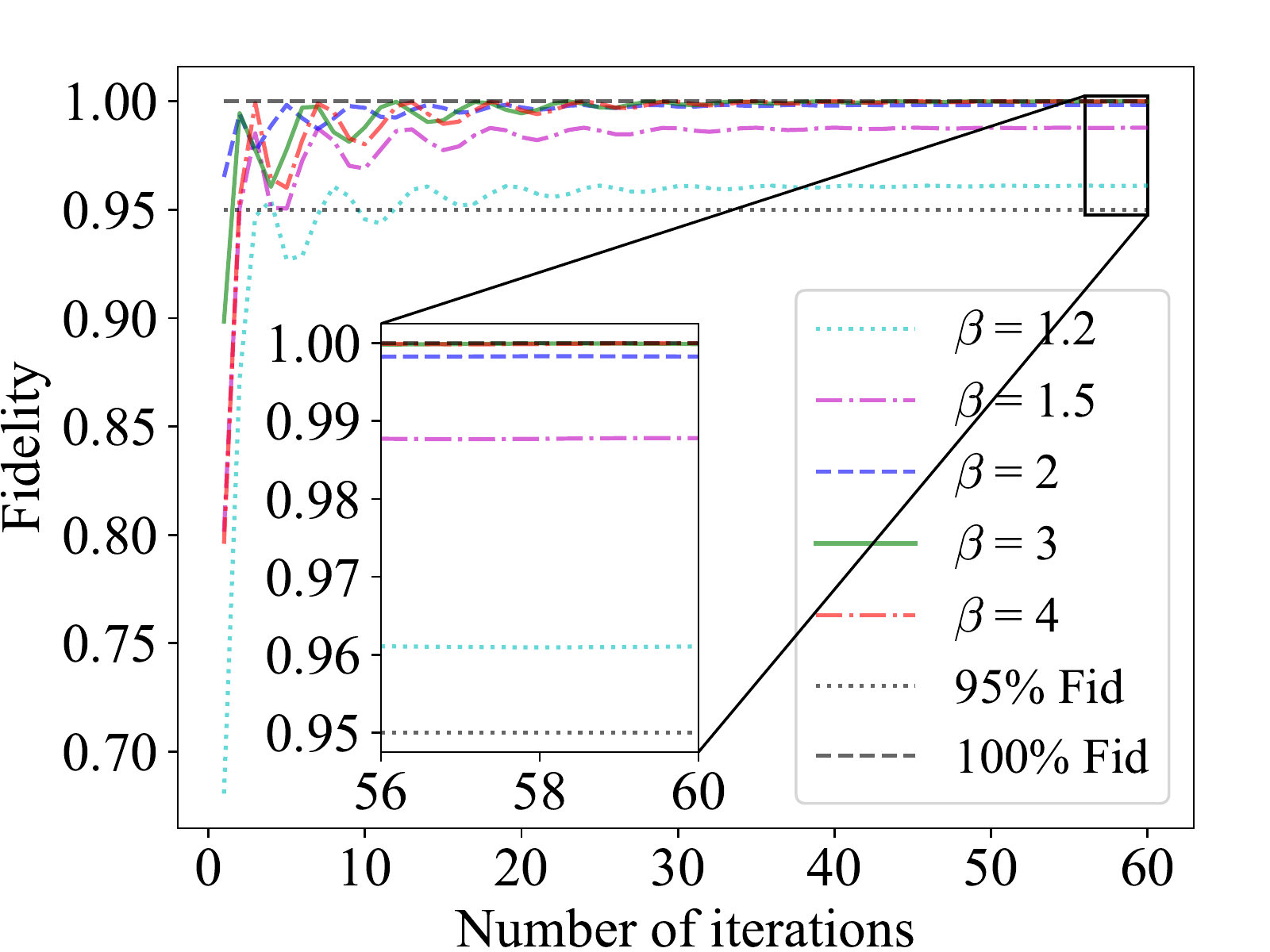}}
	\caption{Fidelity curves for the Ising chain Gibbs state preparation with different $\beta$. In (a), we use the Ansatz with $6$ parameters (cf. Fig.~\ref{fig:Ising_ansatz_6_paras}); In (b), we use the Ansatz with only 1 parameter (cf. Fig.~\ref{fig:Ising_ansatz_1_para}).
	We can see that they have almost the same performance, which indicates only 1 parameter is enough for this task.}
	\label{fig:Ising_fidelity}
\end{figure}

\begin{figure}[h]
	\centering
		\includegraphics[width=0.45\textwidth]{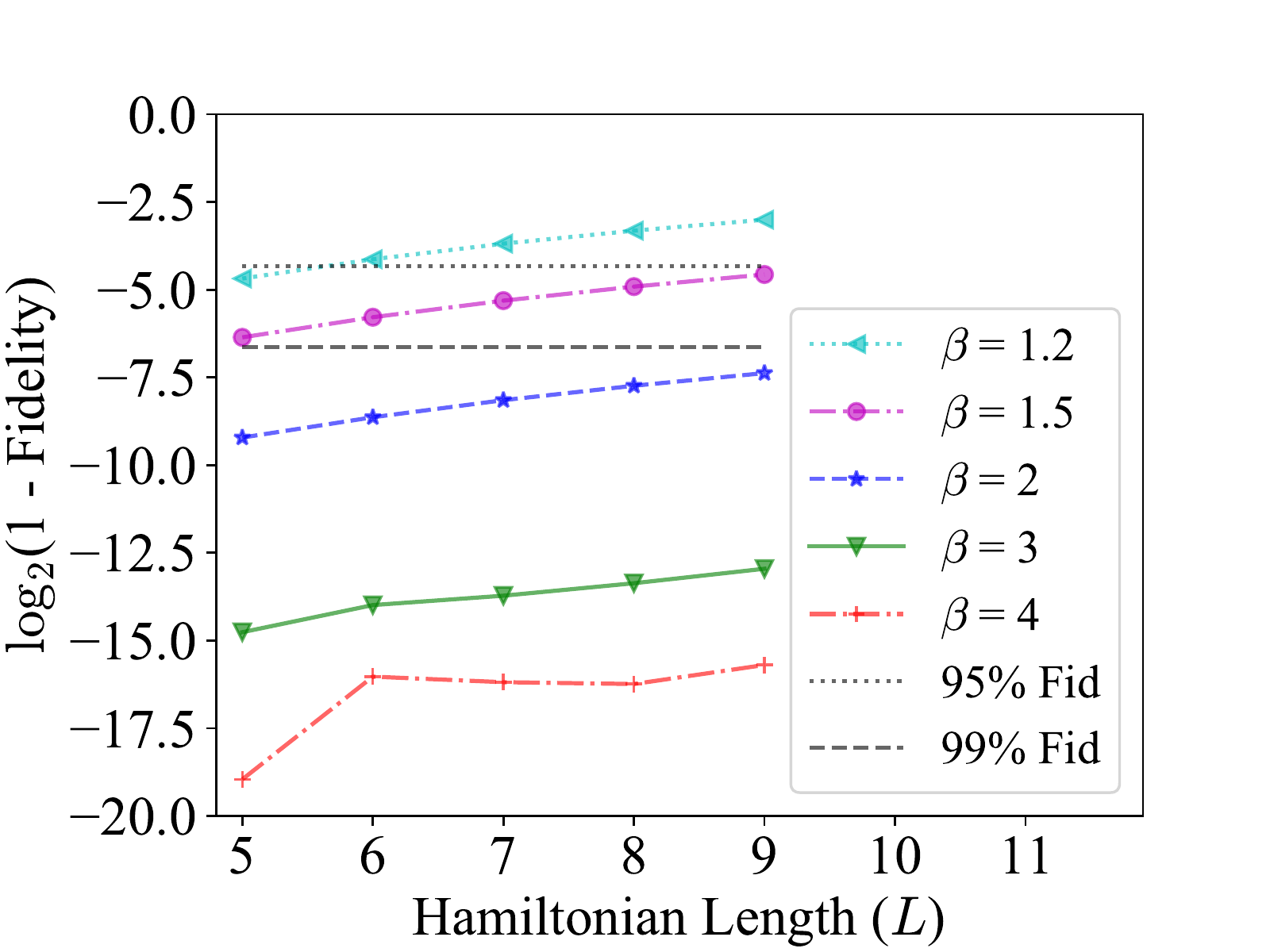}
	\caption{Semilog plot of the fidelity vs. the Ising Hamiltonian length ($L$) with different $\beta$ for the Ising chain model. Here, $\log_2$ means logarithm with base $2$. We can see that the fidelity increases exponentially with $\beta$ growing. }
	\label{fig:log_fidelity}
\end{figure}

\section{Numerical simulations}
\label{SEC:EXPERIMENT}
This section has conducted numerical experiments for preparing the Gibbs states of two common Hamiltonian examples: the Ising chain model and XY spin-1/2 chain model. Specifically, in subsection~\ref{sec:Ising}, we study the Ising chain model and show that a parameterized circuit with one ancillary qubit and shallow depth could be trained to produce the Ising Gibbs state with fidelity higher than 99\%, especially at higher $\beta$ or lower inverse temperature ($\beta\ge 2$). Furthermore, we also give a more sophisticated ansatz with only one parameter that can do the same thing. As the second example, in subsection~\ref{sec:spin chain}, we study the spin chain model. We show that our approach could achieve a fidelity higher than 95\% via an ansatz with 30 parameters for $\beta\ge 1.5$ for a 5-length XY spin-1/2 chain Hamiltonian. In particular, the fidelity could also achieve 99\% for the lower inverse temperature case.
In our numerical experiments, the classical parameters of the parameterized circuits are initialized from a uniform distribution in $[0, 2\pi]$, and then updated via the ADAM gradient-based method~\cite{kingma2014adam} until the loss function is converged.

\subsection{Ising model}\label{sec:Ising}
As our first example, we consider the
spin 1/2 chain $B$ of length $L=5$, with the Ising Hamiltonian 
\begin{align}
H_{B}=-\sum_{i=1}^{L} Z_{B, i} Z_{B, i+1}
\label{ising_model}
\end{align}
and periodic boundary conditions (i.e., $Z_{B,6}=Z_{B,1}$). Our goal is to prepare the corresponding Gibbs state 
\begin{align}\label{eq:Ising_Gibbs}
\rho_G = e^{-\beta H_{B}}/\tr(e^{-\beta H_{B}}).
\end{align}

To prepare this state, we choose a $6$-qubit parameterized circuit with $n_A=1$ and $n_B=5$ (cf. Fig. \ref{fig:Ising_ansatz}), where $A$ is the ancillary system that used for producing a mixed state on system $B$. Here we need to note that we only use a 1-qubit ancillary system in our ansatz, which is significantly less than \cite{Wu2019b} where $n_A=n_B$.
In Fig.~\ref{fig:Ising_ansatz_6_paras}, the ansatz consists of 6 single qubit Pauli-Y rotation operators with different classic parameters $\theta_i$ ($i\in [6]$) and 5 CNOT gates.

After applying this ansatz on the input zero state $\ket{0}_A\ket{0^5}_B$, we can get the output state on system B, which is desired to get close to the Gibbs state in Eq. \eqref{eq:Ising_Gibbs}. The fidelity between this output state and the Gibbs state, in the training process with different $\beta$, is depicted in Fig. \ref{fig:ising6}. When $\beta \ge 1.2$, after $30$ iterations of updating parameters, our method can easily achieve a fidelity higher than 95\%. Specifically, if $\beta\ge 2$, the fidelity is higher than 99\%, which indicates that our approach can almost exactly prepare the Gibbs state in Eq.~\eqref{eq:Ising_Gibbs}, especially at higher inverse temperatures.

We also test the preparation of the Ising Gibbs state for different length (i.e., $L=5,6,7,8,9$), and all of the ansatzes are similar to Fig. \ref{fig:Ising_ansatz_6_paras}, which only uses one additional qubit. The curves of the logarithmic form of the fidelity between the output state $\rho_B$ and the Gibbs state $\rho_G$ are depicted in Fig. \ref{fig:log_fidelity}. We can intuitively see that the larger the Hamiltonian length is, the lower fidelity we achieve. However, we can also find that the temperature has a significant impact on fidelity: the larger the $\beta$ is, the higher the fidelity is (see Proposition \ref{prop:Fid_lower_bound} for a detailed analysis). In particular, when $\beta\ge 2$, the fidelity is already higher than 99\%, for the Hamiltonian length we have listed in the figure.

An interesting experimental phenomenon in the training process is that the first parameter $\theta_1$ in the system $A$ is approaching $\pi/2$ while other parameters in the system $B$ are approaching $0$. Hence, we update the ansatz to a simplified one in Fig.~\ref{fig:Ising_ansatz_1_para} and implement the numerical simulations in Fig.~\ref{fig:ising1}. Notably, the overall performance is almost the same as using the ansatz with $6$ parameters (cf.~Fig.\ref{fig:ising6}). To further explore this interesting behaviour of the Ising chain Gibbs state preparation, we analyze the states generated using different loss functions.

\begin{proposition}\label{prop:circuit_with_1_para}
Given the circuit in Fig.~\ref{fig:Ising_ansatz_1_para} and denote $\rho_B(\theta)$ its output state on system B. For the Ising chain model, if we compute its free energy in Eq.~\eqref{eq:free_energy} and our truncated cost in Eq.~\eqref{EQ:40}, then the optimal parameters that minimize these two loss functions are both
\begin{align}
    \theta= \pi/2 + k \pi,
\end{align}
where $k\in \mathbb{Z}$. As a result, $\rho_B(\pi/2)$ is the best state, under this circuit, that approaches the Gibbs state in Eq. \eqref{eq:Ising_Gibbs}, with a fidelity larger than 95\% for any $\beta \ge 1.25$.
\end{proposition}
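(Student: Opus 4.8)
The plan is to compute the output state $\rho_B(\theta)$ of the one-parameter circuit explicitly, reduce both loss functions to a single-variable optimization in the eigenvalue $p=\cos^2(\theta/2)$, and then evaluate the fidelity at the optimum against an explicit partition function. First I would trace through the circuit in Fig.~\ref{fig:Ising_ansatz_1_para}: the lone rotation $R_y(\theta)$ acts on the ancilla in $A$, producing $\cos(\tfrac{\theta}{2})\ket{0}_A+\sin(\tfrac{\theta}{2})\ket{1}_A$, and the subsequent CNOT gates copy this value onto all five qubits of $B$, yielding the GHZ-type state $\cos(\tfrac{\theta}{2})\ket{0}_A\ket{0^5}_B+\sin(\tfrac{\theta}{2})\ket{1}_A\ket{1^5}_B$. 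Tracing out $A$ gives
\begin{equation}
\rho_B(\theta)=\cos^2(\tfrac{\theta}{2})\,\op{0^5}{0^5}+\sin^2(\tfrac{\theta}{2})\,\op{1^5}{1^5},
\end{equation}
a classical mixture of the two all-aligned computational basis states, with spectrum $\{p,1-p\}$ where $p=\cos^2(\theta/2)$.

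The key observation is that $\ket{0^5}$ and $\ket{1^5}$ are exactly the two degenerate ground states of $H_B$, each of energy $-L$. Hence the energy term $\tr(H_B\rho_B(\theta))=-L$ is independent of $\theta$, so both loss functions reduce to functions of $p$ alone through their entropy-like parts. For the true free energy of Eq.~\eqref{eq:free_energy} this part is the binary entropy $-p\ln p-(1-p)\ln(1-p)$, while for $\mathcal{F}_2$ of Eq.~\eqref{EQ:35} it is $g(p)=2(p^2+(1-p)^2)-\tfrac12(p^3+(1-p)^3)-\tfrac32$ evaluated on the eigenvalues. Both are manifestly symmetric under $p\leftrightarrow 1-p$, so $p=1/2$ is automatically a critical point; the binary entropy is strictly concave and maximized there, while a short computation gives $g'(p)=5p-\tfrac52$ and $g''(p)=5>0$ (consistent with Lemma~\ref{PR:1}), so $g$ is minimized there. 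Since $\beta^{-1}>0$, both $\mathcal{F}$ and $\mathcal{F}_2$ are optimized precisely when $\cos^2(\theta/2)=1/2$, i.e. $\theta=\pi/2+k\pi$; the same $p\leftrightarrow 1-p$ symmetry pins the critical point for the general order-$K$ cost of Eq.~\eqref{EQ:40}.

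For the fidelity claim I would use that at $\theta=\pi/2$ the state $\rho_B(\pi/2)=\tfrac12\op{0^5}{0^5}+\tfrac12\op{1^5}{1^5}$ and the Gibbs state $\rho_G$ commute, both being diagonal in the computational basis since $H_B$ is. For commuting states the Uhlmann fidelity collapses to the classical overlap $F=\sum_x\sqrt{p_x q_x}$, and because $\rho_B(\pi/2)$ is supported only on the two ground states of equal Gibbs weight $e^{\beta L}/Z$, this simplifies to $F=\sqrt{2e^{\beta L}/Z}$. I would then evaluate $Z$ from the domain-wall spectrum of the periodic chain, $E=-L+2d$ with $2\binom{L}{d}$ configurations at $d$ domain walls, which for $L=5$ gives
\begin{equation}
F=\bigl(1+10\,e^{-4\beta}+5\,e^{-8\beta}\bigr)^{-1/2}.
\end{equation}
Plugging in $\beta=1.25$ yields $F\approx 0.968>0.95$, and since the denominator decreases monotonically in $\beta$, the bound holds for all $\beta\ge 1.25$.

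Everything after identifying the output state is elementary single-variable calculus plus an explicit partition-function evaluation, so I expect the main obstacle to be the careful bookkeeping of the circuit itself: verifying that the particular CNOT pattern in Fig.~\ref{fig:Ising_ansatz_1_para} really produces the GHZ state, and hence that the support of $\rho_B(\theta)$ lies entirely in the two-dimensional ground space of $H_B$. This is precisely what forces the energy term to be constant and collapses both variational problems onto the same one-dimensional optimization; if the CNOT structure instead spread weight onto excited configurations, the energy would depend on $\theta$ and the two optima need no longer coincide, so this step is where the cleanness of the result is really decided.
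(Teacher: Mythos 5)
Your proposal is correct, and it splits naturally into two halves relative to the paper. The optimization half is essentially the paper's own argument: the paper likewise traces out $A$ to get $\rho_B(\theta)=\cos^2(\theta/2)\op{0^n}{0^n}+\sin^2(\theta/2)\op{1^n}{1^n}$, invokes the ground-space degeneracy $\lambda_0=\lambda_{N-1}=-L$ (its Lemma~\ref{lemma:facts}) so the energy term drops out, and locates the minimizers of $\mathcal{F}$ and $\mathcal{F}_2$ — the paper by differentiating in $\theta$ (obtaining $\partial_\theta\mathcal{F}\propto\sin\theta\,(\ln b-\ln a)$ and $\partial_\theta\mathcal{F}_2\propto\sin\theta\,(b-a)$), you by symmetry plus convexity/concavity in $p=\cos^2(\theta/2)$; these are the same computation in different variables, and your $g'(p)=5(p-\tfrac12)$ matches the paper's derivative up to the chain-rule factor. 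Where you genuinely diverge is the fidelity claim: the paper does not touch the partition function in this proof but instead cites its Proposition~\ref{prop:Fid_lower_bound}, a general spectral-gap bound $F(\rho_B(\pi/2),\rho_G)\geq 1/\sqrt{1+(N/2-1)e^{-\beta\Delta}}$ obtained from $\hat\lambda_0=e^{\beta L}/Z\geq 1/(2+(N-2)e^{-\beta\Delta})$, and plugs in $N=2^5$, $\Delta=4$, $\beta=1.25$ to get $\geq 95.3\%$. You instead evaluate $F=\sqrt{2e^{\beta L}/Z}$ exactly by domain-wall counting, $Z=2e^{5\beta}+20e^{\beta}+10e^{-3\beta}$, giving $F=\bigl(1+10e^{-4\beta}+5e^{-8\beta}\bigr)^{-1/2}\approx 0.968$ at $\beta=1.25$, together with monotonicity in $\beta$. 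Your route is sharper and fully self-contained, but it is tied to $L=5$; the paper's bound is looser yet holds for any chain length with the same degenerate-ground-space structure, which is why it is packaged as a standalone proposition and reused to explain the exponential approach of the fidelity to $1$ observed in Fig.~\ref{fig:log_fidelity}. One small caveat: your closing remark that the $p\leftrightarrow 1-p$ symmetry ``pins the critical point'' for the general order-$K$ cost of Eq.~\eqref{EQ:40} establishes only stationarity at $p=1/2$, not minimality; since the proposition (and the paper's own proof) effectively concerns $K=2$, where your convexity computation closes the gap, this does not affect correctness, but the general-$K$ claim would need an extra argument.
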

We defer the proof of Proposition\ref{prop:circuit_with_1_para} to Appendix~\ref{sec:prop:circuit_with_1_para}. Here, we need to note that this fidelity is just a lower bound, actually, when $\beta=1.2$, we have still achieved a fidelity greater than 95\% for $n_B=5$, as demonstrated in our experiments. And in Proposition \ref{prop:circuit_with_1_para}, the number of qubits in system $B$ is not limited to 5, instead it can be any positive integer that greater than two.

Another interesting experimental result (c.f. Fig. \ref{fig:log_fidelity}) shows that the fidelity between the Ising chain Gibbs state $\rho_G$ and the output state $\rho_{B}$ of our method increases exponentially when $\beta$ increases. The result is as follows, and the details can be found in Appendix~\ref{sec:proof:Fid_lower_bound}.
\begin{proposition}\label{prop:Fid_lower_bound}
Given the circuit in Fig.~\ref{fig:Ising_ansatz_1_para} and let $\rho_{B}(\theta)$ be its output state on system $B$. Then the fidelity between $\rho_{B}(\pi/2)$ and the Gibbs state $\rho_{G}$ is lower bounded. To be more specific,
\begin{align}
F(\rho_{B}(\pi/2),\rho_{G})\geq\frac{1}{\sqrt{1+(N/2-1)e^{-\beta\Delta}}},
\label{EQ:fidelity:beta}
\end{align}
where $N$ is the dimension of system $B$, i.e., $N=2^{n_B}$ and {$\Delta$ is the spectral gap of the Hamiltonian $H_B$ on system $B$, i.e., the discrepancy between the minimum and the second minimum eigenvalues.}
\end{proposition}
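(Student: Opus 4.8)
The plan is to make the fidelity computation exact by first pinning down the output state and then reducing to a classical overlap. First I would determine $\rho_{B}(\pi/2)$ explicitly. The one-parameter ansatz of Fig.~\ref{fig:Ising_ansatz_1_para} places the single rotation $R_{y}(\theta)$ on the ancilla and then couples it to register $B$ through a ladder of CNOT gates; at $\theta=\pi/2$ the ancilla is prepared in $(\ket{0}+\ket{1})/\sqrt{2}$ and the CNOTs propagate it into the GHZ-type state $\tfrac{1}{\sqrt{2}}(\ket{0}_{A}\ket{0^{L}}_{B}+\ket{1}_{A}\ket{1^{L}}_{B})$. Tracing out the ancilla then gives
\begin{align}
\rho_{B}(\pi/2)=\tfrac{1}{2}\left(\op{0^{L}}{0^{L}}+\op{1^{L}}{1^{L}}\right),
\end{align}
which is precisely the equal mixture of the two ferromagnetic ground states of $H_{B}$, i.e.\ the two minimum-energy eigenstates at energy $E_{0}=-L$.

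Next I would exploit that $H_{B}$ is diagonal in the computational basis, so that both $\rho_{B}(\pi/2)$ and $\rho_{G}=e^{-\beta H_{B}}/Z$ are diagonal in that basis and hence commute. For commuting states the Uhlmann fidelity used in Lemma~\ref{PR:2} collapses to the classical Bhattacharyya overlap $F(\rho,\sigma)=\sum_{x}\sqrt{p_{x}q_{x}}$. Writing $q_{x}=e^{-\beta E_{x}}/Z$ and using that the only nonzero weights of $\rho_{B}(\pi/2)$ are $p_{0^{L}}=p_{1^{L}}=\tfrac12$ at $E_{0}=-L$, this yields
\begin{align}
F(\rho_{B}(\pi/2),\rho_{G})=\sqrt{\frac{2e^{\beta L}}{Z}}=\frac{1}{\sqrt{1+\tfrac{1}{2e^{\beta L}}\sum_{x\neq 0^{L},1^{L}}e^{-\beta E_{x}}}}.
\end{align}

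The remaining step is a crude upper bound on the excited-state partition sum. Every excited eigenvalue obeys $E_{x}\geq E_{0}+\Delta=-L+\Delta$ by the definition of the spectral gap, so $e^{-\beta E_{x}}\leq e^{\beta L}e^{-\beta\Delta}$, and there are exactly $N-2$ such states with $N=2^{n_{B}}$. Hence $\sum_{x\neq 0^{L},1^{L}}e^{-\beta E_{x}}\leq (N-2)e^{\beta L}e^{-\beta\Delta}$, the denominator is bounded above by $1+(N/2-1)e^{-\beta\Delta}$, and since upper-bounding the nonnegative sum lowers the value of $1/\sqrt{1+(\cdot)}$, the bound in Eq.~\eqref{EQ:fidelity:beta} follows.

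I expect the only genuine obstacle to be the first step: rigorously certifying that the CNOT ladder with $\theta=\pi/2$ produces exactly the GHZ state, and hence the equal mixture of the two ground states, which is tied to the specific connectivity of the ansatz in Fig.~\ref{fig:Ising_ansatz_1_para} (and is essentially the content already used in Proposition~\ref{prop:circuit_with_1_para}). Once the output state is identified, the fidelity is computed \emph{exactly} because the two states commute, and the single inequality invoked — the gap-based partition-function bound — is elementary and requires no further estimates.
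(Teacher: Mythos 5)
Your proof is correct and takes essentially the same route as the paper's: both identify $\rho_B(\pi/2)$ as the equal mixture of the two degenerate ground states, compute the fidelity exactly using the fact that $\rho_G$ is diagonal in the computational basis (your Bhattacharyya overlap $\sqrt{2e^{\beta L}/Z}$ is precisely the paper's $\sqrt{2\hat\lambda_0}$), and then bound the excited-state contribution to $Z$ by $(N-2)e^{\beta L}e^{-\beta\Delta}$ via the spectral gap. The only cosmetic difference is that the paper invokes Proposition~\ref{prop:circuit_with_1_para} for the form of the output state rather than re-deriving the GHZ construction, and writes the Uhlmann formula $\tr\sqrt{\rho^{1/2}\rho_G\rho^{1/2}}$ explicitly instead of appealing to the commuting-states reduction.
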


Notably, our approach prepare Gibbs state with high accuracy when $\beta$ is large, as the fidelity $F(\rho_{B}(\pi/2),\rho_{G})$ converges fast to 1 with $\beta$ increasing.
\begin{figure*}[h]
	\centering
		\includegraphics[width=0.8\textwidth]{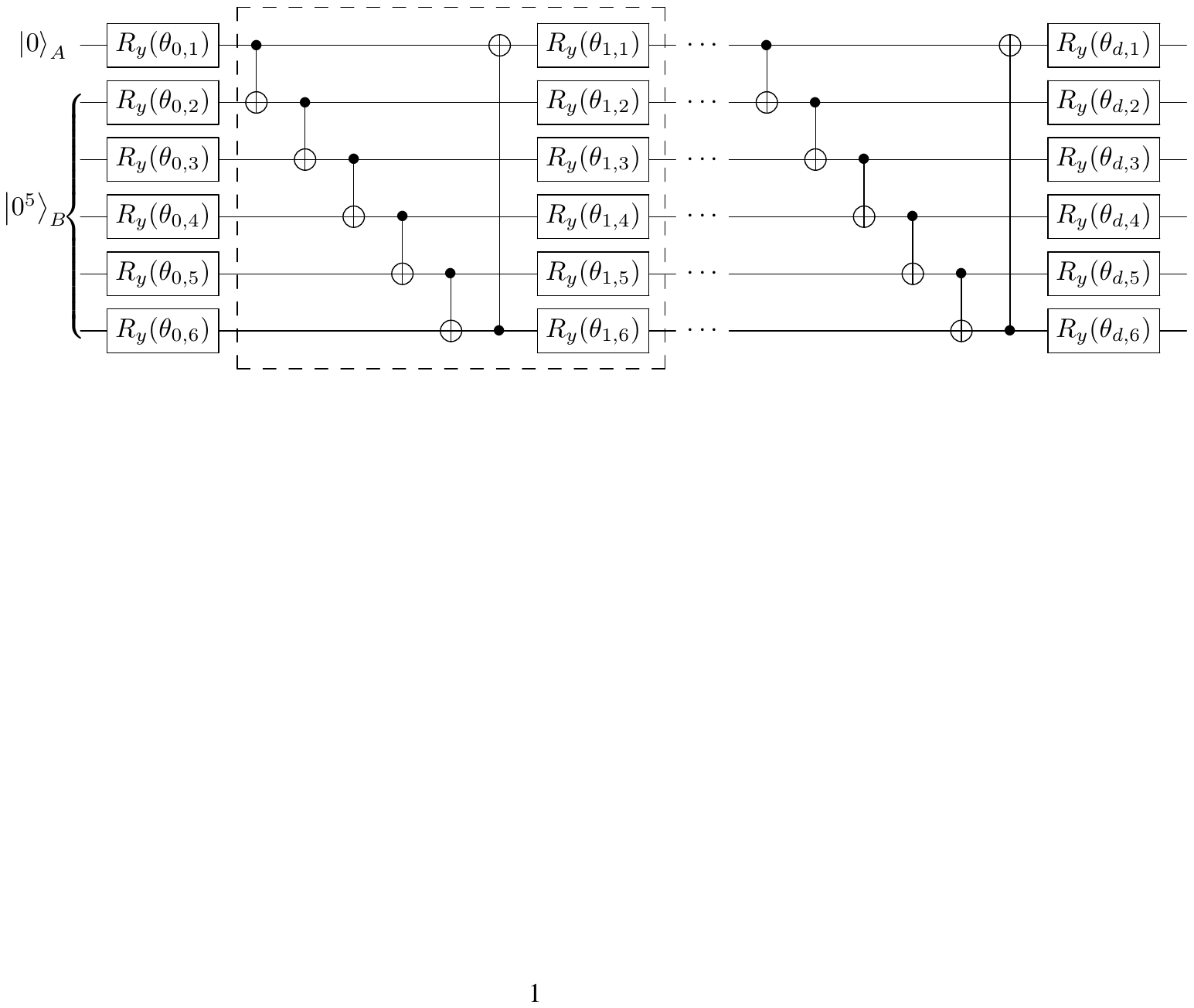}
	\caption{The ansatz for XY spin-1/2 chain model. In this ansatz, it contains one ancilla qubit in register $A$ and $5$ qubits in register $B$. Rotation gates $R_{y}(\theta)$ are first applied on all qubits. Then, a basic circuit module (denoted in the dashed-line box) composed of CNOT gates and rotation gates $R_{y}(\theta)$ is repeatedly applied. Here, $d$ means repeating $d$ times.}
	\label{fig:XYspin_ansatz}
\end{figure*}

 \begin{figure*}[h]
	\centering
	\subfigure[$\ d=3$]{\label{fig:xyspin3}
		\includegraphics[width=0.45\textwidth]{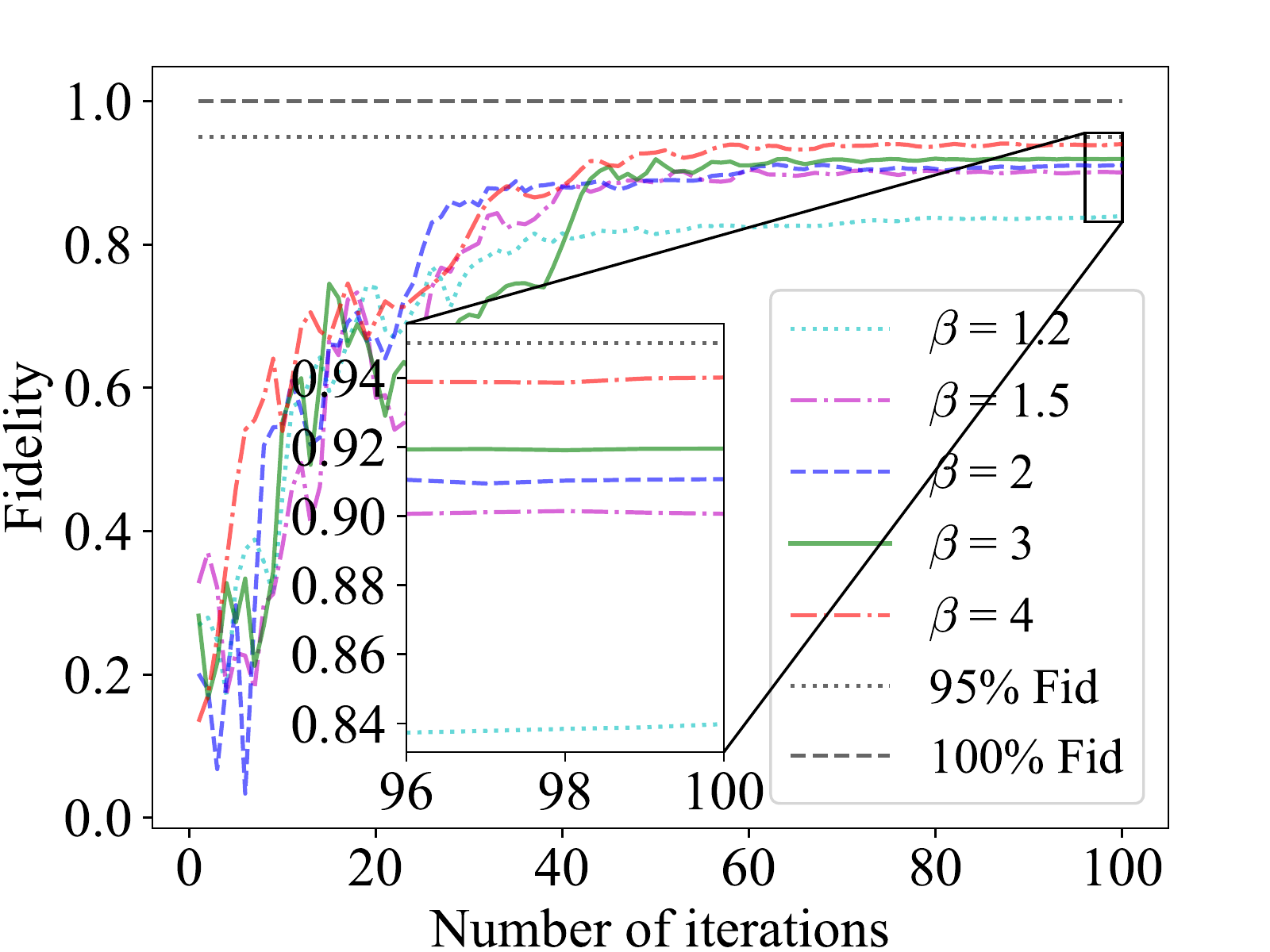}}
	\subfigure[$\ d=4$]{\label{fig:xyspin4}
\includegraphics[width=0.45\textwidth]{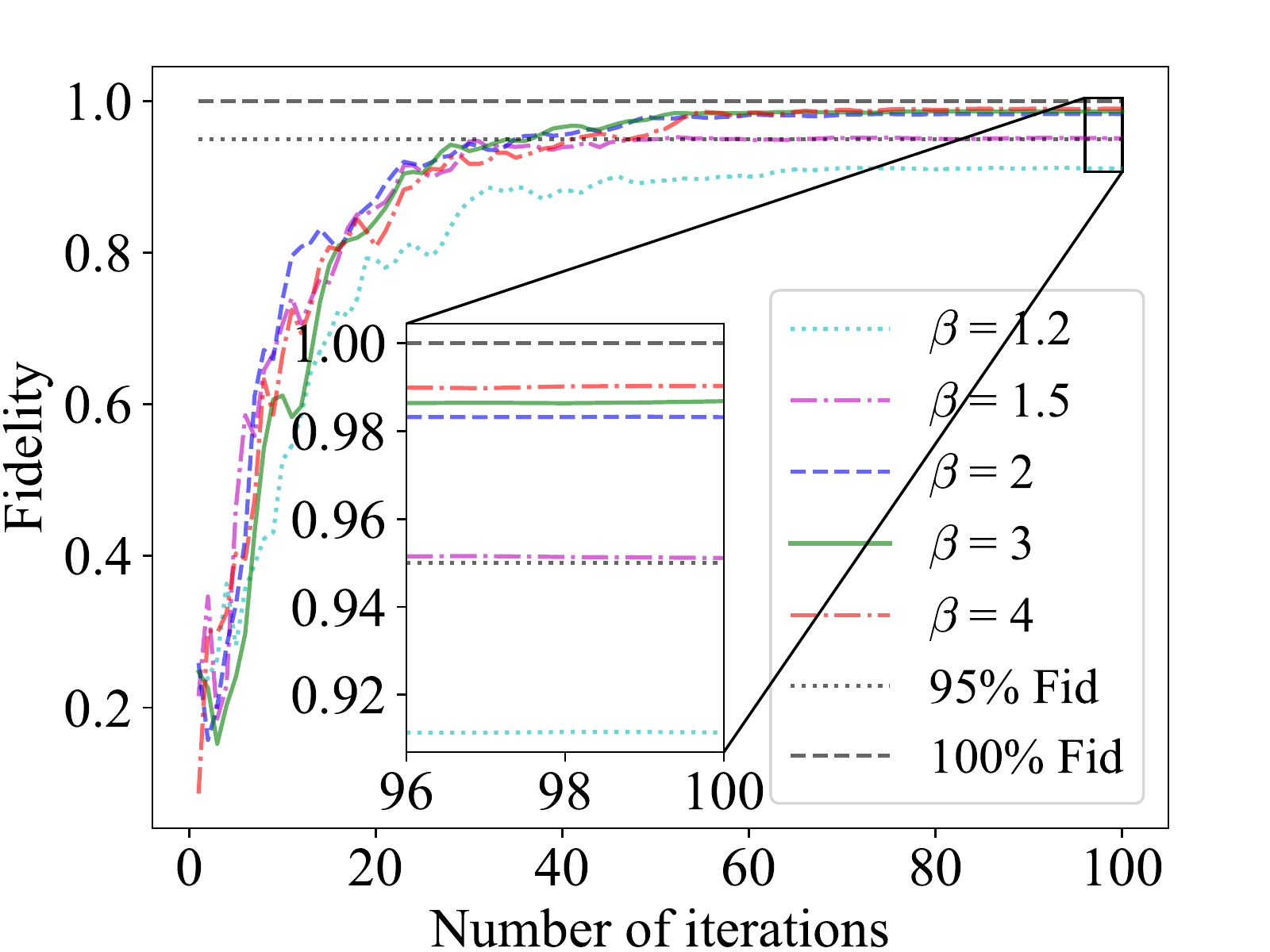}}
	\subfigure[$\ d=5$]{\label{fig:xyspin5}
\includegraphics[width=0.45\textwidth]{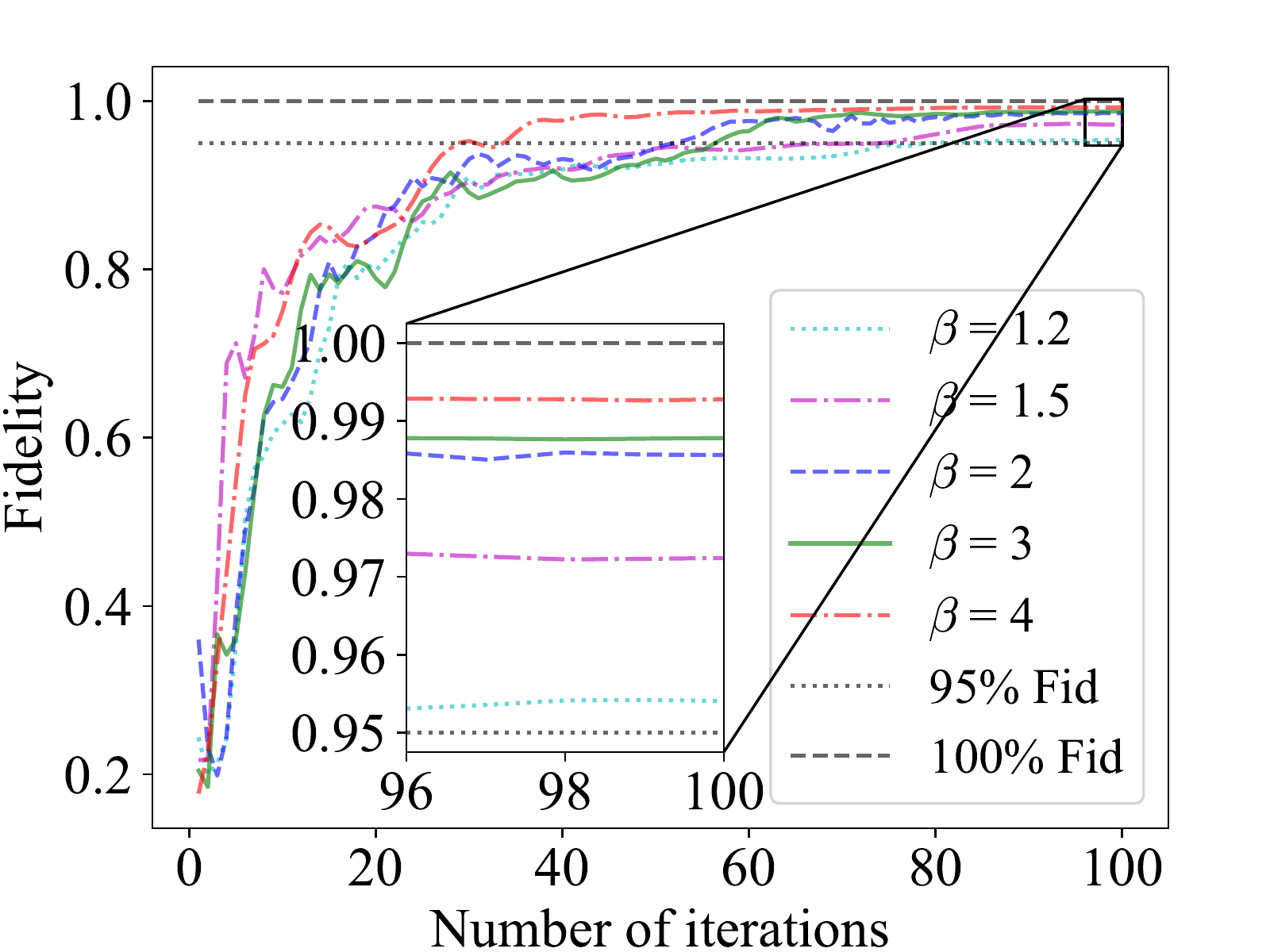}}
	\subfigure[$\ d=6$]{\label{fig:xyspin6}
\includegraphics[width=0.45\textwidth]{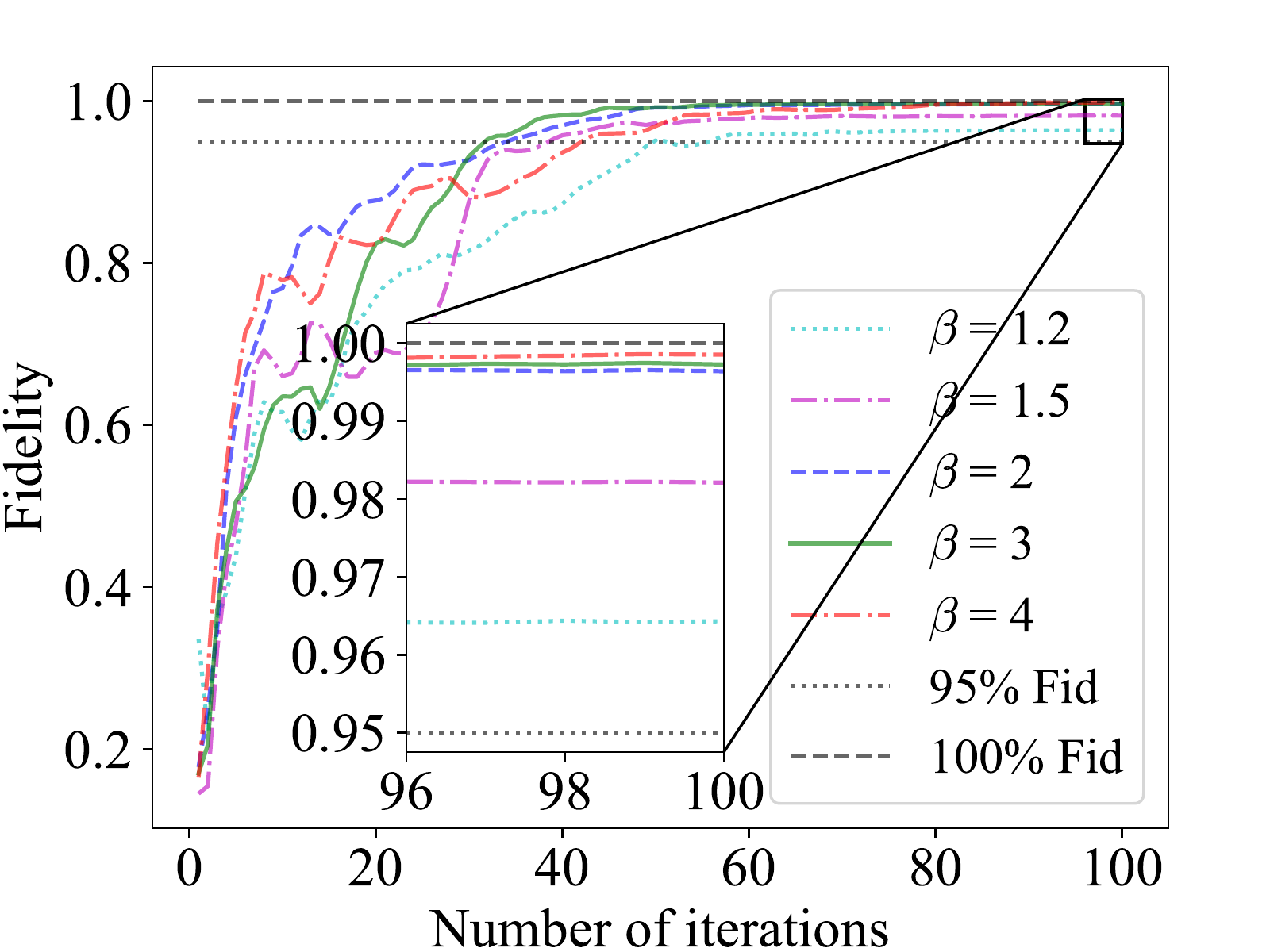}}
	\caption{Fidelity curves for the XY spin-1/2 chain Gibbs state preparation with different $\beta$. The results of the fidelity obtained with different $\beta$ are represented by coloured lines. In (a)-(d), numerical experiments are performed using different ansatzes. In each ansatz, the basic circuit module (cf. Fig.~\ref{fig:XYspin_ansatz}) is repeated different times, i.e., $d$. Note that each ansatz has $(n_A+n_B)(d+1) = 6(d+1)$ parameters. Here better performance are obtained with larger $d$.}
	\label{fig:xyspin}
\end{figure*}

\subsection{XY spin-1/2 chain model}
 \label{sec:spin chain}
Our second instance is the XY spin-1/2 chain $B$ of length $L=5$, with the Hamiltonian
 \begin{align}\label{eq:XYspin_H}
     H_B=-\sum_{i=1}^{L} X_{B,i}X_{B,i+1}+Y_{B,i}Y_{B,i+1}
 \end{align}
 and periodic boundary conditions (i.e., $Z_{B,6}=Z_{B,1}$).
 
To prepare the spin chain Gibbs state, we first choose a $6$-qubit parameterized circuit with $n_A=1$ and $n_B=5$ (cf. Fig.~\ref{fig:XYspin_ansatz}), where the basic circuit module (which contains a CNOT layer and a layer of single-qubit Pauli-Y rotation operators) is repeated $d$ times, and the total number of parameters of this circuit is $(n_A+n_B)(d+1)$.

The fidelity between the output state of this circuit and the Gibbs state is shown in Fig.~\ref{fig:xyspin}, where different $d$'s are included. We see that when $d\geq 4$ and $\beta\ge 1.5$, our approach can easily achieve a fidelity greater than 95\% and if $\beta\ge 2$, the fidelity could be higher than 98\%. Furthermore, if $\beta$ is equaling to 4, the fidelity can be even higher than 99\%, which means our approach could almost generate the Gibbs state exactly in higher $\beta$ (or lower temperature). One possible reason that we need larger $d$ for this instance is that the Hamiltonian in Eq.~\eqref{eq:XYspin_H} is not directly generated via the CNOT module. Hence we will need multiple CNOT modules to fully entangle the state.

{
We need to note that the above experiments mainly focus on lower temperatures, i.e., $\beta>1$, where smaller ancillary systems and lower truncation order $K=2$ are usually sufficient to achieve a higher fidelity.
In order to test our algorithm's performance with different truncation order $K$ under higher temperatures (e.g., $\beta<0.5$), we choose a $6$-qubit parameterized circuit with $n_A=3$ and $n_B=3$ for a 3-length XY spin-1/2 chain Hamiltonian. Here, the ansatz is similar to Fig.~\ref{fig:XYspin_ansatz} and we set $d=8$ to make it expressive enough. $n_A=n_B$ means the ancillary systems can cover all the Hamiltonian's ranks. The boxplots of the fidelity versus various truncation order $K$ are illustrated in Fig.~\ref{fig:boxplot_fidelity_order_K}, where three higher temperatures ($\beta=0.1,0.2,0.3$) are included. From the results, we could intuitively see that the larger truncation order, the higher fidelity we achieve, which is in line with our theoretical analysis. And the phenomenon that we could still achieve a fidelity higher than 98\% even under higher temperatures indicates our hybrid algorithm has a powerful ability in preparing Gibbs states of certain many-body Hamiltonians.
}

\begin{figure}[htb]
	\centering
\includegraphics[width=0.45\textwidth]{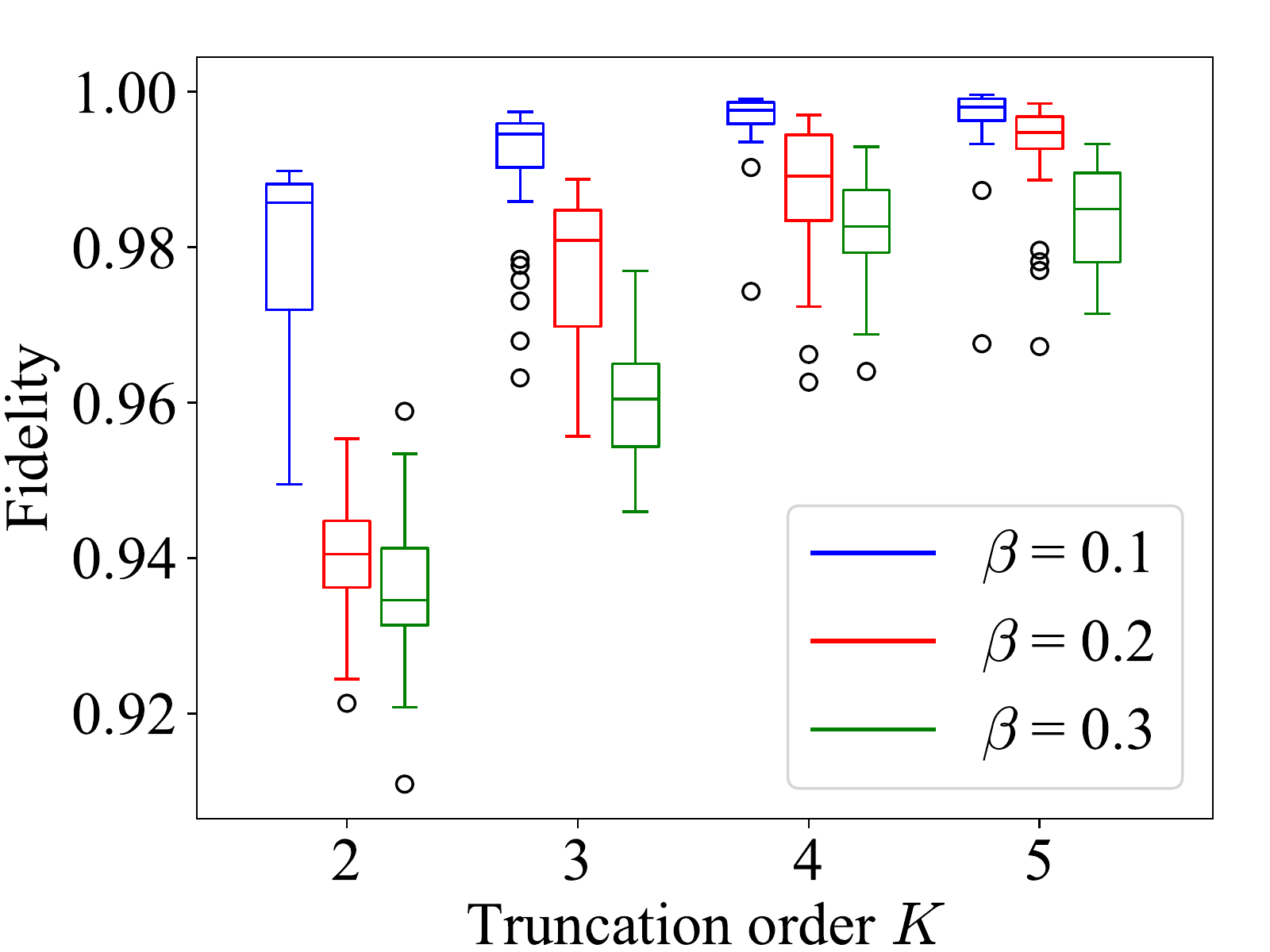}
	\caption{Boxplot of the fidelity vs. the truncation order $K$ with different $\beta$ for the XY spin-1/2 chain model. Here the ansatz is similar to Fig.~\ref{fig:XYspin_ansatz} while $n_A=n_B=3$. Each box consists of 30 runs with different parameter initializations.}
	\label{fig:boxplot_fidelity_order_K}
\end{figure}


\section{Discussions}
\label{SEC:DISCUSSION}
In this work, we provide hardware-efficient variational algorithms for quantum Gibbs state preparation with NISQ devices.
We design loss functions to approximate the free energy of a given hamiltonian by utilizing the truncated Taylor series of the von Neumann entropy. By minimizing the loss functions, the parameterized quantum circuits can be trained to learn the Gibbs state via variational algorithms
since the Gibbs state minimizes free energy. In particular, we show that both the loss functions and their gradients can be evaluated on NISQ devices, thus allowing us to implement the hybrid quantum-classical optimization via either gradient-based or gradient-free optimization methods. Moreover, we show that our method could efficiently prepare the Gibbs states via analytical evidence and numerical experiments.

We further show that our variational algorithms work efficiently for many-body models, including the Ising chain and spin chain. In particular, we show that the preparation of the Ising Gibbs state can be done efficiently and accurately via shallow parameterized quantum circuits with only one parameter and one additional qubit. We expect that our results may shed light on quantum optimization, quantum simulation, and quantum machine learning in the NISQ era.  

\section*{Acknowledgements}
We would like to thank Runyao Duan, Yuan Feng, and Sanjiang Li for helpful discussions.
Y. W. and G. L. contribute equally to this work and acknowledge support from the Baidu-UTS AI Meets Quantum project.  G. L. acknowledges the financial support from China Scholarship Council (No. 201806070139). This work was partly supported by the Australian Research Council (Grant No: DP180100691).

%


\begin{thebibliography}{66}%
\makeatletter
\providecommand \@ifxundefined [1]{%
 \@ifx{#1\undefined}
}%
\providecommand \@ifnum [1]{%
 \ifnum #1\expandafter \@firstoftwo
 \else \expandafter \@secondoftwo
 \fi
}%
\providecommand \@ifx [1]{%
 \ifx #1\expandafter \@firstoftwo
 \else \expandafter \@secondoftwo
 \fi
}%
\providecommand \natexlab [1]{#1}%
\providecommand \enquote  [1]{``#1''}%
\providecommand \bibnamefont  [1]{#1}%
\providecommand \bibfnamefont [1]{#1}%
\providecommand \citenamefont [1]{#1}%
\providecommand \href@noop [0]{\@secondoftwo}%
\providecommand \href [0]{\begingroup \@sanitize@url \@href}%
\providecommand \@href[1]{\@@startlink{#1}\@@href}%
\providecommand \@@href[1]{\endgroup#1\@@endlink}%
\providecommand \@sanitize@url [0]{\catcode `\\12\catcode `\$12\catcode
  `\&12\catcode `\#12\catcode `\^12\catcode `\_12\catcode `\%12\relax}%
\providecommand \@@startlink[1]{}%
\providecommand \@@endlink[0]{}%
\providecommand \url  [0]{\begingroup\@sanitize@url \@url }%
\providecommand \@url [1]{\endgroup\@href {#1}{\urlprefix }}%
\providecommand \urlprefix  [0]{URL }%
\providecommand \Eprint [0]{\href }%
\providecommand \doibase [0]{http://dx.doi.org/}%
\providecommand \selectlanguage [0]{\@gobble}%
\providecommand \bibinfo  [0]{\@secondoftwo}%
\providecommand \bibfield  [0]{\@secondoftwo}%
\providecommand \translation [1]{[#1]}%
\providecommand \BibitemOpen [0]{}%
\providecommand \bibitemStop [0]{}%
\providecommand \bibitemNoStop [0]{.\EOS\space}%
\providecommand \EOS [0]{\spacefactor3000\relax}%
\providecommand \BibitemShut  [1]{\csname bibitem#1\endcsname}%
\let\auto@bib@innerbib\@empty
\bibitem [{\citenamefont {Childs}\ \emph {et~al.}(2018)\citenamefont {Childs},
  \citenamefont {Maslov}, \citenamefont {Nam}, \citenamefont {Ross},\ and\
  \citenamefont {Su}}]{childs2018toward}%
  \BibitemOpen
  \bibfield  {author} {\bibinfo {author} {\bibfnamefont {Andrew~M}\
  \bibnamefont {Childs}}, \bibinfo {author} {\bibfnamefont {Dmitri}\
  \bibnamefont {Maslov}}, \bibinfo {author} {\bibfnamefont {Yunseong}\
  \bibnamefont {Nam}}, \bibinfo {author} {\bibfnamefont {Neil~J}\ \bibnamefont
  {Ross}}, \ and\ \bibinfo {author} {\bibfnamefont {Yuan}\ \bibnamefont {Su}},\
  }\bibfield  {title} {\enquote {\bibinfo {title} {Toward the first quantum
  simulation with quantum speedup},}\ }\href
  {https://www.pnas.org/content/115/38/9456.short} {\bibfield  {journal}
  {\bibinfo  {journal} {Proceedings of the National Academy of Sciences}\
  }\textbf {\bibinfo {volume} {115}},\ \bibinfo {pages} {9456--9461} (\bibinfo
  {year} {2018})}\BibitemShut {NoStop}%
\bibitem [{\citenamefont {Kieferov{\'a}}\ and\ \citenamefont
  {Wiebe}(2017)}]{kieferova2017tomography}%
  \BibitemOpen
  \bibfield  {author} {\bibinfo {author} {\bibfnamefont {M{\'a}ria}\
  \bibnamefont {Kieferov{\'a}}}\ and\ \bibinfo {author} {\bibfnamefont
  {Nathan}\ \bibnamefont {Wiebe}},\ }\bibfield  {title} {\enquote {\bibinfo
  {title} {Tomography and generative training with quantum boltzmann
  machines},}\ }\href
  {https://journals.aps.org/pra/abstract/10.1103/PhysRevA.96.062327} {\bibfield
   {journal} {\bibinfo  {journal} {Physical Review A}\ }\textbf {\bibinfo
  {volume} {96}},\ \bibinfo {pages} {062327} (\bibinfo {year}
  {2017})}\BibitemShut {NoStop}%
\bibitem [{\citenamefont {Biamonte}\ \emph {et~al.}(2017)\citenamefont
  {Biamonte}, \citenamefont {Wittek}, \citenamefont {Pancotti}, \citenamefont
  {Rebentrost}, \citenamefont {Wiebe},\ and\ \citenamefont
  {Lloyd}}]{biamonte2017quantum}%
  \BibitemOpen
  \bibfield  {author} {\bibinfo {author} {\bibfnamefont {Jacob}\ \bibnamefont
  {Biamonte}}, \bibinfo {author} {\bibfnamefont {Peter}\ \bibnamefont
  {Wittek}}, \bibinfo {author} {\bibfnamefont {Nicola}\ \bibnamefont
  {Pancotti}}, \bibinfo {author} {\bibfnamefont {Patrick}\ \bibnamefont
  {Rebentrost}}, \bibinfo {author} {\bibfnamefont {Nathan}\ \bibnamefont
  {Wiebe}}, \ and\ \bibinfo {author} {\bibfnamefont {Seth}\ \bibnamefont
  {Lloyd}},\ }\bibfield  {title} {\enquote {\bibinfo {title} {Quantum machine
  learning},}\ }\href {https://www.nature.com/articles/nature23474} {\bibfield
  {journal} {\bibinfo  {journal} {Nature}\ }\textbf {\bibinfo {volume} {549}},\
  \bibinfo {pages} {195--202} (\bibinfo {year} {2017})}\BibitemShut {NoStop}%
\bibitem [{\citenamefont {Somma}\ \emph {et~al.}(2008)\citenamefont {Somma},
  \citenamefont {Boixo}, \citenamefont {Barnum},\ and\ \citenamefont
  {Knill}}]{somma2008quantum}%
  \BibitemOpen
  \bibfield  {author} {\bibinfo {author} {\bibfnamefont {Rolando~D}\
  \bibnamefont {Somma}}, \bibinfo {author} {\bibfnamefont {Sergio}\
  \bibnamefont {Boixo}}, \bibinfo {author} {\bibfnamefont {Howard}\
  \bibnamefont {Barnum}}, \ and\ \bibinfo {author} {\bibfnamefont {Emanuel}\
  \bibnamefont {Knill}},\ }\bibfield  {title} {\enquote {\bibinfo {title}
  {Quantum simulations of classical annealing processes},}\ }\href
  {https://journals.aps.org/prl/abstract/10.1103/PhysRevLett.101.130504}
  {\bibfield  {journal} {\bibinfo  {journal} {Physical review letters}\
  }\textbf {\bibinfo {volume} {101}},\ \bibinfo {pages} {130504} (\bibinfo
  {year} {2008})}\BibitemShut {NoStop}%
\bibitem [{\citenamefont {Brandao}\ and\ \citenamefont
  {Svore}(2017)}]{Brandao2017}%
  \BibitemOpen
  \bibfield  {author} {\bibinfo {author} {\bibfnamefont {Fernando~G.S.L.}\
  \bibnamefont {Brandao}}\ and\ \bibinfo {author} {\bibfnamefont {Krysta~M.}\
  \bibnamefont {Svore}},\ }\bibfield  {title} {\enquote {\bibinfo {title}
  {{Quantum Speed-Ups for Solving Semidefinite Programs}},}\ }in\ \href
  {\doibase 10.1109/FOCS.2017.45} {\emph {\bibinfo {booktitle} {2017 IEEE 58th
  Annual Symposium on Foundations of Computer Science (FOCS)}}}\ (\bibinfo
  {publisher} {IEEE},\ \bibinfo {year} {2017})\ pp.\ \bibinfo {pages}
  {415--426},\ \Eprint {http://arxiv.org/abs/1609.05537} {arXiv:1609.05537}
  \BibitemShut {NoStop}%
\bibitem [{\citenamefont {Kieferov{\'{a}}}\ and\ \citenamefont
  {Wiebe}(2017)}]{Kieferova2016}%
  \BibitemOpen
  \bibfield  {author} {\bibinfo {author} {\bibfnamefont {M{\'{a}}ria}\
  \bibnamefont {Kieferov{\'{a}}}}\ and\ \bibinfo {author} {\bibfnamefont
  {Nathan}\ \bibnamefont {Wiebe}},\ }\bibfield  {title} {\enquote {\bibinfo
  {title} {{Tomography and generative training with quantum Boltzmann
  machines}},}\ }\href {\doibase 10.1103/PhysRevA.96.062327} {\bibfield
  {journal} {\bibinfo  {journal} {Physical Review A}\ }\textbf {\bibinfo
  {volume} {96}},\ \bibinfo {pages} {062327} (\bibinfo {year} {2017})},\
  \Eprint {http://arxiv.org/abs/1612.05204} {arXiv:1612.05204} \BibitemShut
  {NoStop}%
\bibitem [{\citenamefont {Watrous}(2012)}]{Watrous2012a}%
  \BibitemOpen
  \bibfield  {author} {\bibinfo {author} {\bibfnamefont {John}\ \bibnamefont
  {Watrous}},\ }\bibfield  {title} {\enquote {\bibinfo {title} {{Quantum
  Computational Complexity}},}\ }in\ \href {\doibase
  10.1007/978-1-4614-1800-9_147} {\emph {\bibinfo {booktitle} {Computational
  Complexity}}},\ Vol.\ \bibinfo {volume} {9781461418}\ (\bibinfo  {publisher}
  {Springer New York},\ \bibinfo {address} {New York, NY},\ \bibinfo {year}
  {2012})\ pp.\ \bibinfo {pages} {2361--2387},\ \Eprint
  {http://arxiv.org/abs/0804.3401} {arXiv:0804.3401} \BibitemShut {NoStop}%
\bibitem [{\citenamefont {Aharonov}\ \emph {et~al.}(2013)\citenamefont
  {Aharonov}, \citenamefont {Arad},\ and\ \citenamefont
  {Vidick}}]{aharonov2013quantum}%
  \BibitemOpen
  \bibfield  {author} {\bibinfo {author} {\bibfnamefont {Dorit}\ \bibnamefont
  {Aharonov}}, \bibinfo {author} {\bibfnamefont {Itai}\ \bibnamefont {Arad}}, \
  and\ \bibinfo {author} {\bibfnamefont {Thomas}\ \bibnamefont {Vidick}},\
  }\bibfield  {title} {\enquote {\bibinfo {title} {Guest column: the quantum
  pcp conjecture},}\ }\href
  {https://dl.acm.org/doi/abs/10.1145/2491533.2491549?casa_token=jtL2h1MMvVAAAAAA:t_mcP21MzsaIF5t1UGBnL0V77RSl8Mhj7DlcN6Mh32DR16k0PTc8Ys02YENqLRYKErhjn59sWw0NtuU}
  {\bibfield  {journal} {\bibinfo  {journal} {Acm sigact news}\ }\textbf
  {\bibinfo {volume} {44}},\ \bibinfo {pages} {47--79} (\bibinfo {year}
  {2013})}\BibitemShut {NoStop}%
\bibitem [{\citenamefont {Terhal}\ and\ \citenamefont
  {DiVincenzo}(2000)}]{terhal2000problem}%
  \BibitemOpen
  \bibfield  {author} {\bibinfo {author} {\bibfnamefont {Barbara~M}\
  \bibnamefont {Terhal}}\ and\ \bibinfo {author} {\bibfnamefont {David~P}\
  \bibnamefont {DiVincenzo}},\ }\bibfield  {title} {\enquote {\bibinfo {title}
  {Problem of equilibration and the computation of correlation functions on a
  quantum computer},}\ }\href
  {https://journals.aps.org/pra/abstract/10.1103/PhysRevA.61.022301} {\bibfield
   {journal} {\bibinfo  {journal} {Physical Review A}\ }\textbf {\bibinfo
  {volume} {61}},\ \bibinfo {pages} {022301} (\bibinfo {year}
  {2000})}\BibitemShut {NoStop}%
\bibitem [{\citenamefont {Poulin}\ and\ \citenamefont
  {Wocjan}(2009)}]{poulin2009sampling}%
  \BibitemOpen
  \bibfield  {author} {\bibinfo {author} {\bibfnamefont {David}\ \bibnamefont
  {Poulin}}\ and\ \bibinfo {author} {\bibfnamefont {Pawel}\ \bibnamefont
  {Wocjan}},\ }\bibfield  {title} {\enquote {\bibinfo {title} {Sampling from
  the thermal quantum gibbs state and evaluating partition functions with a
  quantum computer},}\ }\href
  {https://journals.aps.org/prl/abstract/10.1103/PhysRevLett.103.220502}
  {\bibfield  {journal} {\bibinfo  {journal} {Physical review letters}\
  }\textbf {\bibinfo {volume} {103}},\ \bibinfo {pages} {220502} (\bibinfo
  {year} {2009})}\BibitemShut {NoStop}%
\bibitem [{\citenamefont {Temme}\ \emph {et~al.}(2011)\citenamefont {Temme},
  \citenamefont {Osborne}, \citenamefont {Vollbrecht}, \citenamefont {Poulin},\
  and\ \citenamefont {Verstraete}}]{temme2011quantum}%
  \BibitemOpen
  \bibfield  {author} {\bibinfo {author} {\bibfnamefont {Kristan}\ \bibnamefont
  {Temme}}, \bibinfo {author} {\bibfnamefont {Tobias~J}\ \bibnamefont
  {Osborne}}, \bibinfo {author} {\bibfnamefont {Karl~G}\ \bibnamefont
  {Vollbrecht}}, \bibinfo {author} {\bibfnamefont {David}\ \bibnamefont
  {Poulin}}, \ and\ \bibinfo {author} {\bibfnamefont {Frank}\ \bibnamefont
  {Verstraete}},\ }\bibfield  {title} {\enquote {\bibinfo {title} {Quantum
  metropolis sampling},}\ }\href {https://arxiv.org/abs/0911.3635} {\bibfield
  {journal} {\bibinfo  {journal} {Nature}\ }\textbf {\bibinfo {volume} {471}},\
  \bibinfo {pages} {87--90} (\bibinfo {year} {2011})}\BibitemShut {NoStop}%
\bibitem [{\citenamefont {Kastoryano}\ and\ \citenamefont
  {Brandao}(2016)}]{kastoryano2016quantum}%
  \BibitemOpen
  \bibfield  {author} {\bibinfo {author} {\bibfnamefont {Michael~J}\
  \bibnamefont {Kastoryano}}\ and\ \bibinfo {author} {\bibfnamefont
  {Fernando~GSL}\ \bibnamefont {Brandao}},\ }\bibfield  {title} {\enquote
  {\bibinfo {title} {Quantum gibbs samplers: the commuting case},}\ }\href
  {https://arxiv.org/abs/1409.3435} {\bibfield  {journal} {\bibinfo  {journal}
  {Communications in Mathematical Physics}\ }\textbf {\bibinfo {volume}
  {344}},\ \bibinfo {pages} {915--957} (\bibinfo {year} {2016})}\BibitemShut
  {NoStop}%
\bibitem [{\citenamefont {Brand{\~a}o}\ and\ \citenamefont
  {Kastoryano}(2019)}]{brandao2019finite}%
  \BibitemOpen
  \bibfield  {author} {\bibinfo {author} {\bibfnamefont {Fernando~GSL}\
  \bibnamefont {Brand{\~a}o}}\ and\ \bibinfo {author} {\bibfnamefont
  {Michael~J}\ \bibnamefont {Kastoryano}},\ }\bibfield  {title} {\enquote
  {\bibinfo {title} {Finite correlation length implies efficient preparation of
  quantum thermal states},}\ }\href {https://arxiv.org/abs/1609.07877}
  {\bibfield  {journal} {\bibinfo  {journal} {Communications in Mathematical
  Physics}\ }\textbf {\bibinfo {volume} {365}},\ \bibinfo {pages} {1--16}
  (\bibinfo {year} {2019})}\BibitemShut {NoStop}%
\bibitem [{\citenamefont {Wiebe}\ \emph {et~al.}(2016)\citenamefont {Wiebe},
  \citenamefont {Kapoor},\ and\ \citenamefont {Svore}}]{Wiebe2016}%
  \BibitemOpen
  \bibfield  {author} {\bibinfo {author} {\bibfnamefont {Nathan}\ \bibnamefont
  {Wiebe}}, \bibinfo {author} {\bibfnamefont {Ashish}\ \bibnamefont {Kapoor}},
  \ and\ \bibinfo {author} {\bibfnamefont {Krysta~M}\ \bibnamefont {Svore}},\
  }\bibfield  {title} {\enquote {\bibinfo {title} {{Quantum deep learning}},}\
  }\href {https://dl.acm.org/doi/abs/10.5555/3179466.3179467} {\bibfield
  {journal} {\bibinfo  {journal} {Quantum Information and Computation}\
  }\textbf {\bibinfo {volume} {16}},\ \bibinfo {pages} {541--587} (\bibinfo
  {year} {2016})}\BibitemShut {NoStop}%
\bibitem [{\citenamefont {Yung}\ and\ \citenamefont
  {Aspuru-Guzik}(2012)}]{Yung2012}%
  \BibitemOpen
  \bibfield  {author} {\bibinfo {author} {\bibfnamefont {M.-H.}\ \bibnamefont
  {Yung}}\ and\ \bibinfo {author} {\bibfnamefont {Al{\'{a}}n}\ \bibnamefont
  {Aspuru-Guzik}},\ }\bibfield  {title} {\enquote {\bibinfo {title} {{A
  quantum-quantum Metropolis algorithm}},}\ }\href {\doibase
  10.1073/pnas.1111758109} {\bibfield  {journal} {\bibinfo  {journal}
  {Proceedings of the National Academy of Sciences}\ }\textbf {\bibinfo
  {volume} {109}},\ \bibinfo {pages} {754--759} (\bibinfo {year}
  {2012})}\BibitemShut {NoStop}%
\bibitem [{\citenamefont {Kaplan}\ \emph {et~al.}(2017)\citenamefont {Kaplan},
  \citenamefont {Klco},\ and\ \citenamefont {Roggero}}]{Kaplan2017}%
  \BibitemOpen
  \bibfield  {author} {\bibinfo {author} {\bibfnamefont {David~B}\ \bibnamefont
  {Kaplan}}, \bibinfo {author} {\bibfnamefont {Natalie}\ \bibnamefont {Klco}},
  \ and\ \bibinfo {author} {\bibfnamefont {Alessandro}\ \bibnamefont
  {Roggero}},\ }\bibfield  {title} {\enquote {\bibinfo {title} {{Ground States
  via Spectral Combing on a Quantum Computer}},}\ }\href
  {http://arxiv.org/abs/1709.08250} {\bibfield  {journal} {\bibinfo  {journal}
  {arXiv:1709.08250}\ } (\bibinfo {year} {2017})},\ \Eprint
  {http://arxiv.org/abs/1709.08250} {arXiv:1709.08250} \BibitemShut {NoStop}%
\bibitem [{\citenamefont {Riera}\ \emph {et~al.}(2012)\citenamefont {Riera},
  \citenamefont {Gogolin},\ and\ \citenamefont {Eisert}}]{Riera2012}%
  \BibitemOpen
  \bibfield  {author} {\bibinfo {author} {\bibfnamefont {Arnau}\ \bibnamefont
  {Riera}}, \bibinfo {author} {\bibfnamefont {Christian}\ \bibnamefont
  {Gogolin}}, \ and\ \bibinfo {author} {\bibfnamefont {Jens}\ \bibnamefont
  {Eisert}},\ }\bibfield  {title} {\enquote {\bibinfo {title} {{Thermalization
  in Nature and on a Quantum Computer}},}\ }\href {\doibase
  10.1103/PhysRevLett.108.080402} {\bibfield  {journal} {\bibinfo  {journal}
  {Physical Review Letters}\ }\textbf {\bibinfo {volume} {108}},\ \bibinfo
  {pages} {080402} (\bibinfo {year} {2012})}\BibitemShut {NoStop}%
\bibitem [{\citenamefont {Motta}\ \emph {et~al.}(2020)\citenamefont {Motta},
  \citenamefont {Sun}, \citenamefont {Tan}, \citenamefont {O'Rourke},
  \citenamefont {Ye}, \citenamefont {Minnich}, \citenamefont {Brand{\~{a}}o},\
  and\ \citenamefont {Chan}}]{Motta2019}%
  \BibitemOpen
  \bibfield  {author} {\bibinfo {author} {\bibfnamefont {Mario}\ \bibnamefont
  {Motta}}, \bibinfo {author} {\bibfnamefont {Chong}\ \bibnamefont {Sun}},
  \bibinfo {author} {\bibfnamefont {Adrian Teck~Keng}\ \bibnamefont {Tan}},
  \bibinfo {author} {\bibfnamefont {Matthew~J.}\ \bibnamefont {O'Rourke}},
  \bibinfo {author} {\bibfnamefont {Erika}\ \bibnamefont {Ye}}, \bibinfo
  {author} {\bibfnamefont {Austin~J}\ \bibnamefont {Minnich}}, \bibinfo
  {author} {\bibfnamefont {Fernando G. S.~L.}\ \bibnamefont {Brand{\~{a}}o}}, \
  and\ \bibinfo {author} {\bibfnamefont {Garnet Kin-Lic}\ \bibnamefont
  {Chan}},\ }\bibfield  {title} {\enquote {\bibinfo {title} {{Determining
  eigenstates and thermal states on a quantum computer using quantum imaginary
  time evolution}},}\ }\href {\doibase 10.1038/s41567-019-0704-4} {\bibfield
  {journal} {\bibinfo  {journal} {Nature Physics}\ }\textbf {\bibinfo {volume}
  {16}},\ \bibinfo {pages} {205--210} (\bibinfo {year} {2020})},\ \Eprint
  {http://arxiv.org/abs/1901.07653} {arXiv:1901.07653} \BibitemShut {NoStop}%
\bibitem [{\citenamefont {Bilgin}\ and\ \citenamefont
  {Boixo}(2010)}]{Bilgin2010a}%
  \BibitemOpen
  \bibfield  {author} {\bibinfo {author} {\bibfnamefont {Ersen}\ \bibnamefont
  {Bilgin}}\ and\ \bibinfo {author} {\bibfnamefont {Sergio}\ \bibnamefont
  {Boixo}},\ }\bibfield  {title} {\enquote {\bibinfo {title} {{Preparing
  Thermal States of Quantum Systems by Dimension Reduction}},}\ }\href
  {\doibase 10.1103/PhysRevLett.105.170405} {\bibfield  {journal} {\bibinfo
  {journal} {Physical Review Letters}\ }\textbf {\bibinfo {volume} {105}},\
  \bibinfo {pages} {170405} (\bibinfo {year} {2010})},\ \Eprint
  {http://arxiv.org/abs/1008.4162} {arXiv:1008.4162} \BibitemShut {NoStop}%
\bibitem [{\citenamefont {Van~Apeldoorn}\ \emph {et~al.}(2020)\citenamefont
  {Van~Apeldoorn}, \citenamefont {Gily{\'e}n}, \citenamefont {Gribling},\ and\
  \citenamefont {de~Wolf}}]{van2020quantum}%
  \BibitemOpen
  \bibfield  {author} {\bibinfo {author} {\bibfnamefont {Joran}\ \bibnamefont
  {Van~Apeldoorn}}, \bibinfo {author} {\bibfnamefont {Andr{\'a}s}\ \bibnamefont
  {Gily{\'e}n}}, \bibinfo {author} {\bibfnamefont {Sander}\ \bibnamefont
  {Gribling}}, \ and\ \bibinfo {author} {\bibfnamefont {Ronald}\ \bibnamefont
  {de~Wolf}},\ }\bibfield  {title} {\enquote {\bibinfo {title} {Quantum
  sdp-solvers: Better upper and lower bounds},}\ }\href
  {https://quantum-journal.org/papers/q-2020-02-14-230/} {\bibfield  {journal}
  {\bibinfo  {journal} {Quantum}\ }\textbf {\bibinfo {volume} {4}},\ \bibinfo
  {pages} {230} (\bibinfo {year} {2020})}\BibitemShut {NoStop}%
\bibitem [{\citenamefont {McClean}\ \emph {et~al.}(2016)\citenamefont
  {McClean}, \citenamefont {Romero}, \citenamefont {Babbush},\ and\
  \citenamefont {Aspuru-Guzik}}]{mcclean2016theory}%
  \BibitemOpen
  \bibfield  {author} {\bibinfo {author} {\bibfnamefont {Jarrod~R}\
  \bibnamefont {McClean}}, \bibinfo {author} {\bibfnamefont {Jonathan}\
  \bibnamefont {Romero}}, \bibinfo {author} {\bibfnamefont {Ryan}\ \bibnamefont
  {Babbush}}, \ and\ \bibinfo {author} {\bibfnamefont {Al{\'a}n}\ \bibnamefont
  {Aspuru-Guzik}},\ }\bibfield  {title} {\enquote {\bibinfo {title} {The theory
  of variational hybrid quantum-classical algorithms},}\ }\href
  {https://iopscience.iop.org/article/10.1088/1367-2630/18/2/023023} {\bibfield
   {journal} {\bibinfo  {journal} {New Journal of Physics}\ }\textbf {\bibinfo
  {volume} {18}},\ \bibinfo {pages} {023023} (\bibinfo {year}
  {2016})}\BibitemShut {NoStop}%
\bibitem [{\citenamefont {Xu}\ \emph {et~al.}(2021)\citenamefont {Xu},
  \citenamefont {Sun}, \citenamefont {Endo}, \citenamefont {Li}, \citenamefont
  {Benjamin},\ and\ \citenamefont {Yuan}}]{xu2021variational}%
  \BibitemOpen
  \bibfield  {author} {\bibinfo {author} {\bibfnamefont {Xiaosi}\ \bibnamefont
  {Xu}}, \bibinfo {author} {\bibfnamefont {Jinzhao}\ \bibnamefont {Sun}},
  \bibinfo {author} {\bibfnamefont {Suguru}\ \bibnamefont {Endo}}, \bibinfo
  {author} {\bibfnamefont {Ying}\ \bibnamefont {Li}}, \bibinfo {author}
  {\bibfnamefont {Simon~C}\ \bibnamefont {Benjamin}}, \ and\ \bibinfo {author}
  {\bibfnamefont {Xiao}\ \bibnamefont {Yuan}},\ }\bibfield  {title} {\enquote
  {\bibinfo {title} {Variational algorithms for linear algebra},}\ }\href
  {https://www.sciencedirect.com/science/article/pii/S2095927321004631}
  {\bibfield  {journal} {\bibinfo  {journal} {Science Bulletin}\ }\textbf
  {\bibinfo {volume} {66}},\ \bibinfo {pages} {2181--2188} (\bibinfo {year}
  {2021})}\BibitemShut {NoStop}%
\bibitem [{\citenamefont {Bravo-Prieto}\ \emph {et~al.}(2020)\citenamefont
  {Bravo-Prieto}, \citenamefont {LaRose}, \citenamefont {Cerezo}, \citenamefont
  {Subasi}, \citenamefont {Cincio},\ and\ \citenamefont
  {Coles}}]{bravo2020variational}%
  \BibitemOpen
  \bibfield  {author} {\bibinfo {author} {\bibfnamefont {Carlos}\ \bibnamefont
  {Bravo-Prieto}}, \bibinfo {author} {\bibfnamefont {Ryan}\ \bibnamefont
  {LaRose}}, \bibinfo {author} {\bibfnamefont {Marco}\ \bibnamefont {Cerezo}},
  \bibinfo {author} {\bibfnamefont {Yigit}\ \bibnamefont {Subasi}}, \bibinfo
  {author} {\bibfnamefont {Lukasz}\ \bibnamefont {Cincio}}, \ and\ \bibinfo
  {author} {\bibfnamefont {Patrick}\ \bibnamefont {Coles}},\ }\bibfield
  {title} {\enquote {\bibinfo {title} {Variational quantum linear solver: A
  hybrid algorithm for linear systems},}\ }\href
  {https://meetings.aps.org/Meeting/MAR20/Session/F07.5} {\bibfield  {journal}
  {\bibinfo  {journal} {Bulletin of the American Physical Society}\ }\textbf
  {\bibinfo {volume} {65}} (\bibinfo {year} {2020})}\BibitemShut {NoStop}%
\bibitem [{\citenamefont {Huang}\ \emph {et~al.}(2019)\citenamefont {Huang},
  \citenamefont {Bharti},\ and\ \citenamefont {Rebentrost}}]{huang2019near}%
  \BibitemOpen
  \bibfield  {author} {\bibinfo {author} {\bibfnamefont {Hsin-Yuan}\
  \bibnamefont {Huang}}, \bibinfo {author} {\bibfnamefont {Kishor}\
  \bibnamefont {Bharti}}, \ and\ \bibinfo {author} {\bibfnamefont {Patrick}\
  \bibnamefont {Rebentrost}},\ }\bibfield  {title} {\enquote {\bibinfo {title}
  {Near-term quantum algorithms for linear systems of equations},}\ }\href
  {https://arxiv.org/abs/1909.07344} {\bibfield  {journal} {\bibinfo  {journal}
  {arXiv preprint arXiv:1909.07344}\ } (\bibinfo {year} {2019})}\BibitemShut
  {NoStop}%
\bibitem [{\citenamefont {LaRose}\ \emph {et~al.}(2019)\citenamefont {LaRose},
  \citenamefont {Tikku}, \citenamefont {O’Neel-Judy}, \citenamefont
  {Cincio},\ and\ \citenamefont {Coles}}]{larose2019variational}%
  \BibitemOpen
  \bibfield  {author} {\bibinfo {author} {\bibfnamefont {Ryan}\ \bibnamefont
  {LaRose}}, \bibinfo {author} {\bibfnamefont {Arkin}\ \bibnamefont {Tikku}},
  \bibinfo {author} {\bibfnamefont {{\'E}tude}\ \bibnamefont {O’Neel-Judy}},
  \bibinfo {author} {\bibfnamefont {Lukasz}\ \bibnamefont {Cincio}}, \ and\
  \bibinfo {author} {\bibfnamefont {Patrick~J}\ \bibnamefont {Coles}},\
  }\bibfield  {title} {\enquote {\bibinfo {title} {Variational quantum state
  diagonalization},}\ }\href
  {https://www.nature.com/articles/s41534-019-0167-6} {\bibfield  {journal}
  {\bibinfo  {journal} {npj Quantum Information}\ }\textbf {\bibinfo {volume}
  {5}},\ \bibinfo {pages} {1--10} (\bibinfo {year} {2019})}\BibitemShut
  {NoStop}%
\bibitem [{\citenamefont {Cerezo}\ \emph {et~al.}(2020)\citenamefont {Cerezo},
  \citenamefont {Sharma}, \citenamefont {Arrasmith},\ and\ \citenamefont
  {Coles}}]{Cerezoa}%
  \BibitemOpen
  \bibfield  {author} {\bibinfo {author} {\bibfnamefont {M}~\bibnamefont
  {Cerezo}}, \bibinfo {author} {\bibfnamefont {Kunal}\ \bibnamefont {Sharma}},
  \bibinfo {author} {\bibfnamefont {Andrew}\ \bibnamefont {Arrasmith}}, \ and\
  \bibinfo {author} {\bibfnamefont {Patrick~J}\ \bibnamefont {Coles}},\
  }\bibfield  {title} {\enquote {\bibinfo {title} {{Variational Quantum State
  Eigensolver}},}\ }\href {http://arxiv.org/abs/2004.01372} {\ ,\ \bibinfo
  {pages} {1--14} (\bibinfo {year} {2020})},\ \Eprint
  {http://arxiv.org/abs/2004.01372} {arXiv:2004.01372} \BibitemShut {NoStop}%
\bibitem [{\citenamefont {Wang}\ \emph
  {et~al.}(2021{\natexlab{a}})\citenamefont {Wang}, \citenamefont {Li},\ and\
  \citenamefont {Wang}}]{Wang2021}%
  \BibitemOpen
  \bibfield  {author} {\bibinfo {author} {\bibfnamefont {Youle}\ \bibnamefont
  {Wang}}, \bibinfo {author} {\bibfnamefont {Guangxi}\ \bibnamefont {Li}}, \
  and\ \bibinfo {author} {\bibfnamefont {Xin}\ \bibnamefont {Wang}},\
  }\bibfield  {title} {\enquote {\bibinfo {title} {{A Hybrid Quantum-Classical
  Hamiltonian Learning Algorithm}},}\ }\href {http://arxiv.org/abs/2103.01061},\ \Eprint
  {http://arxiv.org/abs/2103.01061} {arXiv:2103.01061} \BibitemShut {NoStop}%
\bibitem [{\citenamefont {Peruzzo}\ \emph {et~al.}(2014)\citenamefont
  {Peruzzo}, \citenamefont {McClean}, \citenamefont {Shadbolt}, \citenamefont
  {Yung}, \citenamefont {Zhou}, \citenamefont {Love}, \citenamefont
  {Aspuru-Guzik},\ and\ \citenamefont {O’brien}}]{peruzzo2014variational}%
  \BibitemOpen
  \bibfield  {author} {\bibinfo {author} {\bibfnamefont {Alberto}\ \bibnamefont
  {Peruzzo}}, \bibinfo {author} {\bibfnamefont {Jarrod}\ \bibnamefont
  {McClean}}, \bibinfo {author} {\bibfnamefont {Peter}\ \bibnamefont
  {Shadbolt}}, \bibinfo {author} {\bibfnamefont {Man-Hong}\ \bibnamefont
  {Yung}}, \bibinfo {author} {\bibfnamefont {Xiao-Qi}\ \bibnamefont {Zhou}},
  \bibinfo {author} {\bibfnamefont {Peter~J}\ \bibnamefont {Love}}, \bibinfo
  {author} {\bibfnamefont {Al{\'a}n}\ \bibnamefont {Aspuru-Guzik}}, \ and\
  \bibinfo {author} {\bibfnamefont {Jeremy~L}\ \bibnamefont {O’brien}},\
  }\bibfield  {title} {\enquote {\bibinfo {title} {A variational eigenvalue
  solver on a photonic quantum processor},}\ }\href
  {https://www.nature.com/articles/ncomms5213?source=post_page} {\bibfield
  {journal} {\bibinfo  {journal} {Nature communications}\ }\textbf {\bibinfo
  {volume} {5}},\ \bibinfo {pages} {4213} (\bibinfo {year} {2014})}\BibitemShut
  {NoStop}%
\bibitem [{\citenamefont {Nakanishi}\ \emph {et~al.}(2019)\citenamefont
  {Nakanishi}, \citenamefont {Mitarai},\ and\ \citenamefont
  {Fujii}}]{nakanishi2019subspace}%
  \BibitemOpen
  \bibfield  {author} {\bibinfo {author} {\bibfnamefont {Ken~M}\ \bibnamefont
  {Nakanishi}}, \bibinfo {author} {\bibfnamefont {Kosuke}\ \bibnamefont
  {Mitarai}}, \ and\ \bibinfo {author} {\bibfnamefont {Keisuke}\ \bibnamefont
  {Fujii}},\ }\bibfield  {title} {\enquote {\bibinfo {title} {Subspace-search
  variational quantum eigensolver for excited states},}\ }\href
  {https://journals.aps.org/prresearch/abstract/10.1103/PhysRevResearch.1.033062}
  {\bibfield  {journal} {\bibinfo  {journal} {Physical Review Research}\
  }\textbf {\bibinfo {volume} {1}},\ \bibinfo {pages} {033062} (\bibinfo {year}
  {2019})}\BibitemShut {NoStop}%
\bibitem [{\citenamefont {Wang}\ \emph
  {et~al.}(2021{\natexlab{b}})\citenamefont {Wang}, \citenamefont {Song},\ and\
  \citenamefont {Wang}}]{Wang2020a}%
  \BibitemOpen
  \bibfield  {author} {\bibinfo {author} {\bibfnamefont {Xin}\ \bibnamefont
  {Wang}}, \bibinfo {author} {\bibfnamefont {Zhixin}\ \bibnamefont {Song}}, \
  and\ \bibinfo {author} {\bibfnamefont {Youle}\ \bibnamefont {Wang}},\
  }\bibfield  {title} {\enquote {\bibinfo {title} {Variational quantum singular
  value decomposition},}\ }\href
  {https://quantum-journal.org/papers/q-2021-06-29-483/} {\bibfield  {journal}
  {\bibinfo  {journal} {Quantum}\ }\textbf {\bibinfo {volume} {5}},\ \bibinfo
  {pages} {483} (\bibinfo {year} {2021}{\natexlab{b}})}\BibitemShut {NoStop}%
\bibitem [{\citenamefont {Sharma}\ \emph {et~al.}(2020)\citenamefont {Sharma},
  \citenamefont {Khatri}, \citenamefont {Cerezo},\ and\ \citenamefont
  {Coles}}]{Sharma2020}%
  \BibitemOpen
  \bibfield  {author} {\bibinfo {author} {\bibfnamefont {Kunal}\ \bibnamefont
  {Sharma}}, \bibinfo {author} {\bibfnamefont {Sumeet}\ \bibnamefont {Khatri}},
  \bibinfo {author} {\bibfnamefont {M.}~\bibnamefont {Cerezo}}, \ and\ \bibinfo
  {author} {\bibfnamefont {Patrick~J.}\ \bibnamefont {Coles}},\ }\bibfield
  {title} {\enquote {\bibinfo {title} {{Noise resilience of variational quantum
  compiling}},}\ }\href {\doibase 10.1088/1367-2630/ab784c} {\bibfield
  {journal} {\bibinfo  {journal} {New Journal of Physics}\ }\textbf {\bibinfo
  {volume} {22}},\ \bibinfo {pages} {043006} (\bibinfo {year} {2020})},\
  \Eprint {http://arxiv.org/abs/1908.04416} {arXiv:1908.04416} \BibitemShut
  {NoStop}%
\bibitem [{\citenamefont {Li}\ \emph {et~al.}(2021)\citenamefont {Li},
  \citenamefont {Song},\ and\ \citenamefont {Wang}}]{Li2021}%
  \BibitemOpen
  \bibfield  {author} {\bibinfo {author} {\bibfnamefont {G.}~\bibnamefont
  {Li}}, \bibinfo {author} {\bibfnamefont {Z.}~\bibnamefont {Song}}, \ and\
  \bibinfo {author} {\bibfnamefont {X.}~\bibnamefont {Wang}},\ }\href@noop {}
  {\enquote {\bibinfo {title} {{VSQL: Variational shadow quantum learning for
  classification}},}\ } (\bibinfo {year} {2021})\BibitemShut {NoStop}%
\bibitem [{\citenamefont {Chen}\ \emph {et~al.}(2020)\citenamefont {Chen},
  \citenamefont {Song}, \citenamefont {Zhao},\ and\ \citenamefont
  {Wang}}]{Chen2020a}%
  \BibitemOpen
  \bibfield  {author} {\bibinfo {author} {\bibfnamefont {Ranyiliu}\
  \bibnamefont {Chen}}, \bibinfo {author} {\bibfnamefont {Zhixin}\ \bibnamefont
  {Song}}, \bibinfo {author} {\bibfnamefont {Xuanqiang}\ \bibnamefont {Zhao}},
  \ and\ \bibinfo {author} {\bibfnamefont {Xin}\ \bibnamefont {Wang}},\
  }\bibfield  {title} {\enquote {\bibinfo {title} {{Variational Quantum
  Algorithms for Trace Distance and Fidelity Estimation}},}\ }\href
  {http://arxiv.org/abs/2012.05768} {\bibfield  {journal} {\bibinfo  {journal}
  {arXiv:2012.05768}\ ,\ \bibinfo {pages} {1--13}} (\bibinfo {year} {2020})} \BibitemShut
  {NoStop}%
\bibitem [{\citenamefont {Islam}\ \emph {et~al.}(2015)\citenamefont {Islam},
  \citenamefont {Ma}, \citenamefont {Preiss}, \citenamefont {Tai},
  \citenamefont {Lukin}, \citenamefont {Rispoli},\ and\ \citenamefont
  {Greiner}}]{Islam2015}%
  \BibitemOpen
  \bibfield  {author} {\bibinfo {author} {\bibfnamefont {Rajibul}\ \bibnamefont
  {Islam}}, \bibinfo {author} {\bibfnamefont {Ruichao}\ \bibnamefont {Ma}},
  \bibinfo {author} {\bibfnamefont {Philipp~M}\ \bibnamefont {Preiss}},
  \bibinfo {author} {\bibfnamefont {M~Eric}\ \bibnamefont {Tai}}, \bibinfo
  {author} {\bibfnamefont {Alexander}\ \bibnamefont {Lukin}}, \bibinfo {author}
  {\bibfnamefont {Matthew}\ \bibnamefont {Rispoli}}, \ and\ \bibinfo {author}
  {\bibfnamefont {Markus}\ \bibnamefont {Greiner}},\ }\bibfield  {title}
  {\enquote {\bibinfo {title} {{Measuring entanglement entropy in a quantum
  many-body system}},}\ }\href {\doibase 10.1038/nature15750} {\  (\bibinfo
  {year} {2015}),\ 10.1038/nature15750}\BibitemShut {NoStop}%
\bibitem [{\citenamefont {Verdon}\ \emph {et~al.}(2020)\citenamefont {Verdon},
  \citenamefont {Marks}, \citenamefont {Nanda}, \citenamefont {Leichenauer},\
  and\ \citenamefont {Hidary}}]{verdon2019quantum}%
  \BibitemOpen
  \bibfield  {author} {\bibinfo {author} {\bibfnamefont {Guillaume}\
  \bibnamefont {Verdon}}, \bibinfo {author} {\bibfnamefont {Jacob}\
  \bibnamefont {Marks}}, \bibinfo {author} {\bibfnamefont {Sasha}\ \bibnamefont
  {Nanda}}, \bibinfo {author} {\bibfnamefont {Stefan}\ \bibnamefont
  {Leichenauer}}, \ and\ \bibinfo {author} {\bibfnamefont {Jack}\ \bibnamefont
  {Hidary}},\ }\bibfield  {title} {\enquote {\bibinfo {title} {Quantum
  hamiltonian-based models and the variational quantum thermalizer
  algorithm},}\ }\href {https://meetings.aps.org/Meeting/MAR20/Session/A17.3}
  {\bibfield  {journal} {\bibinfo  {journal} {Bulletin of the American Physical
  Society}\ }\textbf {\bibinfo {volume} {65}} (\bibinfo {year}
  {2020})}\BibitemShut {NoStop}%
\bibitem [{\citenamefont {Wu}\ and\ \citenamefont {Hsieh}(2019)}]{Wu2019b}%
  \BibitemOpen
  \bibfield  {author} {\bibinfo {author} {\bibfnamefont {Jingxiang}\
  \bibnamefont {Wu}}\ and\ \bibinfo {author} {\bibfnamefont {Timothy~H.}\
  \bibnamefont {Hsieh}},\ }\bibfield  {title} {\enquote {\bibinfo {title}
  {{Variational Thermal Quantum Simulation via Thermofield Double States}},}\
  }\href {\doibase 10.1103/PhysRevLett.123.220502} {\bibfield  {journal}
  {\bibinfo  {journal} {Physical Review Letters}\ }\textbf {\bibinfo {volume}
  {123}},\ \bibinfo {pages} {220502} (\bibinfo {year} {2019})},\ \Eprint
  {http://arxiv.org/abs/1811.11756} {arXiv:1811.11756} \BibitemShut {NoStop}%
\bibitem [{\citenamefont {Martyn}\ and\ \citenamefont
  {Swingle}(2019)}]{martyn2019product}%
  \BibitemOpen
  \bibfield  {author} {\bibinfo {author} {\bibfnamefont {John}\ \bibnamefont
  {Martyn}}\ and\ \bibinfo {author} {\bibfnamefont {Brian}\ \bibnamefont
  {Swingle}},\ }\bibfield  {title} {\enquote {\bibinfo {title} {Product
  spectrum ansatz and the simplicity of thermal states},}\ }\href
  {https://arxiv.org/abs/1812.01015} {\bibfield  {journal} {\bibinfo  {journal}
  {Physical Review A}\ }\textbf {\bibinfo {volume} {100}},\ \bibinfo {pages}
  {032107} (\bibinfo {year} {2019})}\BibitemShut {NoStop}%
\bibitem [{\citenamefont {McArdle}\ \emph {et~al.}(2019)\citenamefont
  {McArdle}, \citenamefont {Jones}, \citenamefont {Endo}, \citenamefont {Li},
  \citenamefont {Benjamin},\ and\ \citenamefont
  {Yuan}}]{mcardle2019variational}%
  \BibitemOpen
  \bibfield  {author} {\bibinfo {author} {\bibfnamefont {Sam}\ \bibnamefont
  {McArdle}}, \bibinfo {author} {\bibfnamefont {Tyson}\ \bibnamefont {Jones}},
  \bibinfo {author} {\bibfnamefont {Suguru}\ \bibnamefont {Endo}}, \bibinfo
  {author} {\bibfnamefont {Ying}\ \bibnamefont {Li}}, \bibinfo {author}
  {\bibfnamefont {Simon~C}\ \bibnamefont {Benjamin}}, \ and\ \bibinfo {author}
  {\bibfnamefont {Xiao}\ \bibnamefont {Yuan}},\ }\bibfield  {title} {\enquote
  {\bibinfo {title} {Variational ansatz-based quantum simulation of imaginary
  time evolution},}\ }\href {https://arxiv.org/abs/1804.03023} {\bibfield
  {journal} {\bibinfo  {journal} {npj Quantum Information}\ }\textbf {\bibinfo
  {volume} {5}},\ \bibinfo {pages} {1--6} (\bibinfo {year} {2019})}\BibitemShut
  {NoStop}%
\bibitem [{\citenamefont {Yuan}\ \emph {et~al.}(2019)\citenamefont {Yuan},
  \citenamefont {Endo}, \citenamefont {Zhao}, \citenamefont {Li},\ and\
  \citenamefont {Benjamin}}]{Yuan2019}%
  \BibitemOpen
  \bibfield  {author} {\bibinfo {author} {\bibfnamefont {Xiao}\ \bibnamefont
  {Yuan}}, \bibinfo {author} {\bibfnamefont {Suguru}\ \bibnamefont {Endo}},
  \bibinfo {author} {\bibfnamefont {Qi}~\bibnamefont {Zhao}}, \bibinfo {author}
  {\bibfnamefont {Ying}\ \bibnamefont {Li}}, \ and\ \bibinfo {author}
  {\bibfnamefont {Simon~C.}\ \bibnamefont {Benjamin}},\ }\bibfield  {title}
  {\enquote {\bibinfo {title} {{Theory of variational quantum simulation}},}\
  }\href {\doibase 10.22331/q-2019-10-07-191} {\bibfield  {journal} {\bibinfo
  {journal} {Quantum}\ }\textbf {\bibinfo {volume} {3}},\ \bibinfo {pages}
  {191} (\bibinfo {year} {2019})},\ \Eprint {http://arxiv.org/abs/1812.08767}
  {arXiv:1812.08767} \BibitemShut {NoStop}%
\bibitem [{\citenamefont {Chowdhury}\ \emph {et~al.}(2020)\citenamefont
  {Chowdhury}, \citenamefont {Low},\ and\ \citenamefont
  {Wiebe}}]{Chowdhury2020}%
  \BibitemOpen
  \bibfield  {author} {\bibinfo {author} {\bibfnamefont {Anirban~N.}\
  \bibnamefont {Chowdhury}}, \bibinfo {author} {\bibfnamefont {Guang~Hao}\
  \bibnamefont {Low}}, \ and\ \bibinfo {author} {\bibfnamefont {Nathan}\
  \bibnamefont {Wiebe}},\ }\bibfield  {title} {\enquote {\bibinfo {title} {{A
  Variational Quantum Algorithm for Preparing Quantum Gibbs States}},}\ }\href
  {http://arxiv.org/abs/2002.00055}\ \Eprint {http://arxiv.org/abs/2002.00055}
  {arXiv:2002.00055} \BibitemShut {NoStop}%
\bibitem [{\citenamefont {Reif}(2009)}]{reif2009fundamentals}%
  \BibitemOpen
  \bibfield  {author} {\bibinfo {author} {\bibfnamefont {Frederick}\
  \bibnamefont {Reif}},\ }\href {https://cds.cern.ch/record/105079} {\emph
  {\bibinfo {title} {Fundamentals of statistical and thermal physics}}}\
  (\bibinfo  {publisher} {Waveland Press},\ \bibinfo {year} {2009})\BibitemShut
  {NoStop}%
\bibitem [{\citenamefont {Gheorghiu}\ and\ \citenamefont
  {Hoban}(2020)}]{Gheorghiu}%
  \BibitemOpen
  \bibfield  {author} {\bibinfo {author} {\bibfnamefont {Alexandru}\
  \bibnamefont {Gheorghiu}}\ and\ \bibinfo {author} {\bibfnamefont {Matty~J}\
  \bibnamefont {Hoban}},\ }\bibfield  {title} {\enquote {\bibinfo {title}
  {{Estimating the entropy of shallow circuit outputs is hard}},}\ }\href
  {http://arxiv.org/abs/2002.12814},\ \Eprint {http://arxiv.org/abs/2002.12814}
  {arXiv:2002.12814} \BibitemShut {NoStop}%
\bibitem [{\citenamefont {Subramanian}\ and\ \citenamefont
  {Hsieh}(2021)}]{Subramanian2019a}%
  \BibitemOpen
  \bibfield  {author} {\bibinfo {author} {\bibfnamefont {Sathyawageeswar}\
  \bibnamefont {Subramanian}}\ and\ \bibinfo {author} {\bibfnamefont
  {Min-Hsiu}\ \bibnamefont {Hsieh}},\ }\bibfield  {title} {\enquote {\bibinfo
  {title} {Quantum algorithm for estimating $\alpha$-renyi entropies of quantum
  states},}\ }\href
  {https://journals.aps.org/pra/abstract/10.1103/PhysRevA.104.022428}
  {\bibfield  {journal} {\bibinfo  {journal} {Physical Review A}\ }\textbf
  {\bibinfo {volume} {104}},\ \bibinfo {pages} {022428} (\bibinfo {year}
  {2021})}\BibitemShut {NoStop}%
\bibitem [{\citenamefont {Li}\ and\ \citenamefont {Wu}(2019)}]{Li2019b}%
  \BibitemOpen
  \bibfield  {author} {\bibinfo {author} {\bibfnamefont {Tongyang}\
  \bibnamefont {Li}}\ and\ \bibinfo {author} {\bibfnamefont {Xiaodi}\
  \bibnamefont {Wu}},\ }\bibfield  {title} {\enquote {\bibinfo {title}
  {{Quantum Query Complexity of Entropy Estimation}},}\ }\href {\doibase
  10.1109/TIT.2018.2883306} {\bibfield  {journal} {\bibinfo  {journal} {IEEE
  Transactions on Information Theory}\ }\textbf {\bibinfo {volume} {65}},\
  \bibinfo {pages} {2899--2921} (\bibinfo {year} {2019})},\ \Eprint
  {http://arxiv.org/abs/1710.06025} {arXiv:1710.06025} \BibitemShut {NoStop}%
\bibitem [{\citenamefont {Brassard}\ \emph {et~al.}(2002)\citenamefont
  {Brassard}, \citenamefont {H{\o}yer}, \citenamefont {Mosca},\ and\
  \citenamefont {Tapp}}]{Brassard2002}%
  \BibitemOpen
  \bibfield  {author} {\bibinfo {author} {\bibfnamefont {Gilles}\ \bibnamefont
  {Brassard}}, \bibinfo {author} {\bibfnamefont {Peter}\ \bibnamefont
  {H{\o}yer}}, \bibinfo {author} {\bibfnamefont {Michele}\ \bibnamefont
  {Mosca}}, \ and\ \bibinfo {author} {\bibfnamefont {Alain}\ \bibnamefont
  {Tapp}},\ }\bibfield  {title} {\enquote {\bibinfo {title} {{Quantum amplitude
  amplification and estimation}},}\ }in\ \href {\doibase
  10.1090/conm/305/05215} {\emph {\bibinfo {booktitle} {Contemporary
  Mathematics}}},\ Vol.\ \bibinfo {volume} {305}\ (\bibinfo  {publisher}
  {Providence, RI; American Mathematical Society; 1999},\ \bibinfo {year}
  {2002})\ pp.\ \bibinfo {pages} {53--74},\ \Eprint
  {http://arxiv.org/abs/0005055} {arXiv:0005055 [quant-ph]} \BibitemShut
  {NoStop}%
\bibitem [{\citenamefont {Low}\ and\ \citenamefont {Chuang}(2019)}]{Low2016}%
  \BibitemOpen
  \bibfield  {author} {\bibinfo {author} {\bibfnamefont {Guang~Hao}\
  \bibnamefont {Low}}\ and\ \bibinfo {author} {\bibfnamefont {Isaac~L.}\
  \bibnamefont {Chuang}},\ }\bibfield  {title} {\enquote {\bibinfo {title}
  {{Hamiltonian Simulation by Qubitization}},}\ }\href {\doibase
  10.22331/q-2019-07-12-163} {\bibfield  {journal} {\bibinfo  {journal}
  {Quantum}\ }\textbf {\bibinfo {volume} {3}},\ \bibinfo {pages} {163}
  (\bibinfo {year} {2019})},\ \Eprint {http://arxiv.org/abs/1610.06546}
  {arXiv:1610.06546} \BibitemShut {NoStop}%
\bibitem [{\citenamefont {Benedetti}\ \emph {et~al.}(2019)\citenamefont
  {Benedetti}, \citenamefont {Lloyd}, \citenamefont {Sack},\ and\ \citenamefont
  {Fiorentini}}]{Benedetti2019a}%
  \BibitemOpen
  \bibfield  {author} {\bibinfo {author} {\bibfnamefont {Marcello}\
  \bibnamefont {Benedetti}}, \bibinfo {author} {\bibfnamefont {Erika}\
  \bibnamefont {Lloyd}}, \bibinfo {author} {\bibfnamefont {Stefan}\
  \bibnamefont {Sack}}, \ and\ \bibinfo {author} {\bibfnamefont {Mattia}\
  \bibnamefont {Fiorentini}},\ }\bibfield  {title} {\enquote {\bibinfo {title}
  {{Parameterized quantum circuits as machine learning models}},}\ }\href
  {\doibase 10.1088/2058-9565/ab4eb5} {\bibfield  {journal} {\bibinfo
  {journal} {Quantum Science and Technology}\ ,\ \bibinfo {pages} {1--18}}
  (\bibinfo {year} {2019})},\ \Eprint {http://arxiv.org/abs/1906.07682}
  {arXiv:1906.07682} \BibitemShut {NoStop}%
\bibitem [{\citenamefont {Guerreschi}\ and\ \citenamefont
  {Smelyanskiy}(2017)}]{Guerreschi2017}%
  \BibitemOpen
  \bibfield  {author} {\bibinfo {author} {\bibfnamefont {Gian~Giacomo}\
  \bibnamefont {Guerreschi}}\ and\ \bibinfo {author} {\bibfnamefont {Mikhail}\
  \bibnamefont {Smelyanskiy}},\ }\bibfield  {title} {\enquote {\bibinfo {title}
  {{Practical optimization for hybrid quantum-classical algorithms}},}\ }\href
  {http://arxiv.org/abs/1701.01450} {\bibfield  {journal} {\bibinfo  {journal}
  {arXiv preprint arXiv:1701.01450}\ ,\ \bibinfo {pages} {1--25}} (\bibinfo
  {year} {2017})},\ \Eprint {http://arxiv.org/abs/1701.01450}
  {arXiv:1701.01450} \BibitemShut {NoStop}%
\bibitem [{\citenamefont {Verdon}\ \emph {et~al.}(2017)\citenamefont {Verdon},
  \citenamefont {Broughton},\ and\ \citenamefont {Biamonte}}]{Verdon2017}%
  \BibitemOpen
  \bibfield  {author} {\bibinfo {author} {\bibfnamefont {Guillaume}\
  \bibnamefont {Verdon}}, \bibinfo {author} {\bibfnamefont {Michael}\
  \bibnamefont {Broughton}}, \ and\ \bibinfo {author} {\bibfnamefont {Jacob}\
  \bibnamefont {Biamonte}},\ }\bibfield  {title} {\enquote {\bibinfo {title}
  {{A quantum algorithm to train neural networks using low-depth circuits}},}\
  }\href {http://arxiv.org/abs/1712.05304} {\bibfield  {journal} {\bibinfo
  {journal} {arXiv:1712.05304}\ } (\bibinfo {year} {2017})},\ \Eprint
  {http://arxiv.org/abs/1712.05304} {arXiv:1712.05304} \BibitemShut {NoStop}%
\bibitem [{\citenamefont {Wecker}\ \emph {et~al.}(2016)\citenamefont {Wecker},
  \citenamefont {Hastings},\ and\ \citenamefont {Troyer}}]{Wecker2016}%
  \BibitemOpen
  \bibfield  {author} {\bibinfo {author} {\bibfnamefont {Dave}\ \bibnamefont
  {Wecker}}, \bibinfo {author} {\bibfnamefont {Matthew~B.}\ \bibnamefont
  {Hastings}}, \ and\ \bibinfo {author} {\bibfnamefont {Matthias}\ \bibnamefont
  {Troyer}},\ }\bibfield  {title} {\enquote {\bibinfo {title} {{Training a
  quantum optimizer}},}\ }\href {\doibase 10.1103/PhysRevA.94.022309}
  {\bibfield  {journal} {\bibinfo  {journal} {Physical Review A}\ }\textbf
  {\bibinfo {volume} {94}},\ \bibinfo {pages} {022309} (\bibinfo {year}
  {2016})},\ \Eprint {http://arxiv.org/abs/1605.05370} {arXiv:1605.05370}
  \BibitemShut {NoStop}%
\bibitem [{\citenamefont {Wang}\ \emph {et~al.}(2018)\citenamefont {Wang},
  \citenamefont {Hadfield}, \citenamefont {Jiang},\ and\ \citenamefont
  {Rieffel}}]{Wang2018f}%
  \BibitemOpen
  \bibfield  {author} {\bibinfo {author} {\bibfnamefont {Zhihui}\ \bibnamefont
  {Wang}}, \bibinfo {author} {\bibfnamefont {Stuart}\ \bibnamefont {Hadfield}},
  \bibinfo {author} {\bibfnamefont {Zhang}\ \bibnamefont {Jiang}}, \ and\
  \bibinfo {author} {\bibfnamefont {Eleanor~G}\ \bibnamefont {Rieffel}},\
  }\bibfield  {title} {\enquote {\bibinfo {title} {{Quantum approximate
  optimization algorithm for MaxCut: A fermionic view}},}\ }\href {\doibase
  10.1103/PhysRevA.97.022304} {\bibfield  {journal} {\bibinfo  {journal}
  {Physical Review A}\ }\textbf {\bibinfo {volume} {97}},\ \bibinfo {pages}
  {022304} (\bibinfo {year} {2018})},\ \Eprint
  {http://arxiv.org/abs/1706.02998} {arXiv:1706.02998} \BibitemShut {NoStop}%
\bibitem [{\citenamefont {Kingma}\ and\ \citenamefont
  {Ba}(2014)}]{kingma2014adam}%
  \BibitemOpen
  \bibfield  {author} {\bibinfo {author} {\bibfnamefont {Diederik~P}\
  \bibnamefont {Kingma}}\ and\ \bibinfo {author} {\bibfnamefont {Jimmy}\
  \bibnamefont {Ba}},\ }\bibfield  {title} {\enquote {\bibinfo {title} {Adam: A
  method for stochastic optimization},}\ }\href
  {https://arxiv.org/abs/1412.6980} {\bibfield  {journal} {\bibinfo  {journal}
  {arXiv preprint arXiv:1412.6980}\ } (\bibinfo {year} {2014})}\BibitemShut
  {NoStop}%
\bibitem [{\citenamefont {Buhrman}\ \emph {et~al.}(2001)\citenamefont
  {Buhrman}, \citenamefont {Cleve}, \citenamefont {Watrous},\ and\
  \citenamefont {de~Wolf}}]{Buhrman2001}%
  \BibitemOpen
  \bibfield  {author} {\bibinfo {author} {\bibfnamefont {Harry}\ \bibnamefont
  {Buhrman}}, \bibinfo {author} {\bibfnamefont {Richard}\ \bibnamefont
  {Cleve}}, \bibinfo {author} {\bibfnamefont {John}\ \bibnamefont {Watrous}}, \
  and\ \bibinfo {author} {\bibfnamefont {Ronald}\ \bibnamefont {de~Wolf}},\
  }\bibfield  {title} {\enquote {\bibinfo {title} {{Quantum Fingerprinting}},}\
  }\href {\doibase 10.1103/PhysRevLett.87.167902} {\bibfield  {journal}
  {\bibinfo  {journal} {Physical Review Letters}\ }\textbf {\bibinfo {volume}
  {87}},\ \bibinfo {pages} {167902} (\bibinfo {year} {2001})},\ \Eprint
  {http://arxiv.org/abs/0102001} {arXiv:0102001 [quant-ph]} \BibitemShut
  {NoStop}%
\bibitem [{\citenamefont {Gottesman}\ and\ \citenamefont
  {Chuang}(2001)}]{Gottesman2001a}%
  \BibitemOpen
  \bibfield  {author} {\bibinfo {author} {\bibfnamefont {Daniel}\ \bibnamefont
  {Gottesman}}\ and\ \bibinfo {author} {\bibfnamefont {Isaac}\ \bibnamefont
  {Chuang}},\ }\bibfield  {title} {\enquote {\bibinfo {title} {{Quantum Digital
  Signatures}},}\ }\href {http://arxiv.org/abs/quant-ph/0105032} {\bibfield
  {journal} {\bibinfo  {journal} {arXiv:quant-ph/0105032}\ } (\bibinfo {year}
  {2001})},
  \BibitemShut {NoStop}%
\bibitem [{\citenamefont {Ekert}\ \emph {et~al.}(2002)\citenamefont {Ekert},
  \citenamefont {Alves}, \citenamefont {Oi}, \citenamefont {Horodecki},
  \citenamefont {Horodecki},\ and\ \citenamefont {Kwek}}]{Ekert2002}%
  \BibitemOpen
  \bibfield  {author} {\bibinfo {author} {\bibfnamefont {Artur~K}\ \bibnamefont
  {Ekert}}, \bibinfo {author} {\bibfnamefont {Carolina~Moura}\ \bibnamefont
  {Alves}}, \bibinfo {author} {\bibfnamefont {Daniel K~L}\ \bibnamefont {Oi}},
  \bibinfo {author} {\bibfnamefont {Micha{\l}}\ \bibnamefont {Horodecki}},
  \bibinfo {author} {\bibfnamefont {Pawe{\l}}\ \bibnamefont {Horodecki}}, \
  and\ \bibinfo {author} {\bibfnamefont {Leong~Chuan}\ \bibnamefont {Kwek}},\
  }\bibfield  {title} {\enquote {\bibinfo {title} {{Direct estimations of
  linear and nonlinear functionals of a quantum state}},}\ }\href
  {https://journals.aps.org/prl/abstract/10.1103/PhysRevLett.88.217901}
  {\bibfield  {journal} {\bibinfo  {journal} {Physical review letters}\
  }\textbf {\bibinfo {volume} {88}},\ \bibinfo {pages} {217901} (\bibinfo
  {year} {2002})}\BibitemShut {NoStop}%
\bibitem [{\citenamefont {Garcia-Escartin}\ and\ \citenamefont
  {Chamorro-Posada}(2013)}]{Garcia-Escartin2013}%
  \BibitemOpen
  \bibfield  {author} {\bibinfo {author} {\bibfnamefont {Juan~Carlos}\
  \bibnamefont {Garcia-Escartin}}\ and\ \bibinfo {author} {\bibfnamefont
  {Pedro}\ \bibnamefont {Chamorro-Posada}},\ }\bibfield  {title} {\enquote
  {\bibinfo {title} {{Swap test and Hong-Ou-Mandel effect are equivalent}},}\
  }\href {\doibase 10.1103/PhysRevA.87.052330} {\bibfield  {journal} {\bibinfo
  {journal} {Physical Review A}\ }\textbf {\bibinfo {volume} {87}},\ \bibinfo
  {pages} {052330} (\bibinfo {year} {2013})},\ \Eprint
  {http://arxiv.org/abs/1303.6814} {arXiv:1303.6814} \BibitemShut {NoStop}%
\bibitem [{\citenamefont {Patel}\ \emph {et~al.}(2016)\citenamefont {Patel},
  \citenamefont {Ho}, \citenamefont {Ferreyrol}, \citenamefont {Ralph},\ and\
  \citenamefont {Pryde}}]{Patel2016}%
  \BibitemOpen
  \bibfield  {author} {\bibinfo {author} {\bibfnamefont {Raj~B}\ \bibnamefont
  {Patel}}, \bibinfo {author} {\bibfnamefont {Joseph}\ \bibnamefont {Ho}},
  \bibinfo {author} {\bibfnamefont {Franck}\ \bibnamefont {Ferreyrol}},
  \bibinfo {author} {\bibfnamefont {Timothy~C}\ \bibnamefont {Ralph}}, \ and\
  \bibinfo {author} {\bibfnamefont {Geoff~J}\ \bibnamefont {Pryde}},\
  }\bibfield  {title} {\enquote {\bibinfo {title} {{A quantum Fredkin gate}},}\
  }\href {https://advances.sciencemag.org/content/2/3/e1501531.abstract}
  {\bibfield  {journal} {\bibinfo  {journal} {Science advances}\ }\textbf
  {\bibinfo {volume} {2}},\ \bibinfo {pages} {e1501531} (\bibinfo {year}
  {2016})}\BibitemShut {NoStop}%
\bibitem [{\citenamefont {Linke}\ \emph {et~al.}(2018)\citenamefont {Linke},
  \citenamefont {Johri}, \citenamefont {Figgatt}, \citenamefont {Landsman},
  \citenamefont {Matsuura},\ and\ \citenamefont {Monroe}}]{Linke2018}%
  \BibitemOpen
  \bibfield  {author} {\bibinfo {author} {\bibfnamefont {Norbert~M}\
  \bibnamefont {Linke}}, \bibinfo {author} {\bibfnamefont {Sonika}\
  \bibnamefont {Johri}}, \bibinfo {author} {\bibfnamefont {Caroline}\
  \bibnamefont {Figgatt}}, \bibinfo {author} {\bibfnamefont {Kevin~A}\
  \bibnamefont {Landsman}}, \bibinfo {author} {\bibfnamefont {Anne~Y}\
  \bibnamefont {Matsuura}}, \ and\ \bibinfo {author} {\bibfnamefont
  {Christopher}\ \bibnamefont {Monroe}},\ }\bibfield  {title} {\enquote
  {\bibinfo {title} {{Measuring the R{\'{e}}nyi entropy of a two-site
  Fermi-Hubbard model on a trapped ion quantum computer}},}\ }\href {\doibase
  10.1103/PhysRevA.98.052334} {\bibfield  {journal} {\bibinfo  {journal}
  {Physical Review A}\ }\textbf {\bibinfo {volume} {98}},\ \bibinfo {pages}
  {052334} (\bibinfo {year} {2018})}\BibitemShut {NoStop}%
\bibitem [{\citenamefont {Cincio}\ \emph {et~al.}(2018)\citenamefont {Cincio},
  \citenamefont {Suba{{s}}{i}}, \citenamefont {Sornborger},\ and\ \citenamefont
  {Coles}}]{cincio2018learning}%
  \BibitemOpen
  \bibfield  {author} {\bibinfo {author} {\bibfnamefont {Lukasz}\ \bibnamefont
  {Cincio}}, \bibinfo {author} {\bibfnamefont {Yi{{g}}it}\ \bibnamefont
  {Suba{{s}}{i}}}, \bibinfo {author} {\bibfnamefont {Andrew~T}\ \bibnamefont
  {Sornborger}}, \ and\ \bibinfo {author} {\bibfnamefont {Patrick~J}\
  \bibnamefont {Coles}},\ }\bibfield  {title} {\enquote {\bibinfo {title}
  {Learning the quantum algorithm for state overlap},}\ }\href
  {https://iopscience.iop.org/article/10.1088/1367-2630/aae94a/meta} {\bibfield
   {journal} {\bibinfo  {journal} {New Journal of Physics}\ }\textbf {\bibinfo
  {volume} {20}},\ \bibinfo {pages} {113022} (\bibinfo {year}
  {2018})}\BibitemShut {NoStop}%
\bibitem [{\citenamefont {Suba{{s}}{i}}\ \emph {et~al.}(2019)\citenamefont
  {Suba{{s}}{i}}, \citenamefont {Cincio},\ and\ \citenamefont
  {Coles}}]{subacsi2019entanglement}%
  \BibitemOpen
  \bibfield  {author} {\bibinfo {author} {\bibfnamefont {Yi{{g}}it}\
  \bibnamefont {Suba{{s}}{i}}}, \bibinfo {author} {\bibfnamefont {Lukasz}\
  \bibnamefont {Cincio}}, \ and\ \bibinfo {author} {\bibfnamefont {Patrick~J}\
  \bibnamefont {Coles}},\ }\bibfield  {title} {\enquote {\bibinfo {title}
  {Entanglement spectroscopy with a depth-two quantum circuit},}\ }\href
  {https://iopscience.iop.org/article/10.1088/1751-8121/aaf54d/meta} {\bibfield
   {journal} {\bibinfo  {journal} {Journal of Physics A: Mathematical and
  Theoretical}\ }\textbf {\bibinfo {volume} {52}},\ \bibinfo {pages} {044001}
  (\bibinfo {year} {2019})}\BibitemShut {NoStop}%
\bibitem [{\citenamefont {Yirka}\ and\ \citenamefont
  {Suba{\c{s}}{\i}}(2021)}]{Yirka2020}%
  \BibitemOpen
  \bibfield  {author} {\bibinfo {author} {\bibfnamefont {Justin}\ \bibnamefont
  {Yirka}}\ and\ \bibinfo {author} {\bibfnamefont {Yi{\u{g}}it}\ \bibnamefont
  {Suba{\c{s}}{\i}}},\ }\bibfield  {title} {\enquote {\bibinfo {title}
  {Qubit-efficient entanglement spectroscopy using qubit resets},}\ }\href
  {https://quantum-journal.org/papers/q-2021-09-02-535/} {\bibfield  {journal}
  {\bibinfo  {journal} {Quantum}\ }\textbf {\bibinfo {volume} {5}},\ \bibinfo
  {pages} {535} (\bibinfo {year} {2021})}\BibitemShut {NoStop}%
\bibitem [{\citenamefont {Carlen}(2010)}]{Carlen2010a}%
  \BibitemOpen
  \bibfield  {author} {\bibinfo {author} {\bibfnamefont {Eric}\ \bibnamefont
  {Carlen}},\ }\bibfield  {title} {\enquote {\bibinfo {title} {{Trace
  inequalities and quantum entropy: an introductory course}},}\ }\href
  {http://www.mathphys.org/AZschool/material/AZ09-carlen.pdf} {\bibfield
  {journal} {\bibinfo  {journal} {Entropy and the quantum}\ }\textbf {\bibinfo
  {volume} {529}},\ \bibinfo {pages} {73--140} (\bibinfo {year}
  {2010})}\BibitemShut {NoStop}%
\bibitem [{\citenamefont {Mitarai}\ \emph {et~al.}(2018)\citenamefont
  {Mitarai}, \citenamefont {Negoro}, \citenamefont {Kitagawa},\ and\
  \citenamefont {Fujii}}]{mitarai2018quantum}%
  \BibitemOpen
  \bibfield  {author} {\bibinfo {author} {\bibfnamefont {Kosuke}\ \bibnamefont
  {Mitarai}}, \bibinfo {author} {\bibfnamefont {Makoto}\ \bibnamefont
  {Negoro}}, \bibinfo {author} {\bibfnamefont {Masahiro}\ \bibnamefont
  {Kitagawa}}, \ and\ \bibinfo {author} {\bibfnamefont {Keisuke}\ \bibnamefont
  {Fujii}},\ }\bibfield  {title} {\enquote {\bibinfo {title} {Quantum circuit
  learning},}\ }\href
  {https://journals.aps.org/pra/abstract/10.1103/PhysRevA.98.032309} {\bibfield
   {journal} {\bibinfo  {journal} {Physical Review A}\ }\textbf {\bibinfo
  {volume} {98}},\ \bibinfo {pages} {032309} (\bibinfo {year}
  {2018})}\BibitemShut {NoStop}%
\bibitem [{\citenamefont {Schuld}\ \emph {et~al.}(2019)\citenamefont {Schuld},
  \citenamefont {Bergholm}, \citenamefont {Gogolin}, \citenamefont {Izaac},\
  and\ \citenamefont {Killoran}}]{Schuld2019}%
  \BibitemOpen
  \bibfield  {author} {\bibinfo {author} {\bibfnamefont {Maria}\ \bibnamefont
  {Schuld}}, \bibinfo {author} {\bibfnamefont {Ville}\ \bibnamefont
  {Bergholm}}, \bibinfo {author} {\bibfnamefont {Christian}\ \bibnamefont
  {Gogolin}}, \bibinfo {author} {\bibfnamefont {Josh}\ \bibnamefont {Izaac}}, \
  and\ \bibinfo {author} {\bibfnamefont {Nathan}\ \bibnamefont {Killoran}},\
  }\bibfield  {title} {\enquote {\bibinfo {title} {{Evaluating analytic
  gradients on quantum hardware}},}\ }\href {\doibase
  10.1103/PhysRevA.99.032331} {\bibfield  {journal} {\bibinfo  {journal}
  {Physical Review A}\ }\textbf {\bibinfo {volume} {99}},\ \bibinfo {pages}
  {032331} (\bibinfo {year} {2019})}\BibitemShut {NoStop}%
\bibitem [{\citenamefont {Ostaszewski}\ \emph {et~al.}(2019)\citenamefont
  {Ostaszewski}, \citenamefont {Grant},\ and\ \citenamefont
  {Benedetti}}]{Ostaszewski2019}%
  \BibitemOpen
  \bibfield  {author} {\bibinfo {author} {\bibfnamefont {Mateusz}\ \bibnamefont
  {Ostaszewski}}, \bibinfo {author} {\bibfnamefont {Edward}\ \bibnamefont
  {Grant}}, \ and\ \bibinfo {author} {\bibfnamefont {Marcello}\ \bibnamefont
  {Benedetti}},\ }\bibfield  {title} {\enquote {\bibinfo {title} {{Quantum
  circuit structure learning}},}\ }\href {http://arxiv.org/abs/1905.09692}
  {\bibfield  {journal} {\bibinfo  {journal} {arXiv:1905.09692}\ ,\ \bibinfo
  {pages} {1--11}} (\bibinfo {year} {2019})},\ \Eprint
  {http://arxiv.org/abs/1905.09692} {arXiv:1905.09692} \BibitemShut {NoStop}%
\bibitem [{\citenamefont {Farhi}\ and\ \citenamefont
  {Neven}(2018)}]{Farhi2018}%
  \BibitemOpen
  \bibfield  {author} {\bibinfo {author} {\bibfnamefont {Edward}\ \bibnamefont
  {Farhi}}\ and\ \bibinfo {author} {\bibfnamefont {Hartmut}\ \bibnamefont
  {Neven}},\ }\bibfield  {title} {\enquote {\bibinfo {title} {{Classification
  with Quantum Neural Networks on Near Term Processors}},}\ }\href
  {http://arxiv.org/abs/1802.06002} {\bibfield  {journal} {\bibinfo  {journal}
  {arXiv:1802.06002}\ ,\ \bibinfo {pages} {1--21}} (\bibinfo {year} {2018})},\
  \BibitemShut
  {NoStop}%
\end{thebibliography}


\onecolumngrid

\begin{center}
{\textbf{\large Appendix  }}
\end{center}

\renewcommand{\theequation}{S\arabic{equation}}
\renewcommand{\thealgorithm}{S\arabic{algorithm}}
\setcounter{equation}{0}
\setcounter{figure}{0}
\setcounter{table}{0}
\setcounter{section}{0}
\setcounter{proposition}{0}
\setcounter{lemma}{1}
\setcounter{algorithm}{0}

\section{Variational algorithm for Gibbs state preparation with higher-order truncations}
\label{appendix:detals}
Here we present a variational algorithm for preparing the Gibbs state with $K$-truncated free energy. To illustrate our algorithm, we give some notations first. We let $A_{t_{j}}B_{t_{j}}$ denote the registers that store the states for estimating $\tr(\rho^{t})$, where $t_j\in\Sigma_t$ and $\Sigma_t$ includes all the indices of these registers.

\begin{figure}[h]
\begin{algorithm}[H] 
\caption{Variational quantum Gibbs state preparation with truncation order $K$}
\begin{algorithmic}[1] \label{alg:rftl:K}
\STATE Input: choose the ansatz of unitary $U(\bm\theta)$, tolerance $\varepsilon$, truncation order $K$, and initial parameters of $\bm\theta$;

\STATE Compute coefficients $C_{0}$, $C_{1}$,..., $C_{K}$ according to Eq.~\eqref{EQ:13}. 

\STATE Prepare initial states $\ket{00}$ in registers $AB$ and apply  $U(\bm\theta)$ to these states.%
 
\STATE Measure and compute  $\tr (H\rho_{B_1})$ and compute the loss function $L_1 =  \tr (H\rho_{B_1} )$;

\STATE Measure the overlap $\tr (\prod_{t_2\in\Sigma_2}\rho_{B_{t_2}})$ via Destructive Swap Test and compute the loss function $L_2 =-\beta^{-1}C_{1}\tr(\prod_{t_2\in\Sigma_2}\rho_{B_{t_2}})$; 

\STATE Measure the overlap $\tr(\prod_{t_{k}\in\Sigma_{k}}\rho_{B_{t_{k}}})$ via higher order state overlap estimation and compute the loss function $L_{k}=-\beta^{-1}C_{k-1}\tr(\prod_{t_{k}\in\Sigma_{k}}\rho_{B_{t_{k}}})$ for each $k\in\{3,...,K+1\}$.

\STATE Perform optimization of $\mathcal{F}_K(\bm\theta) = \sum_{k=1}^{K+1}L_k-\beta^{-1}C_{0}$ and update parameters of $\bm\theta$;

\STATE Repeat 3-7 until the loss function $\mathcal F_K(\bm\theta)$ converges with tolerance $\varepsilon$;
\STATE Output the state $\rho^{out} = \tr_A U(\bm\theta)\op{00}{00}_{AB}U(\bm\theta)^\dagger$.
\end{algorithmic}
\end{algorithm}
\end{figure}

\section{Technical details}
\subsection{Supplementary proof for Lemma~\ref{le:loss_function_inequality}}\label{supp:le:loss_function_inequality}
\begin{lemma}
Given the error tolerance $\epsilon>0$ in the optimization problem in Eq.~\eqref{EQ:5}, suppose the truncation error of the free energy is $\beta^{-1}\delta_{0}>0$. Then we can derive a relation between $\mathcal{F}(\rho_{G})$ and $\mathcal{F}_{K}(\rho(\bm\theta_{0}))$ below, where $\bm\theta_{0}$ is the output of the optimization and $\rho_{G}$ is the Gibbs state.
\begin{align}
    \mathcal{F}(\rho_{G})\leq\mathcal{F}_{K}(\bm\theta_{0})\leq\mathcal{F}(\rho_{G})+\beta^{-1}\delta_{0}+\epsilon.
    \label{eq:proof_loss_function}
\end{align}
\end{lemma}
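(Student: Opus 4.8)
The plan is to establish the two inequalities separately, with the lower bound resting on the monotonicity $S_K(\rho)\le S(\rho)$ together with the global optimality of the Gibbs state, and the upper bound resting on the near-optimality condition in Eq.~\eqref{EQ:5} combined with the expressiveness assumption on the PQC.

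For the lower bound $\mathcal{F}(\rho_{G})\le\mathcal{F}_{K}(\bm\theta_{0})$, I would first show that the $K$-truncated entropy never overestimates the true entropy. Writing $\rho$ in its eigenbasis with eigenvalues $\lambda_{i}\in[0,1]$ and using $-\lambda\ln\lambda=\lambda\sum_{k=1}^{\infty}\frac{1}{k}(1-\lambda)^{k}$, one sees from Eq.~\eqref{EQ:12} that $S_{K}(\rho)=\sum_{i}\lambda_{i}\sum_{k=1}^{K}\frac{1}{k}(1-\lambda_{i})^{k}$ is exactly the partial sum of the nonnegative series defining $S(\rho)$. Hence $S(\rho)-S_{K}(\rho)=\delta_{0}\ge 0$ for every state, which by $\beta^{-1}>0$ yields $\mathcal{F}_{K}(\rho)\ge\mathcal{F}(\rho)$ pointwise. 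Applying this at $\rho(\bm\theta_{0})$ and invoking that $\rho_{G}$ is the global minimizer of the true free energy $\mathcal{F}$ gives the chain $\mathcal{F}(\rho_{G})\le\mathcal{F}(\rho(\bm\theta_{0}))\le\mathcal{F}_{K}(\rho(\bm\theta_{0}))=\mathcal{F}_{K}(\bm\theta_{0})$.

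For the upper bound, I would start from the termination condition $\mathcal{F}_{K}(\bm\theta_{0})\le\min_{\bm\theta}\mathcal{F}_{K}(\bm\theta)+\epsilon$. The expressiveness assumption guarantees a parameter vector $\bm\theta_{G}$ with $\rho(\bm\theta_{G})=\rho_{G}$, so $\min_{\bm\theta}\mathcal{F}_{K}(\bm\theta)\le\mathcal{F}_{K}(\rho_{G})$. Then the truncation-error bound $|\mathcal{F}_{K}(\rho)-\mathcal{F}(\rho)|\le\beta^{-1}\delta_{0}$ applied at $\rho_{G}$ gives $\mathcal{F}_{K}(\rho_{G})\le\mathcal{F}(\rho_{G})+\beta^{-1}\delta_{0}$. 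Chaining these three estimates produces $\mathcal{F}_{K}(\bm\theta_{0})\le\mathcal{F}(\rho_{G})+\beta^{-1}\delta_{0}+\epsilon$, which together with the lower bound establishes Eq.~\eqref{eq:proof_loss_function}.

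The main obstacle, and essentially the only step requiring care, is the pointwise comparison $S_{K}\le S$: one must check that every omitted term $\frac{1}{k}(1-\lambda_{i})^{k}\lambda_{i}$ is nonnegative, which relies precisely on the spectrum lying in $[0,1]$, and that the coefficient identities in Eq.~\eqref{EQ:13} genuinely reorganize $\sum_{k=1}^{K}\frac{1}{k}\tr((I-\rho)^{k}\rho)$ into $\sum_{j=0}^{K}C_{j}\tr(\rho^{j+1})$. A secondary subtlety is that the expressiveness hypothesis is phrased as preparing $\rho_{G}$ ``or a state very close to it''; under only approximate reachability the right-hand side would acquire an extra term controlled by that approximation error, so I would state and prove the lemma under exact reachability, as the clean inequalities in Eq.~\eqref{eq:proof_loss_function} require, and remark that the approximate case perturbs the bound only slightly.
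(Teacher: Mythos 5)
Your proof is correct and follows essentially the same route as the paper's: the lower bound via the pointwise inequality $S_K(\rho)\le S(\rho)$ (you verify it in the eigenbasis, the paper via the operator Taylor tail, which is the same computation) combined with the variational principle for $\mathcal{F}$, and the upper bound by chaining the $\epsilon$-optimality of $\bm\theta_0$, the expressiveness assumption $\min_{\bm\theta}\mathcal{F}_K(\bm\theta)\le\mathcal{F}_K(\rho_G)$, and the truncation bound at $\rho_G$. If anything, your write-up is slightly more complete, since the paper leaves the step $\mathcal{F}(\rho_G)\le\mathcal{F}(\rho(\bm\theta_0))\le\mathcal{F}_K(\bm\theta_0)$ implicit after establishing the pointwise inequality, whereas you state the chain explicitly.
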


\begin{proof}
First, we show that left inequality in Eq.~\eqref{eq:proof_loss_function}. For arbitrary density operator $\rho$, we have $\mathcal{F}_{K}(\rho)-\mathcal{F}(\rho)>0$. To be specific,
\begin{align}
\mathcal{F}_{K}(\rho)-\mathcal{F}(\rho)&=\beta^{-1}(S(\rho)-S_{K}(\rho))\\
&=-\beta^{-1}\tr\left(\sum_{j=K+1}^{\infty}\frac{(-1)^{j+1}}{j}(\rho-I)^{j}\rho\right)>0 \label{eq:proof:entropy}
\end{align}
In Eq.~\eqref{eq:proof:entropy}, we expand the von Neumann entropy into the Taylor series, i.e., $S(\rho)=-\tr\left(\sum_{j=1}^{\infty}\frac{(-1)^{j+1}}{j}(\rho-I)^{j}\rho\right)$, and the result holds immediately. 

Second, the right inequality in Eq.~\eqref{eq:proof_loss_function} is a direct result of the definition of truncated free energy $\mathcal{F}_{K}(\rho)$.
\begin{align}
    \mathcal{F}_{K}(\bm\theta_{0})-\mathcal{F}(\rho_{G})&=\mathcal{F}_{K}(\bm\theta_{0})-\min_{\bm\theta}\mathcal{F}_{K}(\bm\theta)+\min_{\bm\theta}\mathcal{F}_{K}(\bm\theta)-\mathcal{F}(\rho_{G})\label{eq:proof:entropy_4}\\
    &\leq \epsilon+\mathcal{F}_{K}(\rho_{G})-\mathcal{F}(\rho_{G})\label{eq:proof:entropy_5}\\
    &\leq \epsilon+\beta^{-1}\delta_{0}, \label{eq:proof:entropy_6}
\end{align}
where we use the fact that $\min_{\bm\theta}\mathcal{F}_{K}(\bm\theta)\leq \mathcal{F}_{K}(\bm\theta_{0})\leq\min_{\bm\theta}\mathcal{F}_{K}(\bm\theta)+\epsilon$ in Eq.~\eqref{eq:proof:entropy_5}, and the inequality in Eq.~\eqref{eq:proof:entropy_6} is due to the truncation. {Especially, we here assume the PQC endows sufficient expressiveness to prepare the desired Gibbs state or a state very close to it, which allows $\min_{\bm\theta}\mathcal{F}_{K}(\bm\theta)\leq\mathcal{F}_{K}(\rho_G)$.}
\end{proof}

\subsection{Supplementary proof for Lemma~\ref{PR:3}}\label{sec:lemma3}
\begin{lemma}
Given a quantum state $\rho$, assume the truncation order of the truncated von Neumann entropy is $K\in\mathbb{Z}_{+}$, and choose $\Delta\in(0,e^{-1})$ such that $-\Delta\ln(\Delta)<\frac{1}{K+1}(1-\Delta)^{K+1}$. Let $\delta_{0}$ denote the truncation error, i.e., the difference between the von Neumann entropy $S(\rho)$ and its $K$-truncated entropy $S_K(\rho)$. Then the truncated error $\delta_{0}$ is upper bounded in the sense that 
\begin{align}
\delta_{0}\leq\frac{r}{K+1}\left(1-\Delta\right)^{K+1},
\end{align}
where $r$ denotes the rank of density operator.
\end{lemma}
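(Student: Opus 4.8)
The plan is to reduce the operator statement to a scalar inequality on the eigenvalues of $\rho$ and then bound, eigenvalue by eigenvalue, the contribution to the truncation error. First I would work in the eigenbasis of $\rho$, writing its nonzero eigenvalues as $\lambda_1,\dots,\lambda_r\in(0,1]$ (the zero eigenvalues contribute nothing to either $S$ or $S_K$ under the convention $0\ln 0=0$). Since $S(\rho)$ and $S_K(\rho)$ are both spectral functions, and since rewriting the coefficients in Eq.~\eqref{EQ:12} via $(\rho-I)^k=(-1)^k(I-\rho)^k$ gives the clean form $S_K(\rho)=\sum_{k=1}^{K}\tfrac1k\tr((I-\rho)^k\rho)$, I would use the expansion $-\ln\lambda=\sum_{k\ge1}\tfrac{(1-\lambda)^k}{k}$ (valid for $\lambda\in(0,1]$) to express the full entropy as $S(\rho)=\sum_{k\ge1}\tfrac1k\tr((I-\rho)^k\rho)$. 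Subtracting the truncation then yields the tail form $\delta_0=\sum_{i=1}^{r} g(\lambda_i)$, where $g(\lambda):=\lambda\sum_{k\ge K+1}\tfrac{(1-\lambda)^k}{k}\ge 0$.

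The next step is to establish two complementary scalar bounds for $g(\lambda)$ on $(0,1]$. Using $\tfrac1k\le\tfrac1{K+1}$ for $k\ge K+1$ and summing the geometric series $\sum_{k\ge K+1}(1-\lambda)^k=(1-\lambda)^{K+1}/\lambda$ produces the first bound $g(\lambda)\le \tfrac{1}{K+1}(1-\lambda)^{K+1}$, which is good when $\lambda$ is bounded away from $0$. Alternatively, discarding the first $K$ (nonnegative) terms gives $g(\lambda)\le \lambda\sum_{k\ge1}\tfrac{(1-\lambda)^k}{k}=-\lambda\ln\lambda$, which is the useful estimate when $\lambda$ is tiny, since $-\lambda\ln\lambda\to0$ as $\lambda\to0$.

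I would then combine the two bounds through a case split at the threshold $\Delta$. If $\lambda\ge\Delta$, monotonicity of $x\mapsto(1-x)^{K+1}$ immediately gives $g(\lambda)\le\tfrac{1}{K+1}(1-\lambda)^{K+1}\le\tfrac{1}{K+1}(1-\Delta)^{K+1}$. If $\lambda<\Delta$, I would use that $-\lambda\ln\lambda$ is increasing on $(0,e^{-1})$ (its derivative $-\ln\lambda-1$ is positive there) together with $\Delta<e^{-1}$ to obtain $g(\lambda)\le-\lambda\ln\lambda\le-\Delta\ln\Delta$, and then invoke the defining property $-\Delta\ln\Delta<\tfrac{1}{K+1}(1-\Delta)^{K+1}$ of $\Delta$. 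In either case $g(\lambda)\le\tfrac{1}{K+1}(1-\Delta)^{K+1}$ uniformly in $\lambda$, so summing over the $r$ nonzero eigenvalues gives $\delta_0=\sum_{i=1}^{r}g(\lambda_i)\le\tfrac{r}{K+1}(1-\Delta)^{K+1}$, as claimed.

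The main obstacle is the regime of very small eigenvalues: the clean estimate $g(\lambda)\le(1-\lambda)^{K+1}/(K+1)$ degrades to essentially $1/(K+1)$ as $\lambda\to0$ and cannot on its own supply the extra factor $(1-\Delta)^{K+1}$. Controlling this regime is exactly what the second bound $g(\lambda)\le-\lambda\ln\lambda$ and the technical hypothesis on $\Delta$ are designed for, so the crux is verifying the monotonicity of $-\lambda\ln\lambda$ on $(0,e^{-1})$ and confirming that the stated condition on $\Delta$ makes the small-eigenvalue estimate no larger than the large-eigenvalue one. A minor point of care is the term-by-term manipulation of the logarithmic series, which is legitimate because every summand is nonnegative for $\lambda\in(0,1]$.
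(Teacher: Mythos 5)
Your proposal is correct and follows essentially the same route as the paper's proof: spectral decomposition, a case split of the eigenvalues at the threshold $\Delta$, the tail bound $\frac{1}{K+1}(1-\lambda)^{K+1}$ for eigenvalues at least $\Delta$, the bound $-\lambda\ln\lambda\leq-\Delta\ln\Delta$ (using monotonicity of $-x\ln x$ on $(0,e^{-1})$) for small eigenvalues, and the hypothesis on $\Delta$ to merge the two cases. The only cosmetic difference is that you obtain the tail bound from $\frac{1}{k}\leq\frac{1}{K+1}$ plus a geometric series, whereas the paper uses a telescoping argument; both yield the identical inequality.
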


\begin{proof}
The proof proceeds by expanding the logarithm function in the entropy into Taylor series. The upper bound of the difference between the entropy $S(\rho)$ and its truncated version $S_{K}(\rho)$ for density $\rho$ is given as follows,
\begin{align}
    \delta_{0}&=\left|S(\rho)-S_{K}(\rho)\right|\\
    &=\left|\tr\left(\sum_{k=K+1}^{\infty}\frac{(-1)^{k}}{k}(\rho-I)^{k}\rho\right)\right|\\
    &=\left(\sum_{j:~\lambda_{j}\geq\Delta}+\sum_{j:~0<\lambda_{j}<\Delta}\right)\sum_{k=K+1}^{\infty}\frac{\lambda_{j}}{k}(1-\lambda_{j})^{k}. 
    \label{EQ:spectral}
\end{align}
In the above Eq.~\eqref{EQ:spectral} we use the spectral decomposition of $\rho=\sum_{j=1}^{r}\lambda_{j}\op{\psi_{j}}{\psi_{j}}$. 

To give an upper bound on truncation error $\delta_{0}$, we give upper bounds on two terms in Eq.~\eqref{EQ:spectral}. First, we consider the term with eigenvalues larger than $\Delta$.
\begin{align}
    &\sum_{j:~\lambda_{j}\geq\Delta}\sum_{k=K+1}^{\infty}\frac{\lambda_{j}}{k}(1-\lambda_{j})^{k}\nonumber\\
    &=\sum_{j:~\lambda_{j}\geq\Delta}\sum_{k=K+1}^{\infty}\left[\frac{1}{k}(1-\lambda_{j})^{k}-\frac{1}{k}(1-\lambda_{j})^{k+1}\right]\label{eq:error_lambda}\\
    &\leq\sum_{j:~\lambda_{j}\geq\Delta}\sum_{k=K+1}^{\infty}\frac{1}{k}(1-\lambda_{j})^{k}-\sum_{j:~\lambda_{j}\geq\Delta}\sum_{k=K+1}^{\infty}\frac{1}{k+1}(1-\lambda_{j})^{k+1} \label{eq:errur_deflation}\\
    &=\frac{1}{K+1}\sum_{j:~\lambda_{j}\geq\Delta}(1-\lambda_{j})^{K+1}.\label{EQ:39}
\end{align}
The equality in Eq.~\eqref{eq:error_lambda} is due to the substitution of $\lambda_{j}$ with $1-(1-\lambda_{j})$, and the inequality in~\eqref{eq:errur_deflation} follows by replacing $1/k$ with $1/(k+1)$ in the right summation of Eq.~\eqref{eq:error_lambda}.
 
Then we consider the term with non-zero eigenvalues less than $\Delta$.
\begin{align}
    \sum_{j:~0<\lambda_{j}<\Delta}\lambda_{j}\sum_{k=K+1}^{\infty}\frac{1}{k}(1-\lambda_{j})^{k}&\leq\sum_{j:~0<\lambda_{j}<\Delta}-\lambda_{j}\ln(\lambda_{j}) \label{eq:error:entropy}\\
    &\leq \sum_{j:~0<\lambda_{j}<\Delta}-\Delta\ln(\Delta),\label{eq:error:delta}
\end{align}
where the inequality in Eq.~\eqref{eq:error:entropy} follows from replacing the series with $-\ln(\lambda_{j})$, since function $S(x)=-x\ln(x)=\sum_{l=1}^{\infty}\frac{1}{l}x(1-x)^{l}$, and the second inequality is due to the fact that $S(x)$ increases as $x$ increases in the interval $(0,e^{-1})$.

In all, an upper bound on $\delta_{0}$ can be given as
\begin{align}
    \delta_{0}&\leq \frac{1}{K+1}\sum_{j:~\lambda_{j}\geq\Delta}(1-\lambda_{j})^{K+1}+\sum_{j:~0<\lambda_{j}<\Delta}-\Delta\ln(\Delta)\\
    &\leq r\cdot\left(\frac{r_{0}}{r}\frac{(1-\Delta)^{K+1}}{K+1}+\frac{r_{1}}{r}(-\Delta\ln(\Delta))\right)\label{eq:error:combination}\\
    &\leq r\cdot\max\{\frac{(1-\Delta)^{K+1}}{K+1},-\Delta\ln(\Delta)\}.\label{eq:error:max}
\end{align}
where $r_{0}$ ($r_{1}$) denotes the number of non-zero eigenvalues larger (less) than $\Delta$. As $-\Delta\ln(\Delta)<\frac{1}{K+1}(1-\Delta)^{K+1}$, the claim is proved.
\end{proof}

\subsection{Estimation of the higher-order gradients}
\label{sec:Supplemental_higher_order_gradient}
\renewcommand{\thelemma}{S\arabic{lemma}}
\setcounter{lemma}{0}
\begin{lemma}\label{lemma:optimization:1}
Given a parameterized density operator $\rho(\bm\theta)$, we have the following equality,
\begin{align}
    \partial_{\theta_{m}}\tr(\rho(\bm\theta)^{3})=3\partial_{\theta_{m},1}\tr(\rho_{1}(\bm\theta)\otimes\rho_{2}(\bm\theta)\otimes\rho_{3}(\bm\theta)\cdot S_{1}S_{2}),
\end{align}
where $\partial_{\theta_{m},1}$ means the derivative is computed with respective to $\theta_{m}$ of the state stored in $1$-th register, $\rho_{j}(\bm\theta)$ is the state stored in $j$-th register, and $S_{1}={\rm SWAP}_{12}\otimes I_{3}$ and $S_{2}=I_{1}\otimes{\rm SWAP}_{23}$, and the ${\rm SWAP}_{ij}$ is the operator that swaps the state stored in $i$-th and $j$-th register.
\end{lemma}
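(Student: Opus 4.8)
The plan is to reduce everything to the standard ``swap trick'', which re-expresses a power of a density operator as the trace of a tensor power against a permutation operator, and then to differentiate copy by copy. First I would record the identity
$$\tr(\rho^3) = \tr\bigl((\rho\otimes\rho\otimes\rho)\, S_1 S_2\bigr),$$
where $S_1 S_2$ acts as the cyclic shift $|x\rangle|y\rangle|z\rangle \mapsto |z\rangle|x\rangle|y\rangle$. This is checked directly in a product basis: since $S_1 S_2$ cyclically permutes the three registers, summing the diagonal matrix elements collapses the threefold tensor product into the ordinary product, giving $\tr\bigl((\rho_1\otimes\rho_2\otimes\rho_3)S_1 S_2\bigr)=\tr(\rho_1\rho_2\rho_3)$ for arbitrary operators; setting $\rho_1=\rho_2=\rho_3=\rho$ yields $\tr(\rho^3)$. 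This general form is what I will use below.

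Next I would apply the Leibniz product rule to the tensor triple. Writing $\dot\rho := \partial_{\theta_m}\rho(\bm\theta)$, differentiation of $\rho\otimes\rho\otimes\rho$ produces three terms in which $\dot\rho$ sits in the first, second, and third register respectively, so that
$$\partial_{\theta_m}\tr(\rho^3) = \tr\bigl((\dot\rho\otimes\rho\otimes\rho)S_1S_2\bigr) + \tr\bigl((\rho\otimes\dot\rho\otimes\rho)S_1S_2\bigr) + \tr\bigl((\rho\otimes\rho\otimes\dot\rho)S_1S_2\bigr).$$

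The crux is to show that these three terms are equal. Applying the general swap-trick identity to each reduces them to $\tr(\dot\rho\,\rho^2)$, $\tr(\rho\,\dot\rho\,\rho)$, and $\tr(\rho^2\dot\rho)$ respectively; by the cyclicity of the ordinary trace these are all the same number. Equivalently, one can argue abstractly that the three insertions of $\dot\rho$ are interchanged by conjugating $S_1S_2$ with the cyclic-shift operator, under which $S_1S_2$ is invariant and the trace is unchanged. Hence the sum equals three times its first summand, and recognizing that first summand as $\partial_{\theta_m,1}\tr\bigl((\rho_1\otimes\rho_2\otimes\rho_3)S_1S_2\bigr)$, the derivative acting only on the copy in register $1$, gives exactly the claimed identity.

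The only real subtlety is bookkeeping: fixing the convention for $S_1S_2$ and confirming that it is the cyclic permutation rather than some other one, since an incorrect orientation would still yield a valid swap trick but could obscure the equality of the three terms. Once the cyclic structure is pinned down, the equality of the terms follows immediately from trace cyclicity, and no estimates or further machinery are required.
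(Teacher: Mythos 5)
Your proof is correct, and its skeleton --- the swap-trick representation of $\tr(\rho^3)$, the Leibniz rule producing three terms, and the observation that the three terms coincide --- matches the paper's. Where you genuinely differ is in how the equality of the three Leibniz terms is established. The paper stays at the level of the tensor-product expression and proves a register-exchange identity, $\tr(\rho_{1}\otimes\rho_{2}\otimes\rho_{3}\cdot S_{1}S_{2})=\overline{\tr(\rho_{2}\otimes\rho_{1}\otimes\rho_{3}\cdot S_{1}S_{2})}=\overline{\tr(\rho_{3}\otimes\rho_{2}\otimes\rho_{1}\cdot S_{1}S_{2})}$, by writing out spectral decompositions of all three states and comparing the resulting sums of products of inner products; the complex conjugations there are forced because transposing two registers reverses the cyclic orientation of $S_{1}S_{2}$. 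You instead collapse each term to an ordinary matrix-product trace and invoke cyclicity of the trace (or, in your alternative phrasing, conjugate by the cyclic shift, under which $S_{1}S_{2}$ is invariant). This is more elementary: it needs no spectral decomposition and no conjugation bookkeeping, precisely because the cyclic permutations you use, unlike the paper's transpositions, commute with $S_{1}S_{2}$. One slip worth flagging: with your own convention $S_{1}S_{2}\colon |x\rangle|y\rangle|z\rangle\mapsto|z\rangle|x\rangle|y\rangle$, the general identity reads $\tr\bigl((A\otimes B\otimes C)\,S_{1}S_{2}\bigr)=\tr(ACB)$, not $\tr(ABC)$; for Hermitian factors these differ by complex conjugation, which is exactly the origin of the paper's conjugation bars. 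The slip is harmless for your argument, since every instance you actually use has at least two of the three factors equal, where both orientations agree --- a point you anticipated yourself in your closing remark about fixing the orientation.
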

\begin{proof}
To prove the claim, we need the following result, which we give the proof later. 
\begin{align}
\tr(\rho_{1}\otimes\rho_{2}\otimes\rho_{3}\cdot S_{1}S_{2})=\overline{\tr(\rho_{2}\otimes\rho_{1}\otimes\rho_{3}\cdot S_{1}S_{2})}=\overline{\tr(\rho_{3}\otimes\rho_{2}\otimes\rho_{1}\cdot S_{1}S_{2})}.
\label{eq:optimization:appendix:1}
\end{align}
Let $\rho_{1}(\bm\theta)=\rho_{2}(\bm\theta)=\rho_{3}(\bm\theta)=\rho(\bm\theta)$, then the claimed is proved in the following,
\begin{align}
\frac{\partial}{\partial \theta_{m}}\tr(\rho(\bm\theta)^{3})&=\frac{\partial}{\partial \theta_{m}}\tr(\rho_{1}(\bm\theta)\otimes\rho_{2}(\bm\theta)\otimes\rho_{3}(\bm\theta)\cdot S_{1}S_{2})\\
&=\frac{\partial}{\partial \theta_{m,1}}\tr(\rho_{1}(\bm\theta)\otimes\rho_{2}(\bm\theta)\otimes\rho_{3}(\bm\theta)\cdot S_{1}S_{2}) \nonumber\\
&+\frac{\partial}{\partial \theta_{m,2}}\tr(\rho_{1}(\bm\theta)\otimes\rho_{2}(\bm\theta)\otimes\rho_{3}(\bm\theta)\cdot S_{1}S_{2}) \nonumber\\
&+\frac{\partial}{\partial \theta_{m,3}}\tr(\rho_{1}(\bm\theta)\otimes\rho_{2}(\bm\theta)\otimes\rho_{3}(\bm\theta)\cdot S_{1}S_{2})\label{eq:optimization:appendix:2}\\
&=\frac{\partial}{\partial \theta_{m,1}}\tr(\rho_{1}(\bm\theta)\otimes\rho_{2}(\bm\theta)\otimes\rho_{3}(\bm\theta)\cdot S_{1}S_{2})\nonumber\\
&+\frac{\partial}{\partial \theta_{m,2}}\tr(\rho_{2}(\bm\theta)\otimes\rho_{1}(\bm\theta)\otimes\rho_{3}(\bm\theta)\cdot S_{1}S_{2}) \nonumber\\
&+\frac{\partial}{\partial \theta_{m,3}}\tr(\rho_{3}(\bm\theta)\otimes\rho_{2}(\bm\theta)\otimes\rho_{1}(\bm\theta)\cdot S_{1}S_{2})\label{eq:optimization:appendix:3}\\
&=3\frac{\partial}{\partial \theta_{m,1}}\tr(\rho_{1}(\bm\theta)\otimes\rho_{2}(\bm\theta)\otimes\rho_{3}(\bm\theta)\cdot S_{1}S_{2}),\label{eq:optimization:appendix:4}
\end{align}
where the equality is Eq.~\eqref{eq:optimization:appendix:2} is the result of chain rule, and we use the relation in Eq.~\eqref{eq:optimization:appendix:1} to derive the equality in Eq.~\eqref{eq:optimization:appendix:3}.

Now we prove the equalities in Eq.~\eqref{eq:optimization:appendix:1}. 

Let $\rho_{1}=\sum_{j}p_{j}\op{\phi_{j}}{\phi_{j}}$, $\rho_{2}=\sum_{l}q_{l}\op{\psi_{l}}{\psi_{l}}$, and $\rho_{3}=\sum_{k}r_{k}\op{\xi_{k}}{\xi_{k}}$. We have the following equalities. 
\begin{align}
    \tr(\rho_{1}\otimes\rho_{2}\otimes\rho_{3}\cdot S_{1}S_{2})&=\sum_{jlk}p_{j}q_{l}r_{k}\ip{\psi_{l}}{\phi_{j}}\ip{\xi_{k}}{\psi_{l}}\ip{\phi_{j}}{\xi_{k}},\label{eq:optimization:5}\\
    \tr(\rho_{2}\otimes\rho_{1}\otimes\rho_{3}\cdot S_{1}S_{2})&=\sum_{jlk}p_{j}q_{l}r_{k}\ip{\phi_{j}}{\psi_{l}}\ip{\xi_{k}}{\phi_{j}}\ip{\psi_{l}}{\xi_{k}},\label{eq:optimization:6}\\
    \tr(\rho_{3}\otimes\rho_{2}\otimes\rho_{1}\cdot S_{1}S_{2})&=\sum_{jlk}p_{j}q_{l}r_{k}\ip{\psi_{l}}{\xi_{k}}\ip{\phi_{j}}{\psi_{l}}\ip{\xi_{k}}{\phi_{j}}.\label{eq:optimization:7}
\end{align}
Comparing Eqs.~\eqref{eq:optimization:5}-\eqref{eq:optimization:7}, the equality in Eq.~\eqref{eq:optimization:appendix:1} is proved.
\end{proof}

\subsection{Sumplementary discussion for optimization}\label{app:gradient}
To simplify the notations, let $L_{1}$ denote $\tr(H\rho(\bm\theta))$, $L_{2}$ denote $2\beta^{-1}\tr(\rho(\bm\theta)^{2})$, and $L_{3}$ denote $-\frac{\beta^{-1}}{2}\tr(\rho(\bm\theta)^{3})$. Using these notations, our loss function can be rewritten as $\mathcal{F}_{2}(\bm\theta)=L_{1}+L_{2}+L_{3}-\frac{3\beta^{-1}}{2}$, and the gradient of $\mathcal{F}_{2}(\bm\theta)$ can be rewritten as follow:
\begin{align}
\nabla_{\bm\theta}\mathcal{F}_{2}(\bm\theta)=\nabla_{\bm\theta}L_{1}+\nabla_{\bm\theta}L_{2}+\nabla_{\bm\theta}L_{3}.
\end{align} 
Therefore, the gradient of $\mathcal{F}_{K}(\bm\theta)$ can be estimated via computing the gradients of $L_{j}$, $j=1,2,3$. Specifically, $\nabla_{\bm\theta}L_{j}$, $j=2,3$, can be computed using the destructive SWAP test and higher order state overlap estimation, respectively. As for $\nabla_{\bm\theta} L_{1}$, it can be estimated by measurement directly.

Next, we show that the gradients of $L_{j}$'s can be computed by shifting the parameters $\bm\theta$ of the circuit. 
The partial derivatives of each $L_{j}$ have the following forms, {
\begin{align}
\frac{\partial L_{1}}{\partial \theta_{m}}&=
\frac{\partial}{\partial\theta_{m}}\tr(U_{N}...U_{1}\op{0}{0}U_{1}^{\dagger}...U_{N}^{\dagger}\cdot(I\otimes H)),\label{EQ:42}\\%
\frac{\partial L_{2}}{\partial\theta_{m}}&=
2\beta^{-1}\frac{\partial}{\partial\theta_{m}}\tr((U_{N}...U_{1}\op{0}{0}U_{1}^{\dagger}...U_{N}^{\dagger})^{\otimes 2}\cdot W_{1}),\label{EQ:44} \\%
\frac{\partial L_{3}}{\partial\theta_{m}}&=-\frac{\beta^{-1}}{2}
\frac{\partial}{\partial\theta_{m}}\tr((U_{N}...U_{1}\op{0}{0}U_{1}^{\dagger}...U_{N}^{\dagger})^{\otimes3}\cdot W_{2}),\label{EQ:46}
\end{align}}
where $W_{1}$ denotes the operator ${\rm SWAP}_{B_{2}B_{3}}\otimes I_{A_{2}A_{3}}$, $W_{2}$ denotes the operator $({\rm SWAP}_{B_{4}B_{5}}\otimes I_{A_{4}A_{5}A_{6}B_{6}})\cdot({\rm SWAP}_{B_{5}B_{6}}\otimes I_{A_{4}B_{4}A_{5}A_{6}})$, and the operator ${\rm SWAP}_{B_{j}B_{l}}$ is a swap operator acting on registers $B_{j}$ and $B_{l}$.

To further simplify notations, we absorb all gates before and after $U_{m}$ into the density operator and measurement operator, respectively. To be more specific, let $\psi_{A_{l}B_{l}}$ denote the density operator $U_{m-1}...U_{1}\op{00}{00}_{A_{l}B_{l}}U_{1}^{\dagger}...U_{m-1}^{\dagger}$ in register $A_{l}B_{l}$, for $l=1,...,6$. And we define observable operators $K$, $O$, $G$ as follows
\begin{align}
&K=U_{m+1}^{\dagger}...U_{N}^{\dagger}(I_{A_{1}}\otimes H_{B_{1}})U_{N}...U_{m+1}.\\
&O=(U_{m+1}^{\dagger}...U_{N}^{\dagger})^{\otimes2}W_{1}(U_{N}...U_{m+1})^{\otimes 2},\\
&G=(U_{m+1}...U_{N}^{\dagger})^{\otimes 3}W_{2}(U_{N}...U_{m+1})^{\otimes 3}.
\end{align}
Then partial derivatives in Eqs.~\eqref{EQ:42}-\eqref{EQ:46} can be rewritten as {
\begin{align}
    \frac{\partial L_{1}}{\partial\theta_{m}}=&\frac{\partial}{\partial\theta_{m}}\tr(U_{m}(\theta_{m})\psi_{A_{1}B_{1}} U_{m}^{\dagger}(\theta_{m})\cdot K),\\
    \frac{\partial L_{2}}{\partial \theta_{m}}=&2\beta^{-1}\frac{\partial}{\partial\theta_{m}}\tr(U_{m}\psi_{A_{2}B_{2}}U_{m}^{\dagger}(\theta_{m})\otimes U_{m}\psi_{A_{3}B_{3}}U_{m}^{\dagger}(\theta_{m})\cdot O),\\
    \frac{\partial L_{3}}{\partial\theta_{m}}=&-\frac{\beta^{-1}}{2}\frac{\partial}{\partial\theta_{m}}\tr(U_{m}\psi_{A_{4}B_{4}}U_{m}^{\dagger}(\theta_{m})\otimes U_{m}\psi_{A_{5}B_{5}}U_{m}^{\dagger}(\theta_{m})\nonumber\\ &\otimes U_{m}\psi_{A_{6}B_{6}}U_{m}^{\dagger}(\theta_{m})\cdot G).
\end{align}}

Now, we derive the analytical forms of the derivatives of each $L_{j}$, $j=1,2,3$. Notice that the trainable unitary $U(\bm\theta)$ is a sequence of unitaries $U_{m}(\theta_{m})$ and each unitary $U_{m}(\theta_{m})=e^{-i\theta_{m}H_{m}/2}$. The partial derivative of $U(\bm\theta)$ can be explicitly given as follows,
\begin{align}
\frac{\partial U(\bm\theta)}{\partial\theta_{m}}&=U_{N}(\theta_{N})...\frac{\partial U_{m}(\theta_{m})}{\partial\theta_{m}}...U_{1}(\theta_{1}),\\
&=-\frac{i}{2}U_{N}(\theta_{N})...H_{m}U_{m}...U_{1}(\theta_{1}).\label{eq:gradient:u}
\end{align}
Using the expression of $\partial_{\theta_{m}}U(\bm\theta)$ in Eq.~\eqref{eq:gradient:u}, and some facts, including an identity $i[H_{m}, M]=U_{m}(-\pi/2)MU_{m}^{\dagger}(-\pi/2)-U_{m}(\pi/2)MU_{m}^{\dagger}(\pi/2)$, which holds true for arbitrary matrix $M$, the symmetry of the operator $O$, and an equality $\partial_{\theta_{m}}\tr(\rho(\bm\theta)^{3})=3\partial_{\theta_{m1}}\tr(\rho_{1}(\bm\theta)\otimes\rho_{2}(\bm\theta)\otimes\rho_{3}(\bm\theta)\cdot S_{1}S_{2})$ (c.f. Lemma~\ref{lemma:optimization:1} in Appendix \ref{sec:Supplemental_higher_order_gradient}), where $S_{1}={\rm SWAP}_{12}\otimes I_{3}$ and $S_{2}=I_{1}\otimes{\rm SWAP}_{23}$, the gradients of each $L_{j}$ can be estimated using the following formulas,
{
\begin{align}
\frac{\partial L_{1}}{\partial\theta_{m}}&=\frac{1}{2}(\langle K\rangle_{\theta_{m}+\frac{\pi}{2}}-\langle K\rangle_{\theta_{m}-\frac{\pi}{2}}),\\
\frac{\partial L_{2}}{\partial\theta_{m}}&=2\beta^{-1}\left(\langle O\rangle_{\theta_{m}+\frac{\pi}{2},\theta_{m}}-\langle O\rangle_{\theta_{m}-\frac{\pi}{2},\theta_{m}}\right),\\
\frac{\partial L_{3}}{\partial\theta_{m}}&=-\frac{3\beta^{-1}}{4}\left(\langle G\rangle_{\theta_{m}+\frac{\pi}{2},\theta_{m},\theta_{m}}-\langle G\rangle_{\theta_{m}-\frac{\pi}{2},\theta_{m},\theta_{m}}\right),
\end{align}}
where the notation $\langle X\rangle_{a,b}$ is defined below,
\begin{align}
\langle K\rangle_{\theta_{\alpha}}&=\tr\left(U_{\alpha}\psi_{A_{1}B_{1}} U_{\alpha}^{\dagger}\cdot K\right),\label{eq:notation:k}\\
\langle O\rangle_{\theta_{\alpha},\theta_{\beta}}&=\tr\left(U_{\alpha}\psi_{A_{2}B_{2}} U_{\alpha}^{\dagger}\otimes U_{\beta}\psi_{A_{3}B_{3}} U_{\beta}^{\dagger} \cdot O\right),\label{eq:notation:o}\\
\langle G\rangle_{\theta_{\alpha},\theta_{\beta},\theta_{\gamma}}&=\tr\left(U_{\alpha} \psi_{A_{4}B_{4}} U_{\alpha}^{\dagger}\otimes U_{\beta}\psi_{A_{5}B_{5}} U_{\beta}^{\dagger}\otimes U_{\gamma}\psi_{A_{6}B_{6}} U_{\gamma}^{\dagger} \cdot G\right).\label{eq:notation:g}
\end{align}

\subsection{Supplementary proof for Proposition~\ref{prop:circuit_with_1_para}}
\label{sec:prop:circuit_with_1_para}

\begin{proposition}
Given the circuit in Fig.~\ref{fig:Ising_ansatz_1_para} and denote $\rho_B(\theta)$ its output state on system B. For the Ising chain model, if we compute its free energy in Eq.~\eqref{eq:free_energy} and our truncated cost in Eq.~\eqref{EQ:40}, then the optimal $\theta$'s that minimize these two loss functions are both
\begin{align}
    \theta= \pi/2 + k \pi,
\end{align}
where $k\in \mathbb{Z}$. As a result, $\rho_B(\pi/2)$ is the best state, under this circuit, that approaching the Gibbs state in Eq.~\eqref{eq:Ising_Gibbs}, and their fidelity could be larger than 95\% for any $\beta \ge 1.25$.
\end{proposition}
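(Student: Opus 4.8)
The plan is to first determine the explicit output state of the single-parameter circuit, then reduce both optimization problems to maximizing an entropy, and finally bound the resulting fidelity.

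First I would trace the circuit in Fig.~\ref{fig:Ising_ansatz_1_para}: the only rotation $R_y(\theta)$ acts on the ancilla, preparing $\cos(\theta/2)\ket{0}_A + \sin(\theta/2)\ket{1}_A$, and the subsequent CNOT ladder copies the ancilla value along the whole chain, producing the entangled state $\cos(\theta/2)\ket{0}_A\ket{0^L}_B + \sin(\theta/2)\ket{1}_A\ket{1^L}_B$. Tracing out $A$ yields
\begin{align}
\rho_B(\theta) = \cos^2(\theta/2)\,\op{0^L}{0^L} + \sin^2(\theta/2)\,\op{1^L}{1^L},
\end{align}
a classical mixture of the two ferromagnetic ground states supported in the computational basis. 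Writing $p=\cos^2(\theta/2)$, the state is the two-point distribution $\{p,1-p\}$.

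The key observation is that both $\ket{0^L}$ and $\ket{1^L}$ are degenerate ground states of $H_B$ with energy $-L$, so the energy term is constant: $\tr(H_B\rho_B(\theta)) = -L$ for all $\theta$. Consequently both loss functions --- the exact free energy $\mathcal{F}$ in Eq.~\eqref{eq:free_energy} and the truncated cost $\mathcal{F}_2$ in Eq.~\eqref{EQ:40} --- depend on $\theta$ only through the (respectively exact or truncated) entropy, which is a function of $p$ alone. Minimizing each loss is therefore equivalent to maximizing the corresponding entropy. For the exact case $S = -p\ln p - (1-p)\ln(1-p)$ is the binary entropy, maximized at $p=1/2$. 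For the truncated case I would substitute $\tr(\rho_B^2)=p^2+(1-p)^2$ and $\tr(\rho_B^3)=p^3+(1-p)^3$ into $S_2$ and differentiate, obtaining $dS_2/dp = -\tfrac{5}{2}(2p-1)$, whose unique stationary point $p=1/2$ is a maximum. Both optima correspond to $\theta=\pi/2+k\pi$, establishing the first claim.

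Finally, at $\theta=\pi/2$ the output is $\rho_B(\pi/2)=\tfrac{1}{2}(\op{0^L}{0^L}+\op{1^L}{1^L})$, which commutes with $\rho_G$ since both are diagonal in the computational basis. I would then compute the fidelity through $F=\sum_i\sqrt{p_iq_i}$, reducing it to $F^2 = \bigl(1 + \tfrac{1}{2}\sum_{E_i>-L} e^{-\beta(E_i+L)}\bigr)^{-1}$; bounding each excited energy below by the spectral gap $\Delta$ and counting the $N-2$ excited levels recovers exactly the bound of Proposition~\ref{prop:Fid_lower_bound}. The concrete numerical claim then follows by inserting the Ising-chain values $\Delta=4$ (two domain walls, each costing energy $2$, on the periodic ring) and $N=2^{L}$: for $L=5$ this gives $F \geq (1+15\,e^{-4\beta})^{-1/2} \geq 0.95$ once $\beta\geq 1.25$. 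The main obstacle is not any single calculation but recognizing that the energy degeneracy collapses both optimization problems onto the same one-dimensional entropy maximization; verifying that the truncated entropy $S_2$ shares the maximizer $p=1/2$ of the true entropy is the one place where the argument could in principle fail for a badly chosen truncation, and the sign of $dS_2/dp$ is what rules this out.
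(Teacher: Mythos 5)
Your proposal is correct and follows essentially the same route as the paper: derive $\rho_B(\theta)=\cos^2(\theta/2)\op{0^L}{0^L}+\sin^2(\theta/2)\op{1^L}{1^L}$, use the degeneracy $\lambda_0=\lambda_{N-1}=-L$ to make the energy term constant, locate the common optimum by differentiating the exact and truncated entropies (your derivative in $p$ is just the paper's derivative in $\theta$ reparametrized, and both give the stationary point $p=1/2$, i.e.\ $\theta=\pi/2+k\pi$), and then bound the fidelity of the commuting diagonal states exactly as in the paper's Proposition~\ref{prop:Fid_lower_bound} with $N=2^5$, $\Delta=4$, $\beta=1.25$. The only cosmetic difference is that you re-derive the fidelity bound and the spectral gap inline (via the domain-wall counting), whereas the paper delegates these to Proposition~\ref{prop:Fid_lower_bound} and its Lemma on Ising-model facts.
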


\begin{proof}
This claim can be directly derived by computing the global minimum of both loss functions $\mathcal{F}$ and $\mathcal{F}_2$ using the ansatz in Fig.~\ref{fig:Ising_ansatz_1_para}. Assuming $n_A=1$, $n_B=n$ and denoting the output state as $\ket{\psi}_{AB}$, we can easily obtain the state $\rho_B$ as 
\begin{align}
\rho_{B}(\theta)&=\tr_{A}(\op{\psi}{\psi}_{AB})\\
&=\cos^{2}(\theta/2)\op{0^{n}}{0^{n}}_{B}+\sin^{2}(\theta/2)\op{1^{n}}{1^{n}}_{B}.
\end{align}
To compute the derivatives of $\mathcal{F}(\theta)$ and $\mathcal{F}_2(\theta)$, we first present their explicit expressions.
\begin{align}
\mathcal{F}(\theta)&=\tr(H_B\rho_{B}(\theta))-\beta^{-1}S(\rho_{B}(\theta))\nonumber\\
&=\lambda_{0}a+\lambda_{1}b+\beta^{-1}\left[a\ln(a)+b\ln(b)\right],\\
\mathcal{F}_{2}(\theta)&=\tr(H_B\rho_{B}(\theta))+\beta^{-1}\left[2\tr(\rho_{B}(\theta)^{2})-\frac{1}{2}\tr(\rho_{B}(\theta)^{3})-\frac{3}{2}\right]\nonumber\\
&=\lambda_{0}a+\lambda_{1}b+\beta^{-1}\left[2a^{2}+2b^{2}-\frac{a^{3}+b^{3}}{2}-\frac{3}{2}\right],
\end{align}
where we denote $\cos^2(\theta/2)$ by $a$ and $\sin^2(\theta/2)$ by $b$, and $\lambda_0$ and $\lambda_1$ are the eigenvalues of $H$ associated with eigenvectors $\ket{0^n}$ and $\ket{1^n}$, respectively. 

Actually, in the Ising chain model, $\lambda_0$ and $\lambda_1$ are equal (c.f. Lemma~\ref{lemma:facts}). Thus, derivatives of $\mathcal{F}(\rho_{B}(\theta))$ and $\mathcal{F}_{2}(\rho_{B}(\theta))$ with respect to $\theta$ have the following form.
\begin{align}
\partial_{\theta}\mathcal{F}(\theta)&=\frac{\beta^{-1}}{2}\sin(\theta)\left(\ln(b)-\ln(a)\right)\label{EQ:48} \\
\partial_{\theta}\mathcal{F}_{2}(\theta)&=\frac{5\beta^{-1}}{2}\sin(\theta)(b-a).\label{EQ:49}
\end{align}
From Eqs.~\eqref{EQ:48}~\eqref{EQ:49}, the global minimums of $\mathcal{F}$ and $\mathcal{F}_2$ are 
\begin{align}
    \theta=\frac{\pi}{2}+k\pi \quad \forall k\in\mathbb{Z}.
\end{align}

The fidelity between $\rho_B(\pi/2)$ and $\rho_G$ could be derived from Proposition \ref{prop:Fid_lower_bound} and Lemma \ref{lemma:facts}, where if we set $N=2^5$, $\Delta =4$ and $\beta=1.25$ and then we get $F(\rho_{B}(\pi/2),\rho_{G})\ge 95.3\%$. Hence, we could achieve a fidelity higher than 95\%, provided that $\beta\ge 1.25$.
\end{proof}

\subsection{Supplementary proof for Proposition~\ref{prop:Fid_lower_bound}}
\label{sec:proof:Fid_lower_bound}

\begin{proposition}
Given the circuit in Fig.~\ref{fig:Ising_ansatz_1_para} and let $\rho_{B}(\theta)$ be its output state on system $B$. Then the fidelity between $\rho_{B}(\pi/2)$ and the Gibbs state $\rho_{G}$ is lower bounded. To be more specific,
\begin{align}
F(\rho_{B}(\pi/2),\rho_{G})\geq\frac{1}{\sqrt{1+(N/2-1)e^{-\beta\Delta}}},
\label{EQ:fidelity:beta_appendix}
\end{align}
where $N$ is the dimension of system $B$, i.e., $N=2^{n_B}$ and $\Delta$ is the spectral gap of the Hamiltonian $H_B$ on the system $B$, i.e., the discrepancy between the minimum and the second minimum eigenvalues. 
\end{proposition}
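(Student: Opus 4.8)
The plan is to exploit the fact that both $\rho_B(\pi/2)$ and the Gibbs state $\rho_G$ are diagonal in the computational basis, which collapses the Uhlmann fidelity to an elementary spectral sum and lets the spectral gap $\Delta$ control everything. First I would recall, from the computation already carried out in the proof of Proposition~\ref{prop:circuit_with_1_para}, that the output of the one-parameter circuit at $\theta=\pi/2$ is the equal mixture
\begin{align}
\rho_B(\pi/2)=\tfrac{1}{2}\op{0^{n}}{0^{n}}_B+\tfrac{1}{2}\op{1^{n}}{1^{n}}_B .
\end{align}
Because $H_B=-\sum_i Z_{B,i}Z_{B,i+1}$ is diagonal in the computational basis, so is $\rho_G=e^{-\beta H_B}/\tr(e^{-\beta H_B})$; hence $\rho_B(\pi/2)$ and $\rho_G$ commute and share the computational eigenbasis. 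The crucial structural input is Lemma~\ref{lemma:facts}: for the ferromagnetic Ising chain the states $\ket{0^n}$ and $\ket{1^n}$ are exactly the two degenerate ground states, with common minimal eigenvalue $E_{\min}$, and every other eigenstate lies at least $\Delta$ above them.

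Next, for two states that are simultaneously diagonal with populations $p_i$ and $q_i$, the fidelity reduces to $F(\rho_B(\pi/2),\rho_G)=\sum_i\sqrt{p_i q_i}$. Since $\rho_B(\pi/2)$ is supported only on $\ket{0^n}$ and $\ket{1^n}$, each carrying weight $1/2$, and since the Gibbs populations of these two ground states are equal, $q_{0^n}=q_{1^n}=e^{-\beta E_{\min}}/Z$ with $Z=\tr(e^{-\beta H_B})$, only two terms survive and I obtain
\begin{align}
F(\rho_B(\pi/2),\rho_G)=2\sqrt{\tfrac{1}{2}\cdot\frac{e^{-\beta E_{\min}}}{Z}}=\sqrt{\frac{2\,e^{-\beta E_{\min}}}{Z}} .
\end{align}

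Finally I would split the partition function as $Z=2e^{-\beta E_{\min}}+\sum_{k:\,E_k>E_{\min}}e^{-\beta E_k}$, which gives
\begin{align}
F(\rho_B(\pi/2),\rho_G)=\frac{1}{\sqrt{1+\tfrac{1}{2}\sum_{k:\,E_k>E_{\min}}e^{-\beta(E_k-E_{\min})}}} ,
\end{align}
and then bound each excited contribution by $e^{-\beta(E_k-E_{\min})}\le e^{-\beta\Delta}$, of which there are $N-2$ (out of $N=2^{n_B}$ eigenstates, two being ground states). This yields $\tfrac{1}{2}\sum_{k}e^{-\beta(E_k-E_{\min})}\le(N/2-1)e^{-\beta\Delta}$, and since enlarging the denominator only decreases $F$, the claimed lower bound follows:
\begin{align}
F(\rho_B(\pi/2),\rho_G)\geq\frac{1}{\sqrt{1+(N/2-1)e^{-\beta\Delta}}} .
\end{align}
The only nonroutine ingredient is the spectral input from Lemma~\ref{lemma:facts} — namely the exact two-fold ground-state degeneracy at $\ket{0^n},\ket{1^n}$ and the identification of $\Delta$ as the gap above them. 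Once that spectral picture is secured, the remainder is a direct evaluation of a commuting-state fidelity together with a one-line upper bound on the partition function.
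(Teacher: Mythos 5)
Your proposal is correct and follows essentially the same route as the paper: both use the exact output state $\rho_B(\pi/2)=\tfrac{1}{2}\op{0^n}{0^n}+\tfrac{1}{2}\op{1^n}{1^n}$ and the degenerate-ground-state structure from Lemma~\ref{lemma:facts}, evaluate the fidelity exactly as $\sqrt{2\,e^{-\beta E_{\min}}/Z}$ (the paper writes this as $\sqrt{2\hat\lambda_0}$ via the Uhlmann formula, you via the commuting-state formula $\sum_i\sqrt{p_iq_i}$, which coincide here), and then bound the $N-2$ excited-state contributions to the partition function by $e^{-\beta\Delta}$ each. The difference is purely presentational, not mathematical.
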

 
\begin{proof}
To prove this result, we assume the eigenvalues, associated with the eigenvectors $\ket{0}$, $\ket{1}$, $\ldots$, $\ket{N-1}$, for the Hamiltonian $H$ are denoted by $\lambda_{0}$, $\lambda_{1}$, $\ldots$, $\lambda_{N-1}$. Specifically, eigenvalues $\lambda_{0}$ and $\lambda_{N-1}$ are associated with eigenvectors $\ket{0^{n}}$ and $\ket{1^{n}}$, respectively. A key feature of the Ising model is that $\lambda_{0}=\lambda_{N-1}$ and they are minimum among all eigenvalues and let $\Delta$ denote the spectral gap of the Hamiltonian $H_{B}$, which implies $\lambda_{j}-\lambda_{0}\geq \Delta$, where $j\neq 0,N-1$.

Let $\hat{\lambda}_{j}$ denote the eigenvalues of the Gibbs state. Then, according to the definition of Gibbs state, $\hat{\lambda}_{j}$ have the following form.
\begin{align}
\hat{\lambda}_{j}=\frac{e^{-\beta\lambda_{j}}}{Z}.
\end{align}
where $Z=\sum_{l=0}^{N-1}e^{-\beta\lambda_{l}}$.

Next, we derive bounds on eigenvalues $\hat{\lambda}_{0}$ and $\hat{\lambda}_{N-1}$. Note that, in the Ising model, eigenvalues $\lambda_{0}$ and $\lambda_{N-1}$ are equal, then the associated eigenvalues $\hat{\lambda}_{0}=\hat{\lambda}_{N-1}$, and they have the explicit forms in the following.
\begin{align}
\hat{\lambda}_{0}&=\hat{\lambda}_{N-1}=\frac{e^{-\beta \lambda_{0}}}{Z}\\
&=\frac{1}{2+\sum_{j\neq 0,N-1}e^{\beta(\lambda_{0}-\lambda_{j})}} \label{eq:experiment:lambda:1}\\
&\geq\frac{1}{2+(N-2)e^{-\beta\Delta}}.\label{eq:experiment:lambda:2}
\end{align}

Recall the output state of our algorithm is $\rho_{B}(\pi/2)$ in Proposition~\ref{prop:circuit_with_1_para}. The inequality in Eq.~\eqref{EQ:fidelity:beta_appendix} is immediately acquired by calculating the fidelity $F(\rho_{B}(\pi/2),\rho_{G})$:
\begin{align}
F(\rho_{B}(\pi/2),\rho_{G})&=\tr\sqrt{\rho_{B}^{1/2}(\pi/2)\rho_{G}\rho_{B}^{1/2}(\pi/2)}\\
&=\tr\sqrt{1/\sqrt{2}\cdot\hat{\lambda}_{0}\cdot1/\sqrt{2}\op{0^{n}}{0^{n}}+1/\sqrt{2}\cdot\hat{\lambda}_{N-1}\cdot1/\sqrt{2}\op{1^{n}}{1^{n}}}\\
&=\sqrt{2\hat{\lambda}_{0}}\\
&\geq \frac{1}{\sqrt{1+(N/2-1)e^{-\beta\Delta}}}.
\end{align}
This completes the proof.
\end{proof}

The following lemma states some facts about the Ising model, which are helpful for the above proofs.
\renewcommand{\theproposition}{S\arabic{proposition}}
\begin{lemma}\label{lemma:facts}
Given an Ising model Hamiltonian in Eq.~\eqref{ising_model}, the eigenvalues $\lambda_{0}$ and $\lambda_{N-1}$, associated with eigenvectors $\ket{0^{n_{B}}}$ and $\ket{1^{n_{B}}}$, are equal, i.e., $\lambda_{0}=\lambda_{N-1}=-L$. Particularly, the spectral gap is exactly $4$ for all $n_{B}\geq 2$.
\end{lemma}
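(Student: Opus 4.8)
The plan is to diagonalize $H_B$ directly in the computational basis, which is immediate because $H_B$ is already diagonal there. First I would note that every basis state $\ket{s_1\cdots s_L}$ with $s_i\in\{0,1\}$ is a simultaneous eigenstate of each $Z_{B,i}$, with $Z_{B,i}\ket{s}=(-1)^{s_i}\ket{s}$, and hence an eigenstate of $H_B$. Since $Z_{B,i}Z_{B,i+1}\ket{s}=(-1)^{s_i+s_{i+1}}\ket{s}$, which equals $+1$ when $s_i=s_{i+1}$ and $-1$ otherwise, I introduce the domain-wall count $d(s)=|\{i:s_i\neq s_{i+1}\}|$ under the periodic convention $s_{L+1}=s_1$. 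The $L$ bonds then split into $L-d$ aligned and $d$ anti-aligned pairs, giving the eigenvalue $E(s)=-(L-d)+d=-L+2d(s)$.

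Second, because $d(s)\geq 0$ with equality exactly when all spins agree, the two ferromagnetic states $\ket{0^{L}}$ and $\ket{1^{L}}$ both attain $d=0$, so that $\lambda_{0}=\lambda_{N-1}=E(0^{L})=E(1^{L})=-L$, and this is the global minimum of the spectrum.

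Third, and this is the crucial point for the gap, I would observe that under periodic boundary conditions $d(s)$ is always even: traversing the ring, the bit value must return to its starting value, so the number of sign changes is even. Consequently the spectrum is contained in $\{-L+4m:m\geq 0\}$, and the second-smallest eigenvalue is $-L+4$, realized by any single-block configuration such as $\ket{0\cdots 0 1}$, which has exactly two domain walls and exists for every $L\geq 2$. Hence the spectral gap is exactly $(-L+4)-(-L)=4$.

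The only genuinely delicate step is the parity argument in the third paragraph: without it one might naively expect a gap of $2$ arising from $d=1$, but a single domain wall is topologically forbidden on a ring, and it is precisely this constraint that forces the gap to be $4$. Everything else reduces to elementary evaluation of a diagonal operator, so I expect no further obstacles.
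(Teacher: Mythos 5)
Your proof is correct, and while it opens the same way as the paper's --- $H_B$ is diagonal in the computational basis, each bond contributes $\pm 1$ according to whether adjacent bits agree, and the two ferromagnetic states $\ket{0^{n_B}}$, $\ket{1^{n_B}}$ attain the minimum $-L$ --- your gap argument takes a genuinely different route. The paper reasons about qubit flips: any eigenvector with energy above $-L$ differs from a ground state in at least one qubit, each flip changes at least two bond terms, and each changed bond (starting from the all-aligned configuration) raises the energy by $2$, so the gap is at least $4$ and is attained. You instead write $E(s)=-L+2d(s)$ with $d(s)$ the domain-wall count and invoke the parity fact that $d(s)$ must be even on a ring, so the entire spectrum sits inside $\{-L+4m:\ m\geq 0\}$; exhibiting $\ket{0^{L-1}1}$ with $d=2$ then pins the gap to exactly $4$. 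Your route is the cleaner of the two: the paper's clause ``if we flip more qubits, then more terms will change'' is not literally true (flipping a contiguous block changes exactly the two boundary bonds, however large the block is), and although the paper's conclusion survives because at least two bonds always change, your parity argument avoids this looseness entirely and yields the stronger structural statement that all levels are spaced in multiples of $4$ above the ground energy. What the paper's flipping picture buys in exchange is physical transparency --- any excitation of the ferromagnet creates at least two defects, each costing $2$ --- which is the same counting as yours, just organized around perturbations of the ground state rather than around a global invariant of the configuration.
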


\begin{proof}
To prove that eigenvalues of $\ket{0^{n_{B}}}$ and $\ket{1^{n_{B}}}$ are equal, we compute their corresponding eigenvalues of each term $Z_{B,i}Z_{B,i+1}$. Notice that, for all $i=1,...,L$,
\begin{align}
Z_{B,i}Z_{B,i+1}\ket{0^{n_{B}}}=\ket{0^{n_{B}}},    \\
Z_{B,i}Z_{B,i+1}\ket{1^{n_{B}}}=\ket{1^{n_{B}}}.
\end{align}
Hence, in the Ising model, the eigenvalues of $\ket{0^{n_{B}}}$ and $\ket{1^{n_{B}}}$ are $-L$.

As for the rest eigenvectors $\ket{j}$, $j\neq0,N-1$, their eigenvalues of $Z_{B,i}Z_{B,i+1}$ are given as follows.
\begin{align}
    Z_{B,i}Z_{B,i+1}\ket{j}=(-1)^{k_{i}+k_{i+1}}\ket{j},
\end{align}
where $k_{i}$ and $k_{i+1}$ are the bits in the $i$-th and $(i+1)$-th position of $\ket{j}$. Particularly, in the Ising model, the eigenvalue of $\ket{j}$ is represented as $-\sum_{i=1}^{L}(-1)^{k_{B,i}+k_{B,i+1}}$. To be specific,
\begin{align}
    H_{B}\ket{j}=-\sum_{i=1}^{L}(-1)^{k_{i}+k_{i+1}}\ket{j}.\label{eq:ising_model:eigenvalues:1}
\end{align}
Overall, the eigenvalues of $\ket{j}$ are larger than $-L$, which implies the eigenvalues of $\ket{0^{n_{B}}}$ and $\ket{1^{n_{B}}}$ are minimum.

{
Now we show that the spectral gap of the Ising model with more than two qubits is 4. The minimum eigenvalue of $H_{B}$ is $-L$ means that $k_{B,i}+k_{B,i+1}=0/2$ for all $i$, and hence the ground states are $\ket{0^{n_{B}}}$ and $\ket{1^{n_{B}}}$. If we flip one qubit of the eigenvector $\ket{j}$, then two terms like $(-1)^{k_{B,i}+k_{B,i+1}}$ of its eigenvalues will change by 2. If we flip more qubits, then more terms will change. Notice that the eigenvectors with an eigenvalue larger than $-L$ will differ from those with minimum eigenvalue at least one qubit, resulting in at least two terms change. Then, the overall difference between the minimum and second minimum eigenvalue is at least 4. Clearly, the difference of $4$ can be obtained, and then the spectral gap is $4$.}
\end{proof}

\end{document}